\titleformat*{\section}{\normalsize\bfseries\sffamily}
\titleformat*{\subsection}{\normalsize\bfseries\sffamily}
\newtheorem{theorem}{Theorem}
\newtheorem{lemma}[theorem]{Lemma}
\newenvironment{proof}{{\bf Proof.}}{$\Box$}
\newcommand{\R}{\mbox{$\mathbb{R}$}}
\renewcommand{\Re}{{{\rm I\!R}}}
\newenvironment{enum}{
\begin{enumerate}
  \setlength{\itemsep}{1pt}
  \setlength{\parskip}{0pt}
  \setlength{\parsep}{0pt}
}{\end{enumerate}}
\let\hat\widehat
\begin{document}

\begin{center}
{\bf\Large FINDING SINGULAR FEATURES}\\
\vspace{.5cm}
{\bf Christopher Genovese, Marco Perone-Pacifico,\\
Isabella Verdinelli and Larry Wasserman}\\
Carnegie Mellon University and University of Rome
\end{center}

\begin{center}
April 22 2016
\end{center}

\begin{quote}
{\em We present a method for finding 
high density, low-dimensional structures in noisy point clouds.
These structures are sets with zero 
Lebesgue measure with respect to the $D$-dimensional
ambient space and belong to a $d<D$ dimensional space.
We call them ``singular features.''
Hunting for singular features corresponds 
to finding unexpected or unknown structures hidden
in point clouds belonging to $\R^D$.
Our method outputs well defined
sets of dimensions $d<D$.
Unlike spectral clustering, the method 
works well in the presence of noise. 
We show how to find singular features by
first finding ridges in the estimated density, followed
by a filtering step based on the
eigenvalues of the Hessian of the density.}

\end{quote}

\noindent{\bf Keywords:} {\em Clustering, Manifolds, Ridges, 
Density Estimation.} 

\section{Introduction}

Let $X_1,\ldots, X_n\in \mathbb{R}^D$
be a sample from a distribution $P$
with density $p$. 
We are interested in the question of finding
salient features hidden in the $\R^D$ space.
We are looking for sets of arbitrary shape, with
dimension $d<D$. The approach we take in this paper is
based on an idea that we call {\em singular feature finding.}
The idea is to find sets with high density but 
zero Lebesgue measure (with respect to $\mathbb{R}^D$), that can be of dimension
$d\in \{0,1,\ldots, D-1\}$.

\bigskip
Figure \ref{fig::single}
shows a simple two dimensional dataset.
The top left plot shows the data.
The top middle plot shows the result of standard
single linkage clustering. Although this reveals 
some structure, the output is quite messy and hard
to interpret. The top right plot shows an upper level set 
of the density estimate, $\{\hat p > t\}$. While 
this reveals the modes and ring, the output 
is two dimensional and the modes and ring are not well 
localized. The bottom row shows our singular feature finding
method. The bottom left shows the singular features of 
dimension $d=0$.
The bottom right 
shows the singular feature of dimension $d=1$.

\newpage

\begin{center}
\vspace{-1cm}
\begin{tabular}{ccc}
\includegraphics[scale=.30]{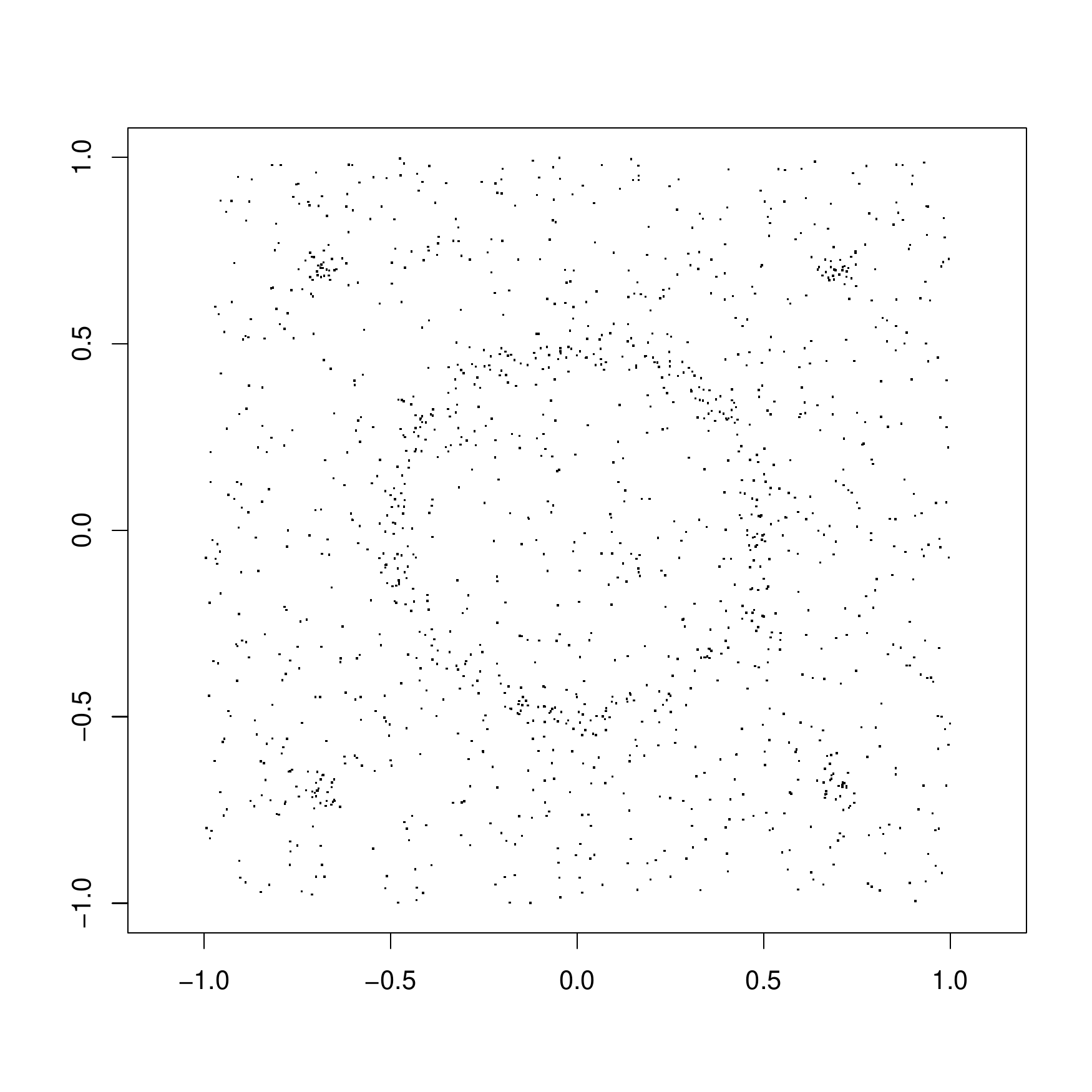} & 
\includegraphics[scale=.30]{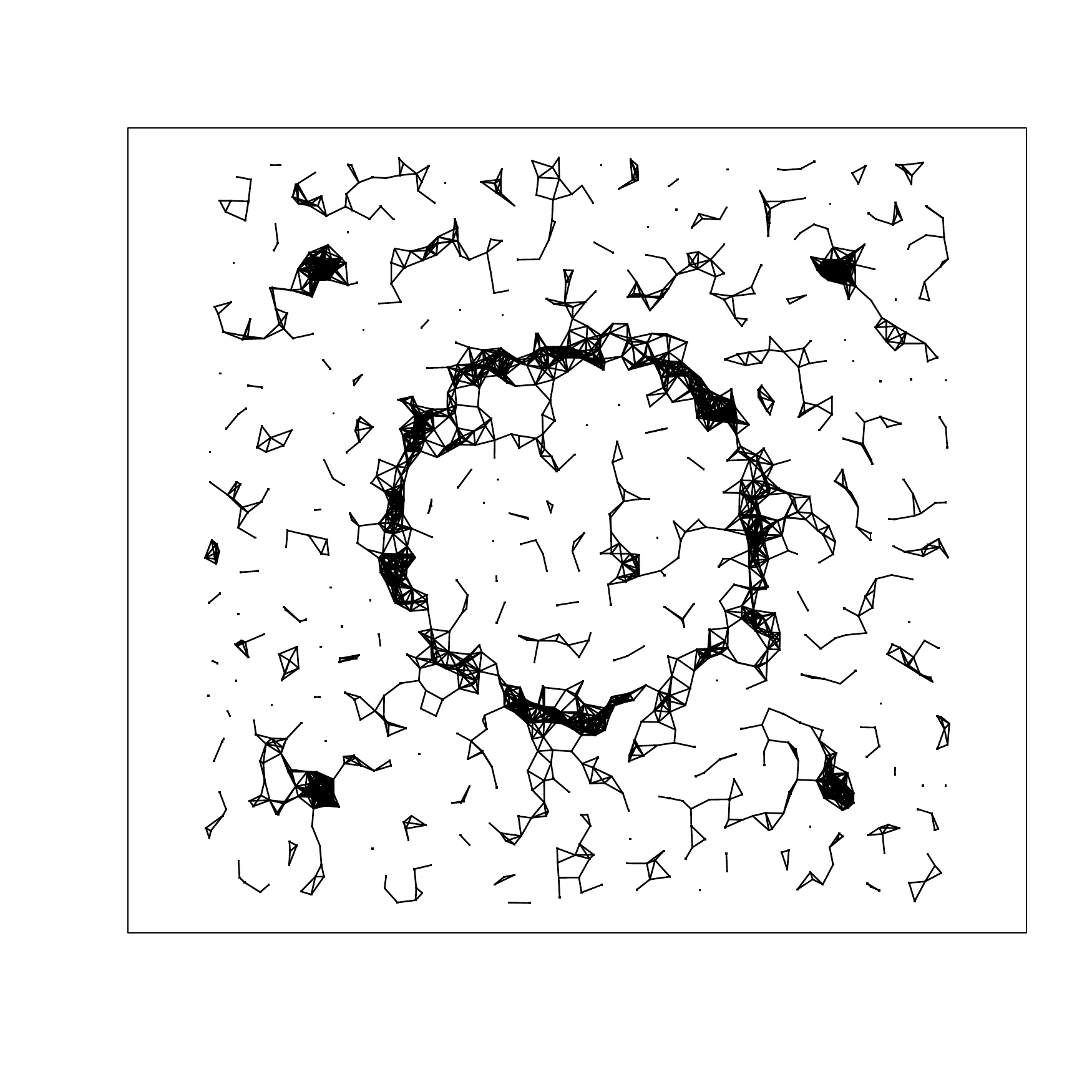} &
\includegraphics[scale=.3]{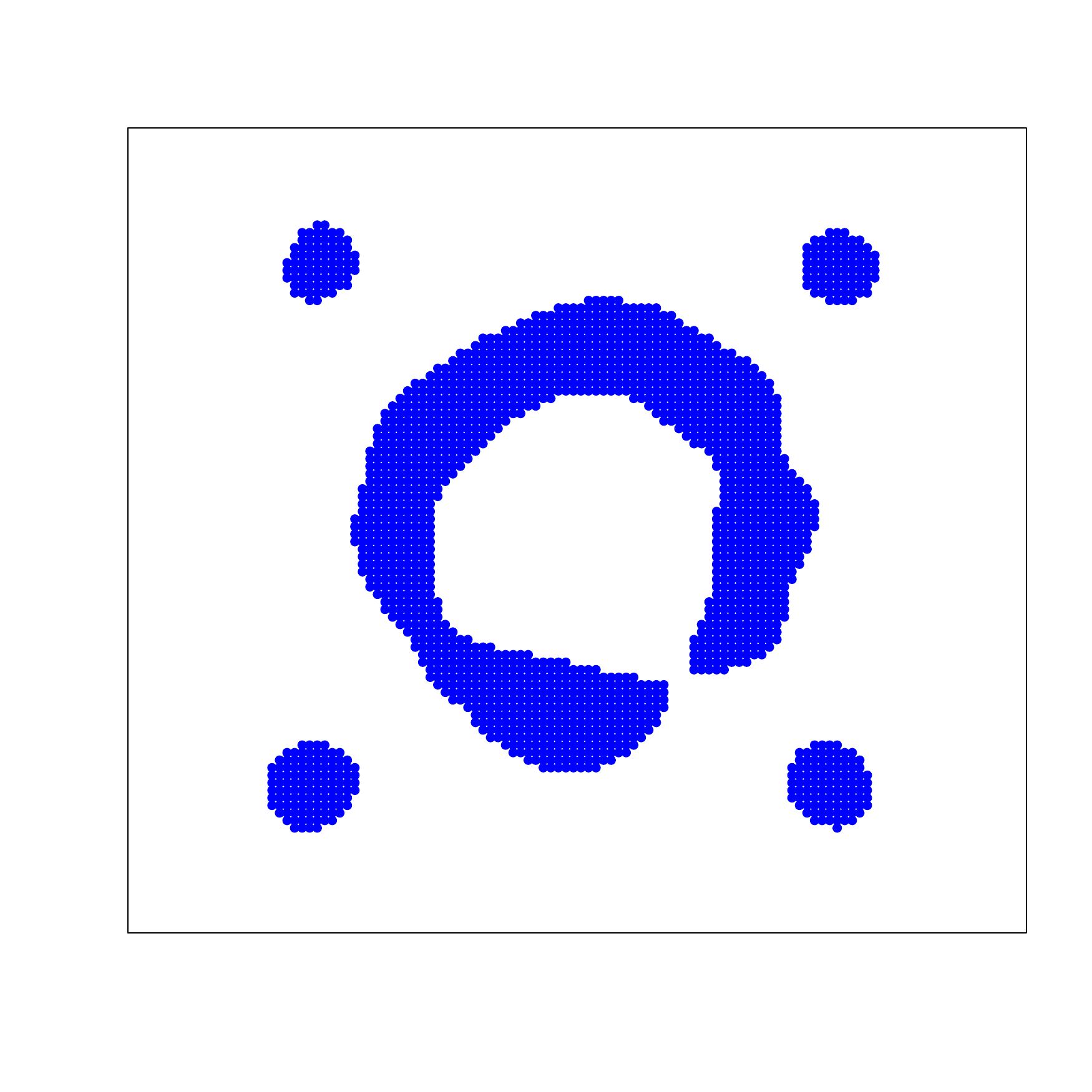}
\end{tabular}
\end{center}

\begin{center}
\vspace{-1.5cm}
\begin{tabular}{cc}
\includegraphics[scale=.30]{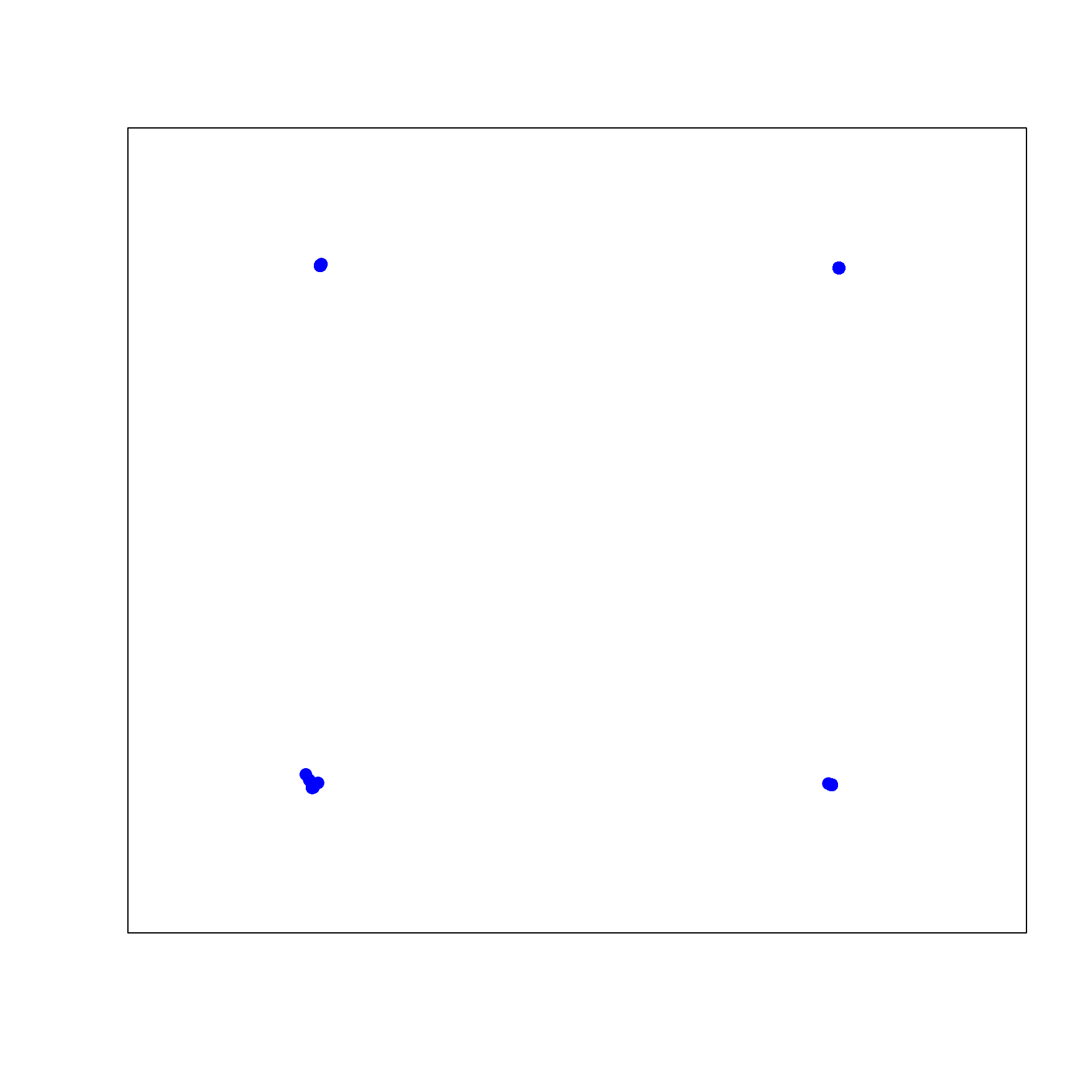} & 
\includegraphics[scale=.30]{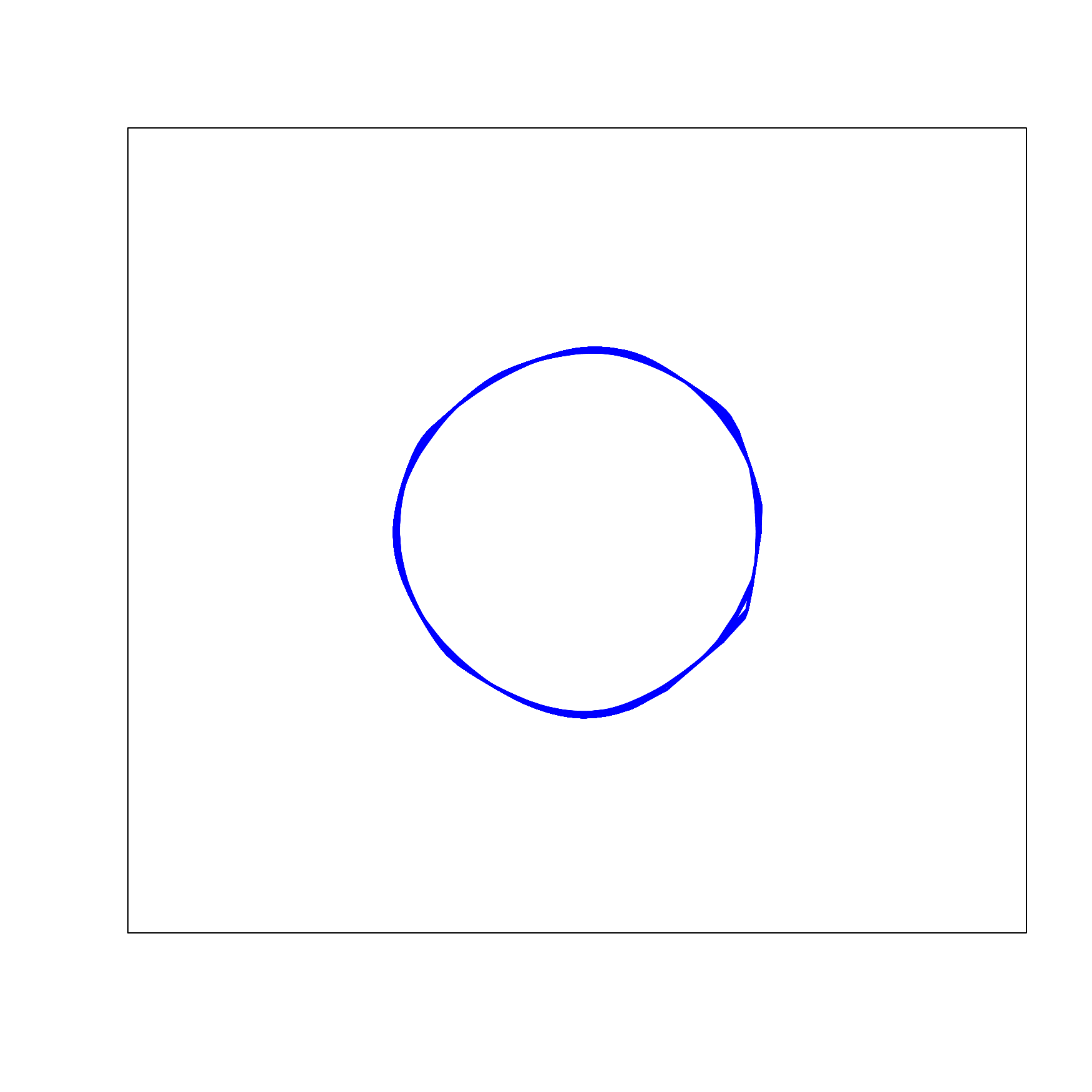}
\end{tabular}
\captionof{figure}{\em Top left: raw data. 
Top middle: single linkage clustering.
Top right: level set of density estimate
Bottom: our method. Left: Dimension $d=0$ features.
Right: Dimension $d=1$ feature.}
\label{fig::single}
\end{center}

The left plot in Figure
\ref{fig::main-example} shows a three-dimensional
dataset.
The right plot shows
high density structures hidden in the three dimensional point cloud.
The structures are zero-dimensional 
(four modes marked  as red triangles),
one-dimensional (the blue ring at the

\medskip

\begin{center}
\begin{tabular}{cc}
\includegraphics[scale=.5]{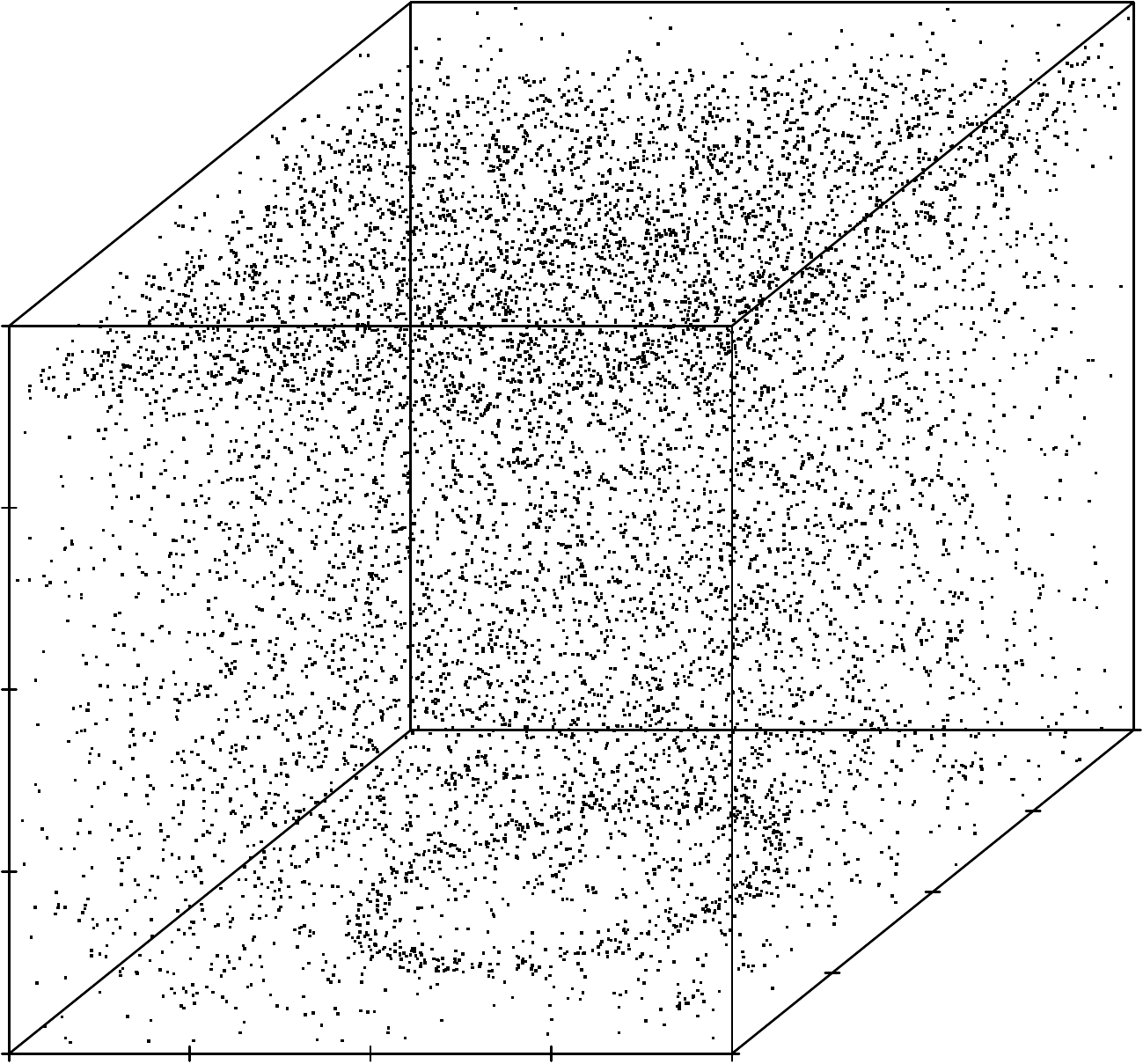} 
\hspace{1em} & 
\includegraphics[scale=.5]{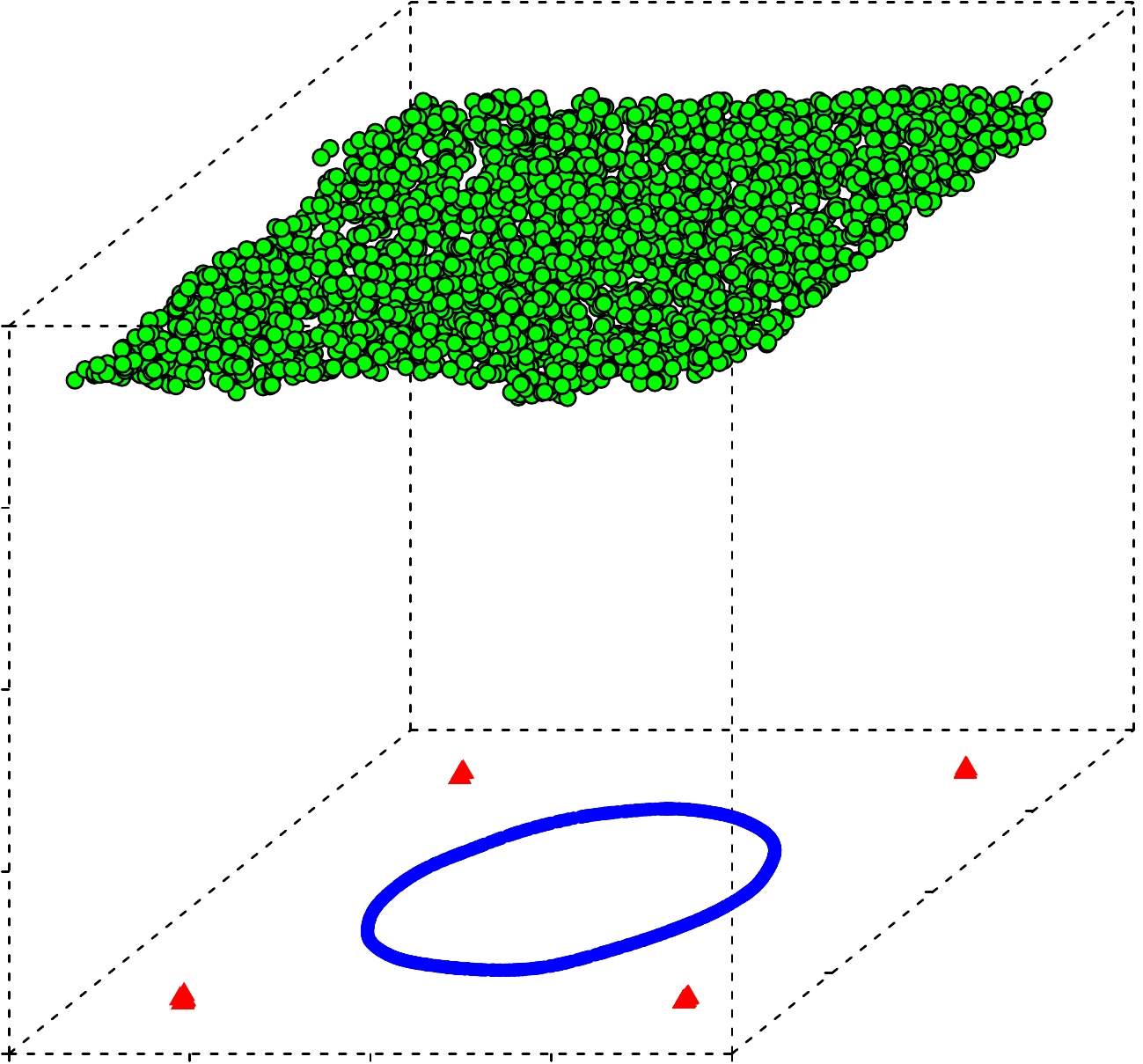} 
\end{tabular}
\captionof{figure}{\em Left: raw data. 
Right: our algorithm extracts four 0-dimensional  
structures (modes), one 1-dimensional structure 
(a ring) and one 2-dimensional structure (plate)
from the point cloud.}
\label{fig::main-example}
\end{center}

bottom) and two-dimensional (the wall at the top).
These structures have high density but,
as subsets of $\mathbb{R}^3$,
they have zero Lebesgue measure.

In this paper we present a method for finding
singular features.
The method has the following steps:
\begin{enum}
\item Estimate the density.
\item Find the estimated {\em ridges} $\hat R_d$ of the density 
(or the log density) of dimension $d=0,1,\ldots, D-1$.
\item Filter out weak ridges: remove points from $x\in \hat R_d$ 
based on the eigenstructure of the Hessian of the density.
The surviving points are denoted by $\hat R_d^\dagger$.
\item Apply single linkage clustering to $\hat R_d^\dagger$
and, optionally, discard small connected components.
\end{enum}

The four steps of our method are summarized in Figure
\ref{fig::flowchart}.
Note that even though $\hat p$ can be complex and unsmooth,
the resulting singular feature is very smooth and simple.

\begin{center}
\vspace{-.9cm}
\begin{tabular}{ccc}
\includegraphics[scale=.3]{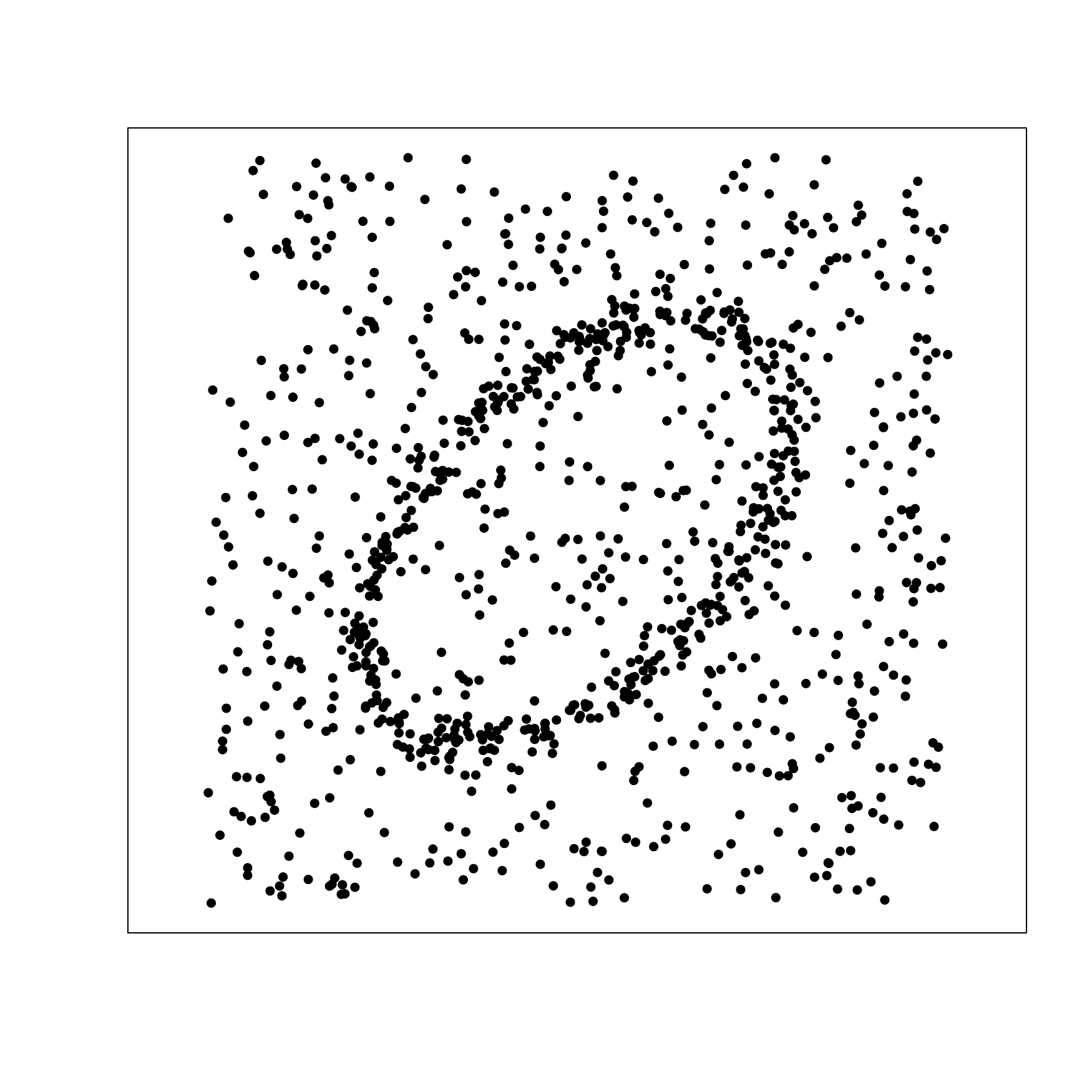} & 
\includegraphics[scale=.3]{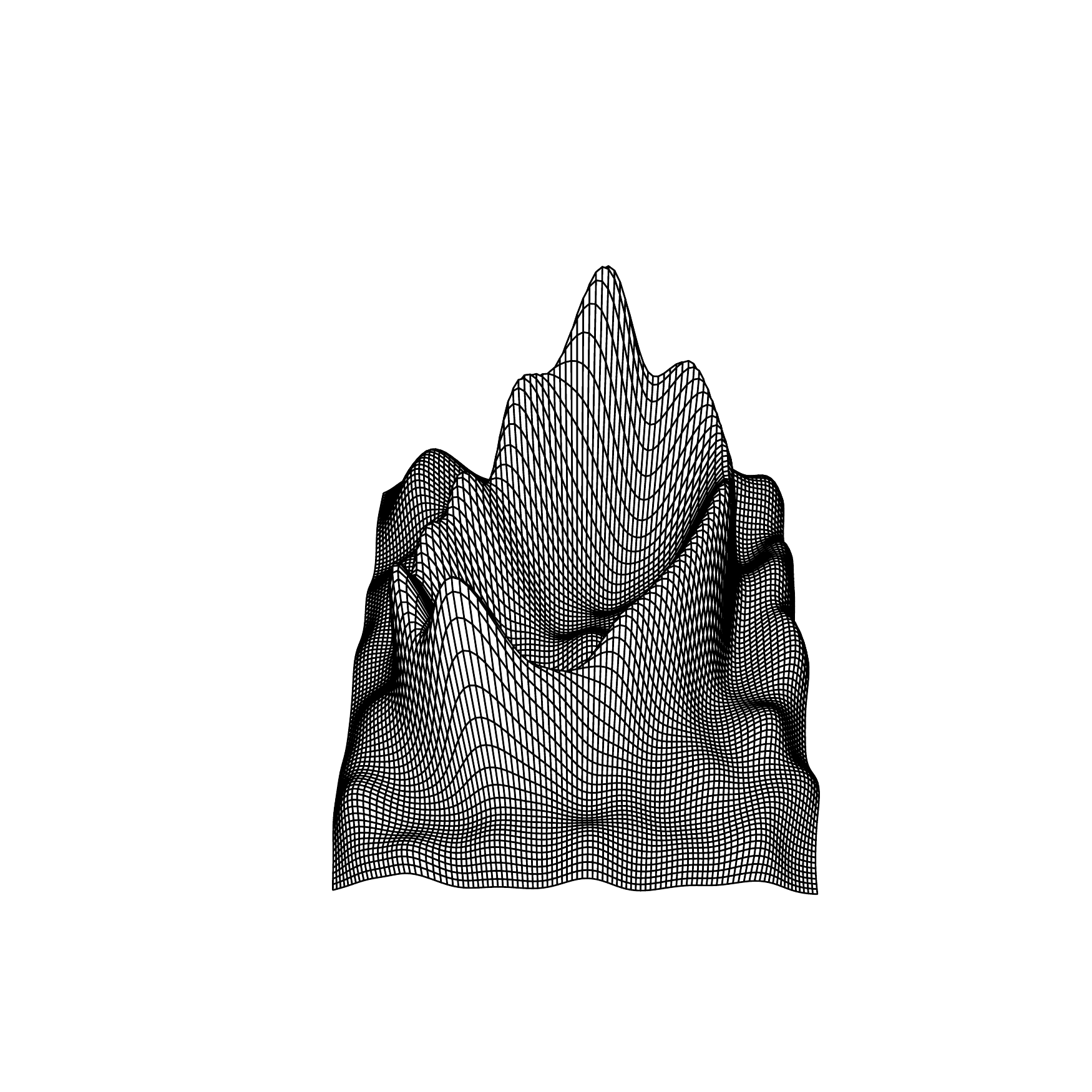} & 
\includegraphics[scale=.3]{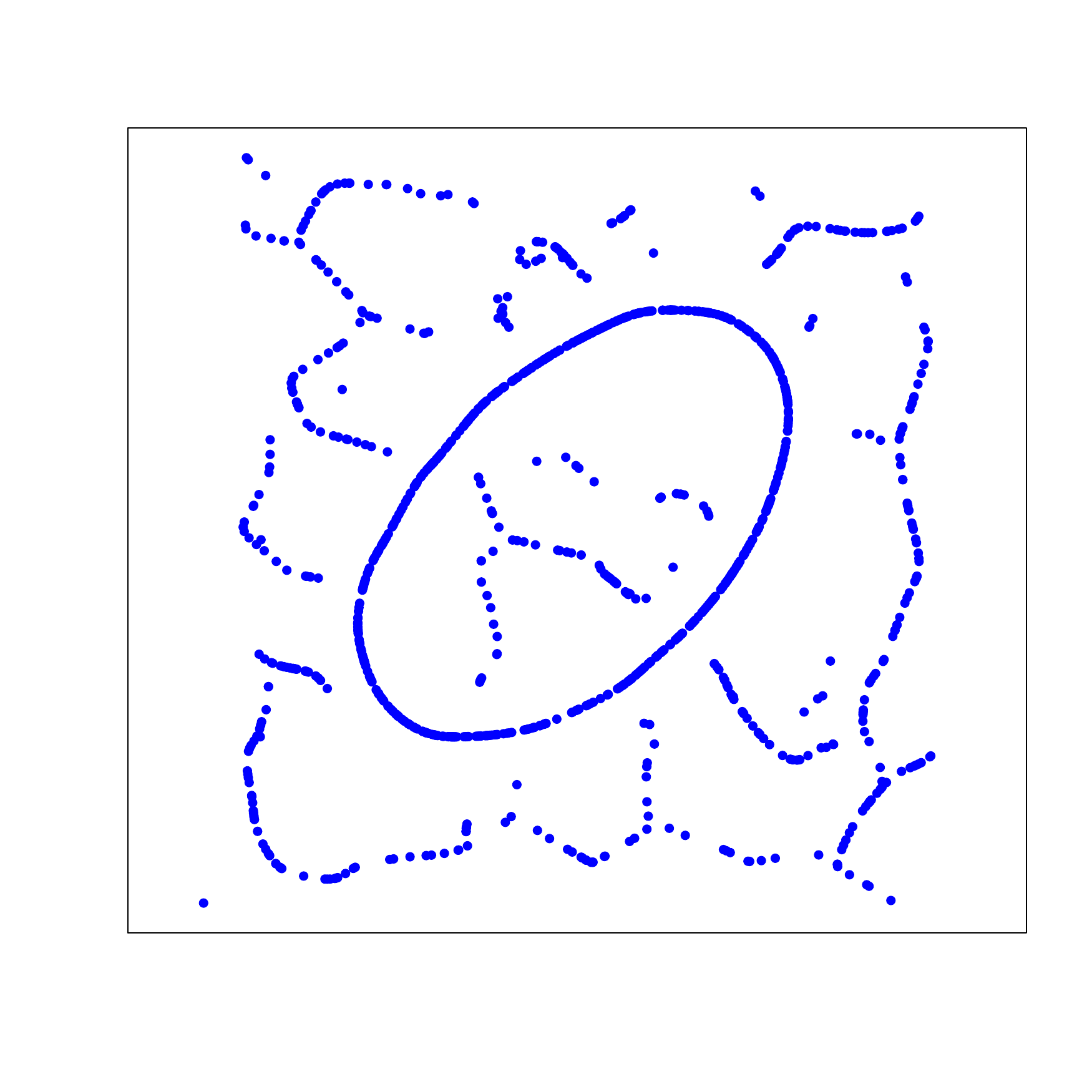}\\
\end{tabular}
\end{center}

\vspace{-.8in}
\begin{center}
$$
\hspace{2em}{\rm Data} \hspace{5em}  \Longrightarrow \hspace{5em} 
              \hat p   \hspace{5em}  \Longrightarrow \hspace{6em}  
              \hat R_d
$$
\end{center}

\vspace{-.5in}
\begin{center}
\begin{tabular}{ccc}
\includegraphics[scale=.3]{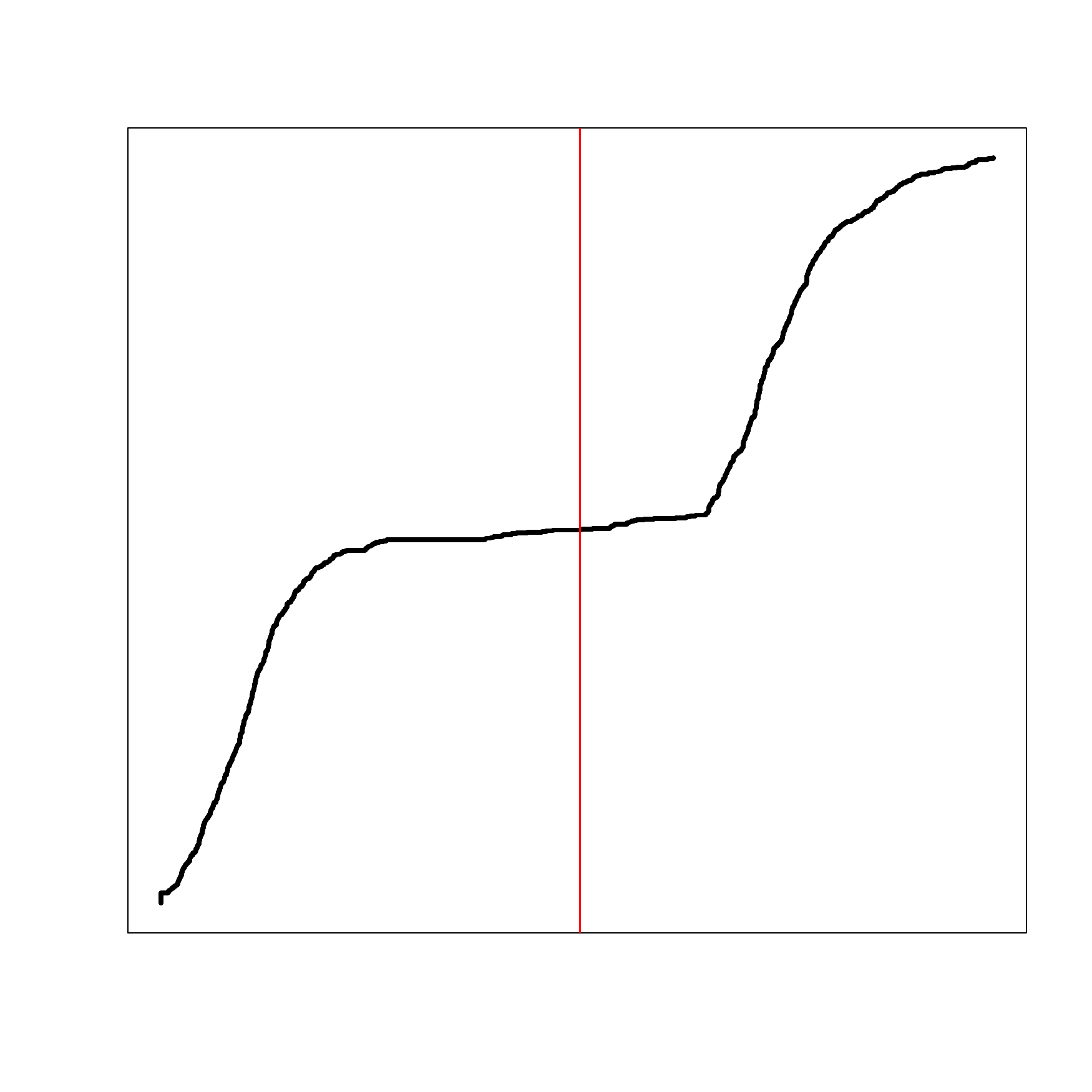} & 
\includegraphics[scale=.3]{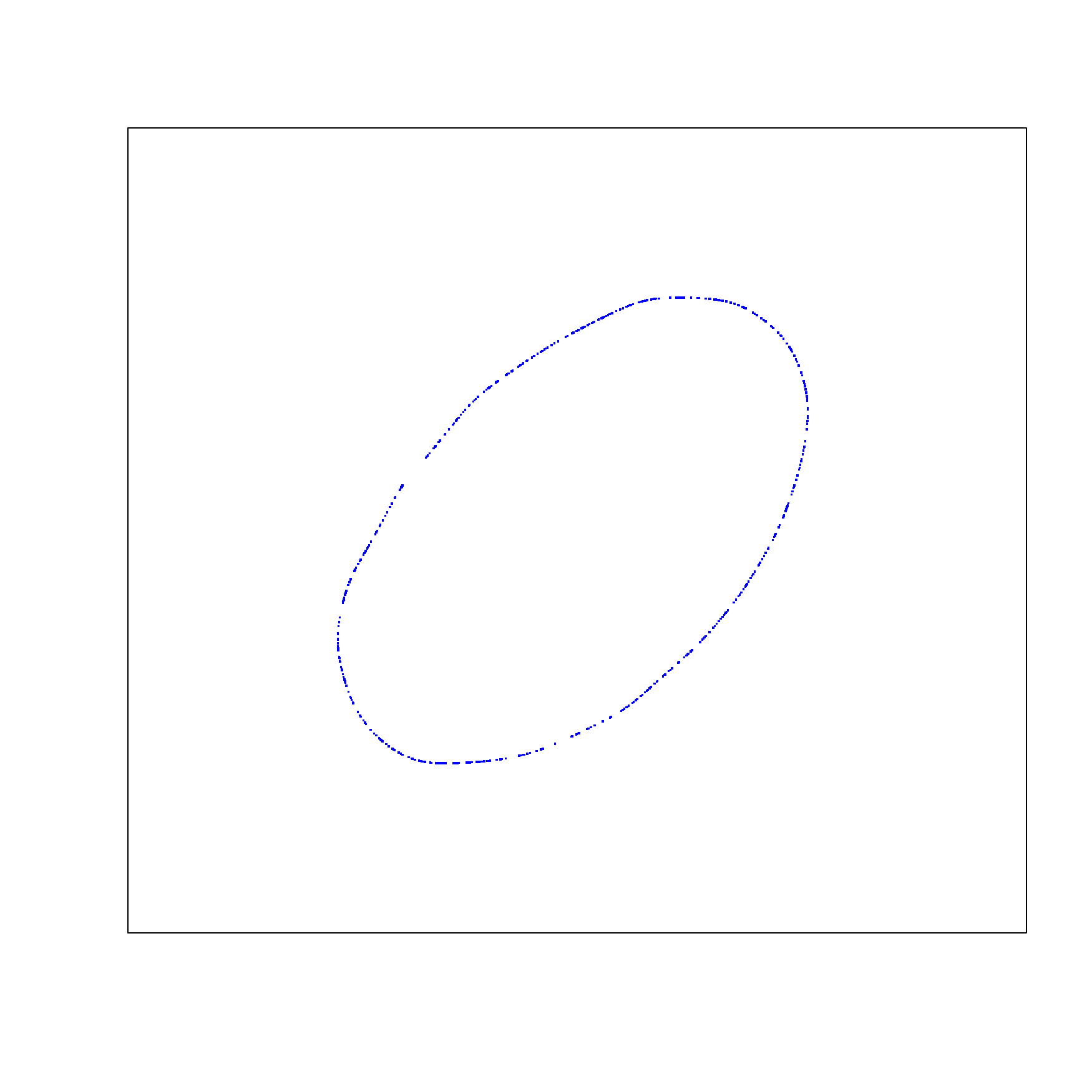} &
\includegraphics[scale=.3]{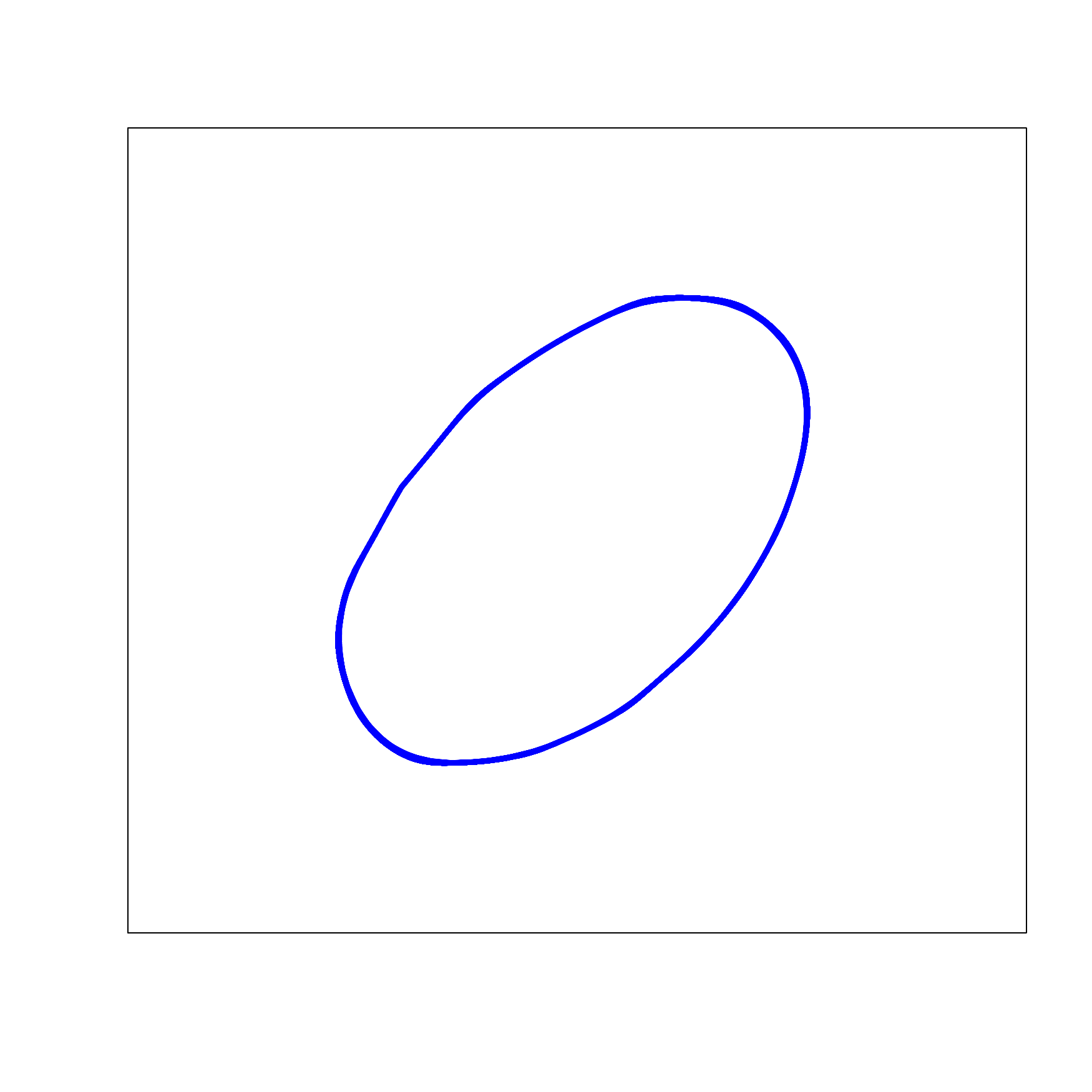}\\
\end{tabular}
\end{center}

\vspace{-.8in}
\begin{center}
$$
{\rm cdf\ of\ Signature}
\hspace{3em} \Longrightarrow \hspace{6em} 
\hat R_d^\dagger \hspace{6em} \Longrightarrow \hspace{4em} {\rm Feature}
$$
\captionof{figure}{\em Steps of the algorithm.
First we estimate the density then extract the ridge.
Next we compute the signatures over the ridge.
Based on the empirical cdf of the signatures 
(see section 4.5)
we choose a threshold.
Finally we apply single linkage clustering to the surviving points.
Even though $\hat p$ can be complex and unsmooth,
the resulting singular feature is very smooth and simple.}
\label{fig::flowchart}
\end{center}

The first step uses kernel density estimation.
The second step
uses the subspace constrained mean shift (SCMS) algorithm due to
\cite{Principal}; see also \cite{chen2015asymptotic} and \cite{us::2014.ANN}.
This algorithm finds density ridges $\hat{R}_d$ of dimension $d$
for $d=0,1,\ldots, D-1$.
The output of this step is a set (or collection of sets)
of dimension $d$.
In the third step,
we use the eigenvalues of
the estimated Hessian of the density to eliminate
weak ridges.
In the last step, 
we find the connected components using single linkage clustering
and we eliminate the small components.
What remains are large, high-density, lower dimensional structures.

\smallskip
{\em Related Work.}
Mode finding is a special case of
singular feature finding: modes of the density are
zero dimensional singular features.
There is a large literature on mode finding.
It is impossible to list all the references here.
Some useful recent references include
\citep{klemela2009smoothing, li2007nonparametric, dumbgen2008multiscale, 
chacon, us2015modes}.
Ridge theory has a long history in image processing;
a standard reference is
\cite{eberly1996ridges}.
In the statistical context,
ridges have been studied in
\cite{Principal},
\cite{us::2014.ANN}, \cite{chen2015asymptotic},
\cite{qiao2014theoretical}.
The idea of classifying structure based on the eigenvalues
of the Hessian of the density 
has appeared in several places such as statistics, astronomy
and image processing.
A good astronomy reference
is \cite{NEXUS}.
Perhaps the work closest to ours in this regard is
\cite{godtliebsen2002significance}.
They estimate the density and then classify points
into different types
(valleys, ridges, etc) according to 
the eigenvalues of the Hessian.
A critical difference with our approach
is that our method
separates structure by dimension.

\smallskip
A related idea is
manifold learning.
The literature on this topic is enormous.
Some early key references are
\cite{tenenbaum2000global, roweis2000nonlinear}.
These methods are aimed at identifying structure when the data
fall exactly on a sub-manifold
which is quite different than the setting in this paper.

There is a large literature on spectral clustering
which transforms the data using a kernel then performs
some form of clustering.
However, spectral methods do not work well
in the presence of noise and they do not output structures of
a given dimension $d$.
We show an example in Section~5.

We would also like to mention some clustering methods
aimed specifically at finding non-convex clusters.
These include
\cite{amiri, karypis1999chameleon, zhong2003unified}.
These methods find clusters in two stages, which allows them to
find more general clusters than one can find with simpler methods such as $k$-means.
For example, 
\cite{zhong2003unified} use $k$-means clustering 
(which tends to find spherical clusters)
followed by post-processing that combines them into more general clusters.
These methods appear to be effective at finding non-convex clusters.
Their goal, however, is quite different than ours.

Another approach to finding structure is
persistent homology;
see
\cite{edelsbrunner2002topological,
edelsbrunner2010computational,
frosini1992discrete,
turner2012fr,
bubenik2007statistical,
carlsson2009topology,
carlsson2009theory,
chazal2008towards,
chazal2011persistence,
cohenSteiner2005stability,
edelsbrunner2008persistent,
ghrist2008barcodes}. 
Roughly speaking, 
persistent homology looks for voids in the data.
Singular feature finding, instead,
seeks low dimensional, high density regions.
In some cases, voids could be surrounded by high density ridges
so it may be possible to connect ridge-based methods
to persistent homology but we do not pursue that connection
in this paper.

\smallskip
{\em Outline.}
In Section \ref{sec::background}
we give some mathematical background on modes and ridges.
In Section \ref{sec::singular}
we formally define singular features.
In Section \ref{section::estimation} we show how to estimate the singular features.
We consider some examples in Section \ref{sec::examples}.
In Section \ref{sec::theory}
we study the asymptotic properties of the method.
We conclude in Section \ref{sec::discussion}
with some discussion.

\vspace{-.5cm}
\section{Background: Modes and Ridges}
\label{sec::background}

We begin by reviewing mode finding
(\cite{klemela2009smoothing, li2007nonparametric, dumbgen2008multiscale, 
chacon, Enhanced, us2015modes}).
Let $p$ be a density on $\mathbb{R}^D$
with gradient $g$ and Hessian $H$.
We assume that $p$ is a Morse function
meaning that the Hessian is non-degenerate at all critical points.
Until Section
\ref{section::estimation},
we assume that $p$ is known.

A point $m$ is a mode if
$||g(x)||=0$ and the eigenvalues of
$H(x)$ are all negative.
The {\em flow}
starting at any point $x$,
is a path $\pi_x:\mathbb{R}\to \mathbb{R}^D$ such that
$\pi_x(0)=x$ and
\begin{equation}\label{eq::flow}
\pi'_x(t)=\frac{d}{dt} \pi(t) = \nabla p(\pi_x(t)).
\end{equation}
Then $\pi_x$ is the steepest ascent curve starting at $x$.
The destination of the path is defined by
$$
{\rm dest}(x) = \lim_{t\to\infty} \pi_x(t).
$$
It can be shown that 
${\rm dest}(x)$ is a mode,
(except for $x$ in a set of measure 0, whose paths lead to saddlepoints;
\cite{irwin1980smooth,chacon,us2015modes}).
Thus the modes can be thought of as the limits of gradient ascent paths.
This observation leads to an algorithm for finding the modes
of a kernel density estimator
called the {\em mean shift algorithm}
(\cite{Fukunaga, Comaniciu}).
Mode finding is similar to
level set estimation
(\cite{polonik1995measuring, cadre2006kernel, 
walther1997granulometric})
but, in general, the two problems are not the same.

Now we turn to ridges.
Let 
\begin{equation}
\lambda_1(x)\geq \lambda_2(x) \geq \cdots \geq \lambda_D(x)
\end{equation}
denote the eigenvalues of $H(x)$
and let $\Lambda(x)$ be the diagonal matrix whose
diagonal elements are the eigenvalues.
Write the spectral decomposition of $H(x)$ as
$H(x) = U(x) \Lambda(x) U(x)^T.$
Fix $0 \leq d < D$
and let
$V(x)$ be the last $D-d$ columns of $U(x)$
(that is, the columns corresponding to the $D-d$ smallest 
eigenvalues).
If we write $U(x) = [V_\diamond(x):V(x)]$ then we can write
$H(x) = [V_\diamond(x):V(x)]\Lambda(x) [V_\diamond(x):V(x)]^T$.
Let $L(x) = V(x) V(x)^T$ be the projector
on the linear space defined by 
the columns of $V(x)$.
That is, $V(x) V(x)^T$ projects onto the local normal space.
Similarly,
$V_\diamond(x) V_\diamond(x)^T$ projects onto the local tangent space.
Define the {\em projected gradient} 
\begin{equation}
G(x) = L(x) g(x).
\end{equation}
The ridge set $R_d$ of dimension $d$ is defined by
\begin{equation}
R_d = \Biggl\{ x:\ ||G(x)|| = 0,\ \ \ \lambda_{D-d} < 0\Biggr\}.
\end{equation}
The ridge set is geometrically like the ridge of a mountain: each point in $R_d$
is a local maximum along a slice in the normal direction.
Modes (0-dimension ridges) have $D$ negative eigenvalues.
One dimensional ridges are filaments and have $D-1$ negative eigenvalues.
Two dimensional ridges are walls and have $D-2$ negative eigenvalues.
And so on.
By definition,
\begin{equation}
R_0 \subset R_1 \subset \cdots \subset R_{D-1}.
\end{equation}

Like modes, ridges
are also the destinations of the paths
but we replace the gradient field by the projected gradient vector field.
Specifically, define $\pi_x$ by
$\pi_x(0) = x$ and
\begin{equation}\label{eq::diffeq}
\pi'_x(t) = G(\pi(t)).
\end{equation}
The points in $R_d$
are the destinations of these paths.


\section{Singular Features}
\label{sec::singular}

Ridge finding is a good start
for finding structure.
For example, it has been used successfully to
map out the structure of matter in the Universe
\citep{cosmicweb}.
But ridges that are flat or small are not interesting.
We want to find the portions of the
$d$-dimensional ridges that are
sharp and large.
Also, as we explain later, ridges experience dimensional leakage:
zero dimensional structure shows up in one dimension ridges and vice versa.
In other words, ridge finding does not cleanly separate structure
into objects of different dimensions.
The next section shows how to find the sharp portion of the ridge.

\subsection{Sharpness}

We quantify the sharpness of a ridge component by the eigenstructure
of the Hessian $H$.
The idea is to construct functions of the eigenvalues --- called eigensignatures ---
that summarize the local shape of the density.
In statistics, 
\cite{godtliebsen2002significance},
used the eigenstructure of $H$ to classify points as belonging to different types of structure.
Examples from the image processing literature include
\cite{descoteaux2004multi} and
\cite{frangi1998multiscale}.
In the astronomy literature, examples include
the NEXUS signature due to \cite{NEXUS} and the method in \cite{snedden2014new}.
Our use of the eigenstructure is somewhat different than these papers.

Let $\lambda_1(x) \geq \cdots \geq \lambda_D(x)$
be the eigenvalues of the Hessian $H(x)$ of the $\log p(x)$.
Let us also define $\lambda_0(x) \equiv 0$.
The intuition of all eigensignature methods is based
on the following heuristic.
In the {\em idealized case},
a $d$-dimensional structure
would have the $d$ smallest eigenvalues
very negative and approximately equal.
The remaining eigenvalues would be close to 0.
For example, when $D=3$
we have
the following idealized cases:
\begin{align*}
{\rm mode} \ \     & \lambda_3 \approx \lambda_2 \approx \lambda_1 < 0\\
{\rm filament} \ \ & \lambda_3 \approx \lambda_2 < 0 \ {\rm and}\ \lambda_1 \approx 0\\
{\rm wall} \ \     & \lambda_3 < 0\   {\rm and}\ \lambda_2 \approx \lambda_1 \approx 0.
\end{align*}
Let us expand on this idealized case.
Imagine a unimodal density with a spherical, sharply defined mode.
In that case, the eigevalues will all be negative and approximately equal, that is,
$\lambda_3 \approx \lambda_2 \approx \lambda_1 < 0$.
Now suppose that the mode is locally highly elliptical so that the density around the mode
looks like a filament.
In this case there will be two very negative eigenvalues
but the largest eigenvalue will be close to 0.
Thus,
$\lambda_3 \approx \lambda_2 < 0 \ {\rm and}\ \lambda_1 \approx 0$.
In the last case, suppose the density is very sharply concentrated around the mode
in one dimension only and is otherwise very flat.
Here, the density resembles a wall
and we have
$\lambda_3 < 0$ and
$\lambda_2 \approx \lambda_1 \approx 0$.

Now, we need to define functions of the eigenvalues
that formalize these cases.
We want a function $S_0$ that is large when
$\lambda_3 \approx \lambda_2 \approx \lambda_1 < 0$.
Similarly, we want a function $S_1$ that is large when
$\lambda_3 \approx \lambda_2 < 0$ and $\lambda_1 \approx 0$.
And so on.
To formalize this,
we define the {\em eigensignatures}:
\begin{align*}
S_0 &= |\lambda_1|\ \frac{|\lambda_1|}{|\lambda_3|} I(\lambda_1 < 0)\\
S_1 &= |\lambda_2|\ \frac{|\lambda_2|}{|\lambda_3|} 
\left(1 - \frac{|\lambda_1 \wedge 0|}{|\lambda_3|}\right)I(\lambda_2 < 0)\\
S_2 &= |\lambda_3|\ \left( 1 - \frac{|\lambda_1 \wedge 0|}{|\lambda_3|}\right)
\left( 1 - \frac{|\lambda_2 \wedge 0|}{|\lambda_3|}\right)I(\lambda_3 < 0)\\
\vdots &= \vdots
\end{align*}
where $I(\cdot)$ is the indicator function. More generally, for
$d=0,\ldots, D-1$:
\begin{equation}
S_j = I(\lambda_{j+1} < 0)\  
         |\lambda_{j+1}|\ \left(\frac{|\lambda_{j+1}|}{|\lambda_d|} \right)
    \prod_{i=0}^j \left( 1 - \frac{|\lambda_i \wedge 0|}{|\lambda_d|}\right)
\end{equation}
where we recall that $\lambda_0 \equiv 0$.

{\bf Remark:}
Note that, for simplicity, we have written
$S_j$ instead of $S_j(x)$ and
$\lambda_j$ instead of $\lambda_j(x)$.
We will often do this in what follows.

To get more intuition,
consider the defintion of $S_0$.
Suppose that $\lambda_3 \approx \lambda_2 \approx \lambda_1 < 0$.
Then the first term $|\lambda_1|$ will be large and the second term
$|\lambda_1|/|\lambda_3|$ will be close to 1.
Thus, $S_0$ will be large.
Now suppose 
$\lambda_3 \approx \lambda_2 < 0$
and $\lambda_1 \approx 0$.
In this case, 
$|\lambda_1|/|\lambda_3|$ will be small, causing $S_0$ to decrease.
On the other hand, all the terms in $S_1$ will be large.
Hence, a filament-like structure will have $S_0$ small and $S_1$ large.
Similar remarks apply to the higher order structures.

\newpage
A large value of $S_0$ 
thus suggests that $x$ is a mode.
A large value of $S_1$ suggests that 
$x$ is a one-dimensional ridge point and so on.
In particular, given a threshold $T_d$, we call
$$
R_d^\dagger = \Bigl\{ x\in R_d:\ S_d(x) > T_d\Bigr\}
$$
the sharp portion of the ridge.
In Section \ref{section::tuning} we explain how to choose the threshold $T_d$.

To explore the eigensignature further, 
Figure \ref{fig::idealizedsig}
shows some stylized cases.

\begin{center}
\begin{tabular}{lccc|ccc}
          & $\lambda_3$ & $\lambda_2$ & $\lambda_1$ & $S_0$ & $S_1$ & $S_2$ \\ \hline\hline
mode:     & $-C$        & $-C$        & $-C$        & $C$   & 0     & 0 \\  \hline
filament: & $-C$        & $-C$        & 0           & 0     & $C$   & 0\\    \hline
wall:     & $-C$        & $0$         & 0           & 0     & 0     & $C$\\
\end{tabular}
\captionof{figure}{\em Examples of signatures. The first three
entries of each row show a configuration
of the eigenvalues of the Hessian.
$C>0$ is any positive constant.
The last three entries show the corresponding values of the signature
$S = (S_0,S_1,S_2)$.}
\label{fig::idealizedsig}
\end{center}

Some more varied cases are shown Figure \ref{fig::sig1}.
The left plot of Figure \ref{fig::sig1} shows that when there
\vspace{-.3in}
\begin{center}
\begin{tabular}{cc}
\hspace{-2em}\includegraphics[width=3.5in,height=5in]{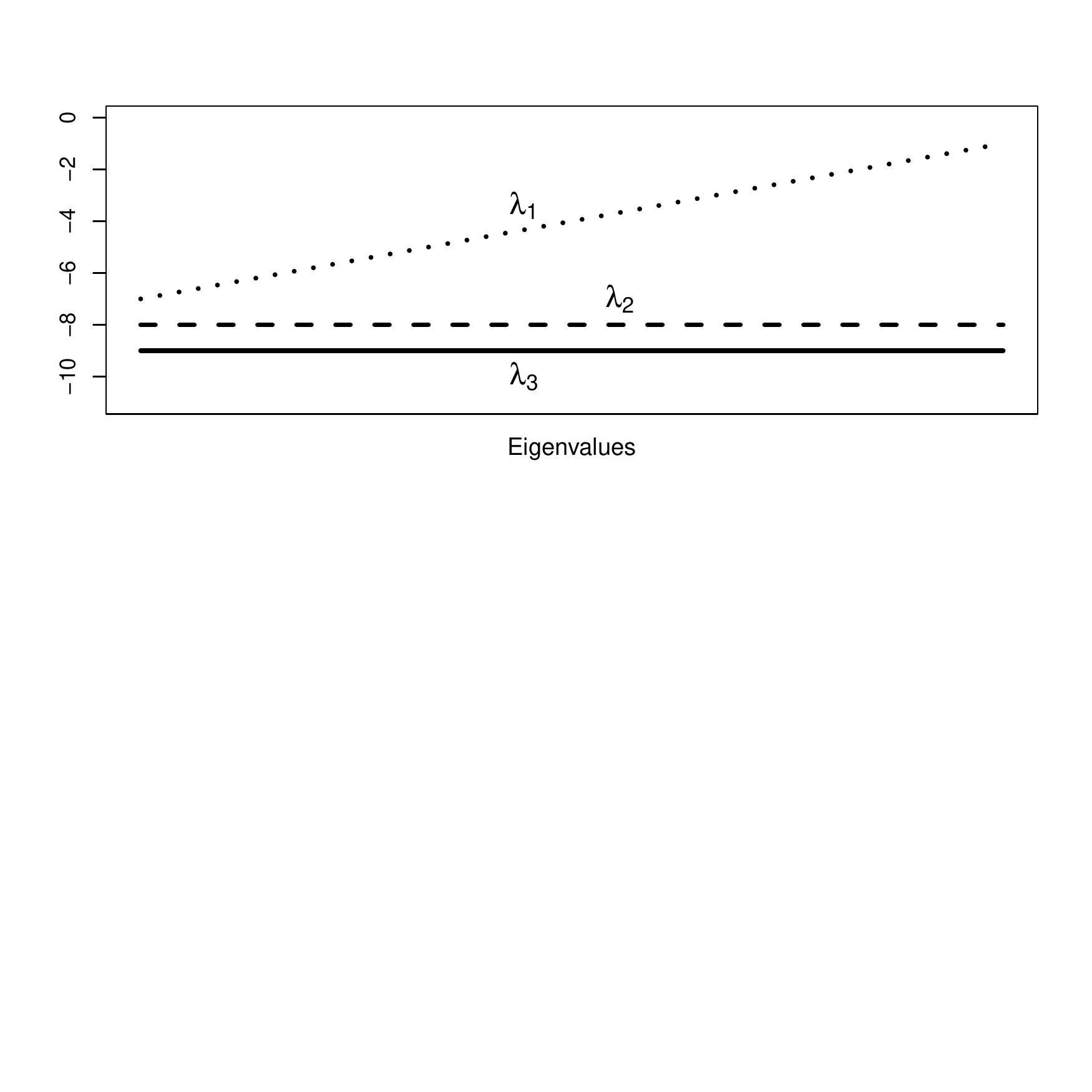} 
\hspace{-2em}& \hspace{-2em}
\includegraphics[width=3.5in,height=5in]{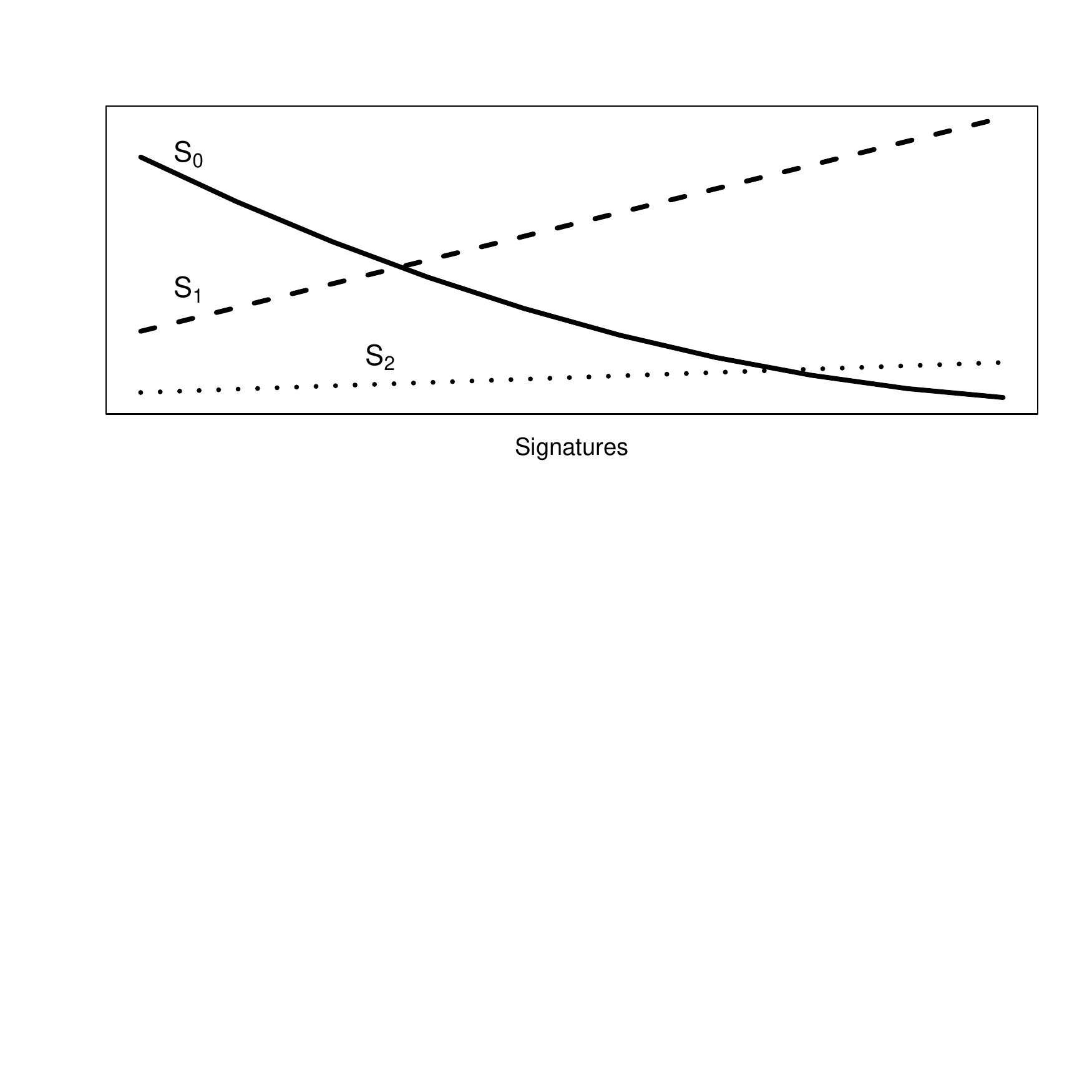} 
\end{tabular}
\vspace{-3.in}
\captionof{figure}{\em Left: The evolution from mode to filament.
Left: eigenvalues. Solid line: $\lambda_1$, broken line: $\lambda_2$,
dotted line: $\lambda_3$. 
Right: signatures. Solid line: $S_0$, 
broken line: $S_1$, dotted line: $S_2$.}
\label{fig::sig1}
\end{center}
\smallskip
are three, negative, tightly clustered eigenvalues,
$S_0$ is large. 
As we increase the largest eigenvalue
(which corresponds to moving up higher in the left plot)
we see that $S_0$ decreases and $S_1$ increases.
Thus, two negative, clustered eigenvalues and one larger eigenvalue
is declared to be more filament-like.

\section{Estimating Singular Features}
\label{section::estimation}

So far, our discussion has referred to the true density $p$.
To find singular features in point clouds,
we need to estimate all the relevant quantities.

\subsection{Estimating The Density}

To estimate the density $p$,
we use the standard kernel density estimator
\begin{equation}
\hat p_h(x) =
\frac{1}{n}\sum_{i=1}^n 
\frac{1}{h^D}K\left(\frac{||y-X_i||}{h}\right)
\end{equation}
where $K$ is a smooth, symmetric kernel and $h>0$
is the bandwidth.

We use the multivariate
form of Silverman's Normal reference rule to choose the bandwidth $h$.
For a matrix-valued bandwidth,
\cite{chacon2011asymptotics}
show that this is
$$
H = 
\left(\frac{4}{n (d+2)}\right)^{\frac{2}{4+d}}S
$$
where $S$ is an estimate of the covariance matrix.
For simplicity we use a scalar bandwidth
\begin{equation}\label{eq::silverman}
h = \left(\frac{4}{n (d+2)}\right)^{\frac{1}{4+d}}s
\end{equation}
where $s^2 = D^{-1}\sum_{j=1}^D s_j^2$, and $s_j$ is 
the standard deviation of the $j^{\rm th}$ coordinate.

{\bf Remark:}
This choice of bandwidth has the optimal rate for estimating $p$
under standard smoothness assumptions.
However, estimating the ridges involves estimating the gradient and Hessian
as well as the density.
In principle, we could use separate bandwidths for estimating
the gradient and the Hessian.
However, to keep the method simple, we suggest using
a single bandwidth even if it is not optimal.


\subsection{Estimating the Ridge}

Let $\hat H(x)$ be the Hessian of $\hat p(x)$ and let
\begin{equation}
\hat\lambda_1(x)\geq \hat\lambda_2(x) \geq \cdots \geq \hat\lambda_D(x)
\end{equation}
denote the eigenvalues of $\hat H(x)$.
Write the spectral decomposition of $\hat H(x)$ as
$$
\hat H(x) = \hat U(x) \hat \Lambda(x) \hat U(x)^T
$$
and let
$\hat V(x)$ be the last $D-d$ columns of $\hat U(x)$
(that is, the columns corresponding to the $D-d$ smallest 
eigenvalues).
Write $\hat U(x) = [\hat V_\diamond(x):\hat V(x)]$ and let 
$\hat L(x) = \hat V(x) \hat V(x)^T$ be the projector
on the linear space defined by 
the columns of $\hat V(x)$.
The plugin estimate of the projected gradient
is $\hat G(x) = \hat L(x) \hat g(x)$
where $\hat g$ is the gradient of $\hat p$.
The estimated ridge set $\hat R_d$ is
\begin{equation}
\hat R_d = \Biggl\{ x:\ ||\hat G(x)|| = 0,\ \ \ \hat \lambda_{D-d} < 0\Biggr\}.
\end{equation}
The properties of $\hat R_d$ are studied in
\cite{morse-smale}, \cite{chen2015asymptotic} and
\cite{us::2014.ANN}.
In particular, 
\cite{us::2014.ANN} showed that
the estimated ridges consistently estimate the true ridges.
\cite{chen2015asymptotic} established the limiting distribution of the estimated ridge
which leads to a method for constructing confidence sets.

In practice, we replace $\hat R_d$
with a numerical approximation
based on  the subspace constrained mean shift (SCMS) 
algorithm due to \cite{Principal}.
Before explaining the SCMS algorithm,
it is helpful to first review
the basic mean shift algorithm
(\cite{Fukunaga, Comaniciu}).
This is a method for finding the modes of a density
by approximating the steepest ascent paths.
The algorithm starts with a mesh of points and then 
moves the points along gradient ascent trajectories 
towards local maxima.


Recall that
$\hat{p}_h(x)$ denotes the kernel density estimator.
Let
${\cal M}=\{ m_1,\ldots, m_N\}$ be a collection of mesh points.
These are often taken to be the same as the data
but in general they need not be.

Let $m_j(1)=m_j$ and
for $t=1,2,3,\ldots$ we define the trajectory
$m_j(1),m_j(2),\ldots,$ by
\begin{equation}
m_j(t+1) = 
\frac{\sum_{i=1}^n X_i\, K\left( \frac{||m_j(t) - X_i||}{h}\right)}
{\sum_{i=1}^n K\left( \frac{||m_j(t) - X_i||}{h}\right)}.
\end{equation}
It can be shown that each trajectory
$\{m_j(t):\ t=1,2,3,\ldots,\}$ 
approximates the gradient ascent path and
converges to a mode of $\hat{p}_h$.
Conversely, if the mesh ${\cal M}$ is rich 
enough, then for each mode of $\hat{p}_h$, some 
trajectory will converge to that mode.
The mean shift algorithm is simply a numerical approximation to the
flow defined by (\ref{eq::flow}).
This was recently made rigorous in
\cite{Arias-Castro2013}.
The mean shift algorithm is simply a mode-finding algorithm.
In fact, it can be shown that the mean shift algorithm is just a type of gradient ascent.
Given any starting value $x$, the algorithm keeps moving the point in the direction of the gradient
towards a local maximum.

The SCMS algorithm mimics the mean shift algorithm
but it replaces the gradient with the projected
gradient at each step.
The algorithm can be applied to $\hat p$ or any
monotone function of $\hat p$, such as $\log \hat p$,
in Figure \ref{fig::algorithm}.
The SCMS algorithm can also be regarded as a gradient ascent algorithm.
Starting from a point $x$,
the algorithm moves the point in the direction of the gradient.
However, we want the point to move towards a ridge rather than a mode.
To accomplish this, SCMS takes the point and then projects into the space,
perpendicular to the ridge.
This forces the point to move towards the ridge and stop,
rather than moving up the ridge towards a mode.
The details are formalized in
\cite{Principal}.

\begin{center}
\fbox{\parbox{6in}{
\begin{center}
{\sf SCMS Algorithm}
\end{center}
\vspace{-.7cm}
\begin{enum}
\item Input: $q(x) = \log \hat p(x)$ and a mesh of 
points $M = \{m_1,\ldots, m_N\}$. \\
\item Let $g(x)$ and $H(x)$ be the gradient and Hessian 
of $q$.
\item Let $V = [v_1 \cdots v_{D-d}]$ be the 
$D\times (D-d)$ matrix whose columns are the\\
eigenvectors of $H(x)$ corresponding to the 
smallest $D-d$ eigenvalues.
\item For each mesh point $m$ repeat:
$$
m \Longleftarrow m + 
V V^T \ \left(\frac{\sum_i c_i X_i}{\sum_i c_i} - m\right)
$$
where
$c_i = h^{-D}K(||m-X_i||/h)$.
\item Stop when convergence is obtained.
\end{enum}}}
\captionof{figure}{\em The SCMS algorithm of \cite{Principal}.}
\label{fig::algorithm}
\end{center}

Figure \ref{fig::plot2d} shows a simple dataset with the outputs
$S_0$ and $S_1$ of SCMS for $d=0$ and $1$. 
The left plot shows the data,
the middle plot shows $\hat R_0$ and the right plot shows
$\hat R_1$.
Here we see an effect that we call
{\em dimensional leakage}.
The set $\hat R_0$ contains some ridge points
(since ridge points can also be local modes).
Similarly, $\hat R_1$ contains the modes; this must happen since,
by definition,
$R_0 \subset R_1 \subset R_2 \cdots$. 
This is why filtering with the eigensignature is helpful.

\vspace{-.5cm}
\begin{center}
\begin{tabular}{ccc}
\includegraphics[scale=.28]{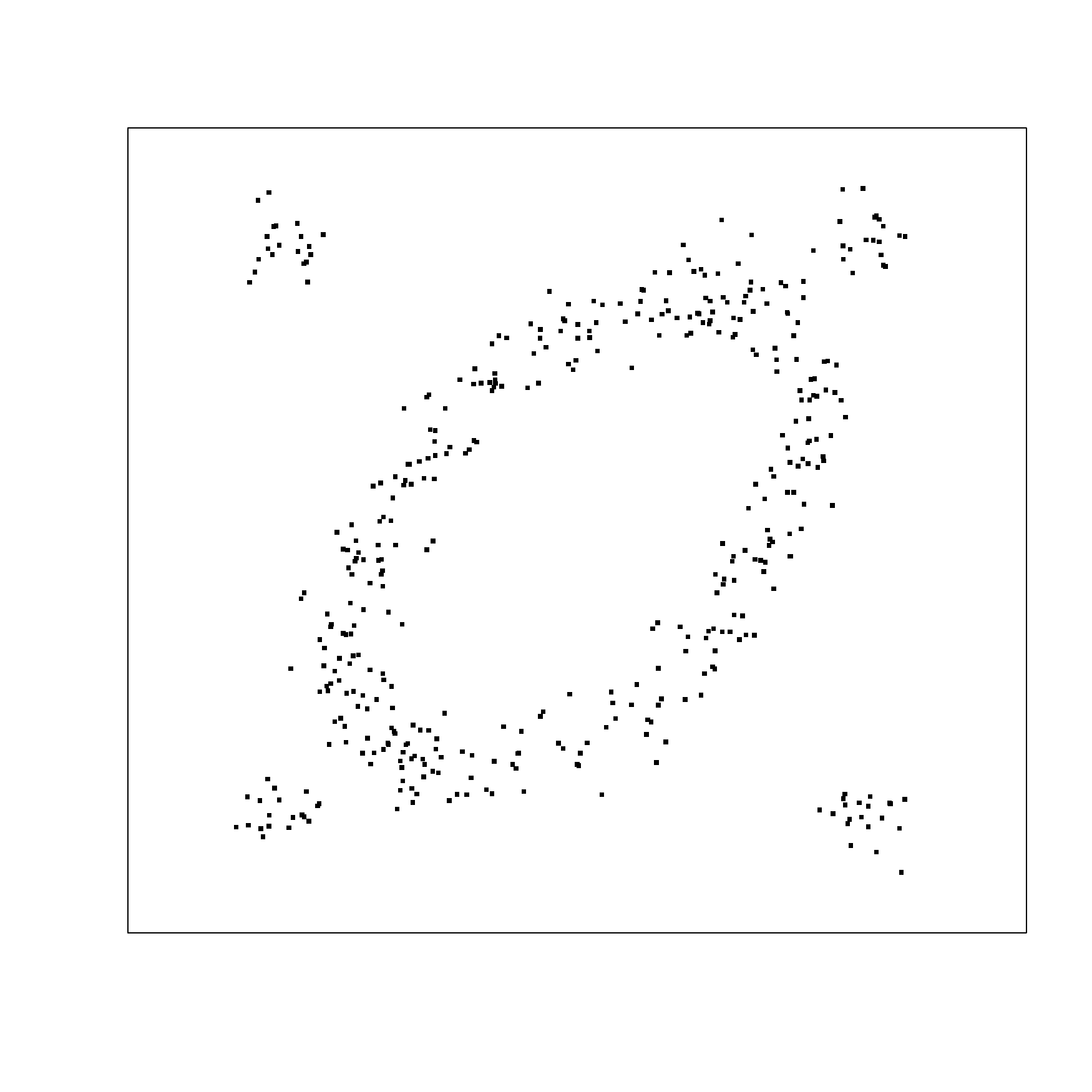} &
\includegraphics[scale=.28]{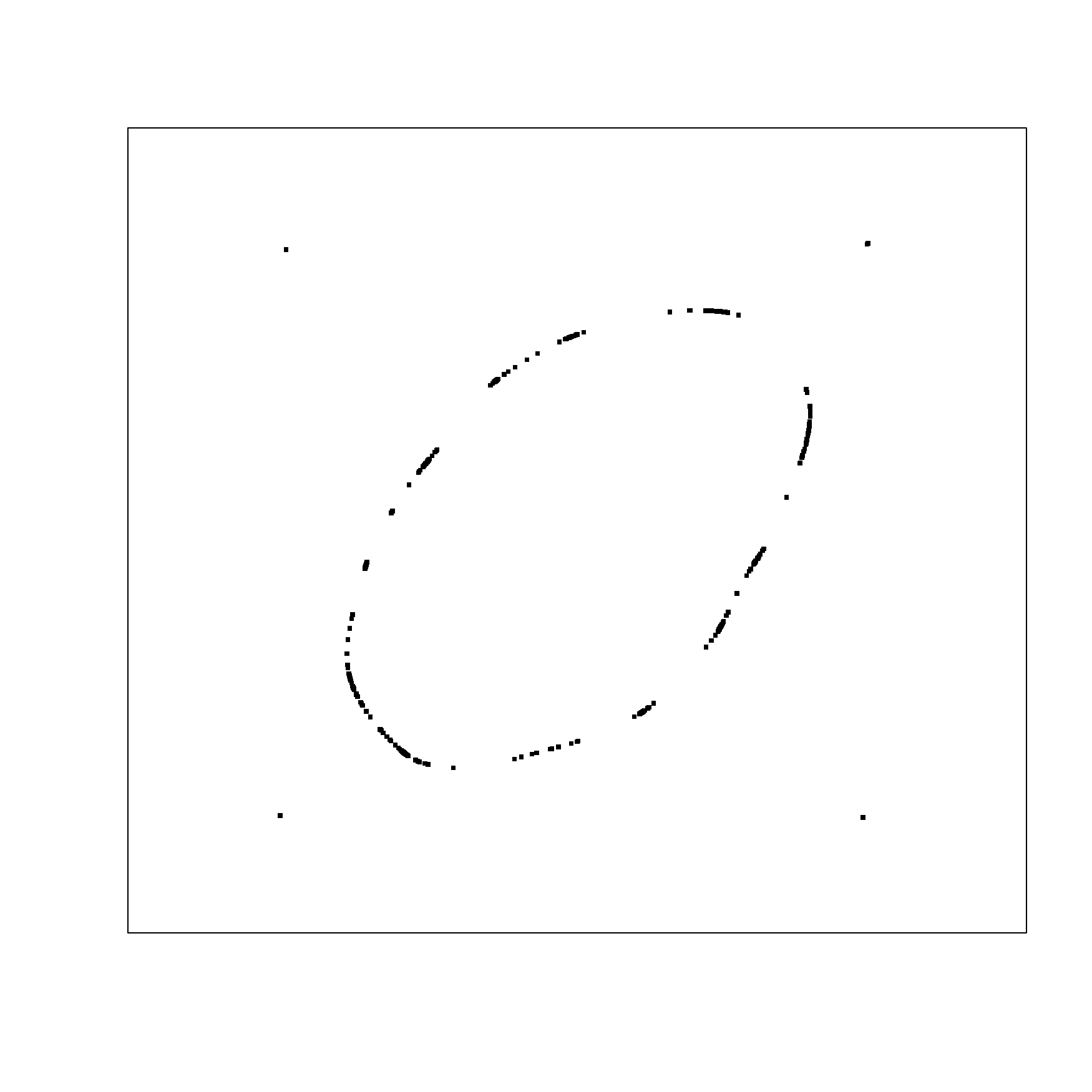} &
\includegraphics[scale=.28]{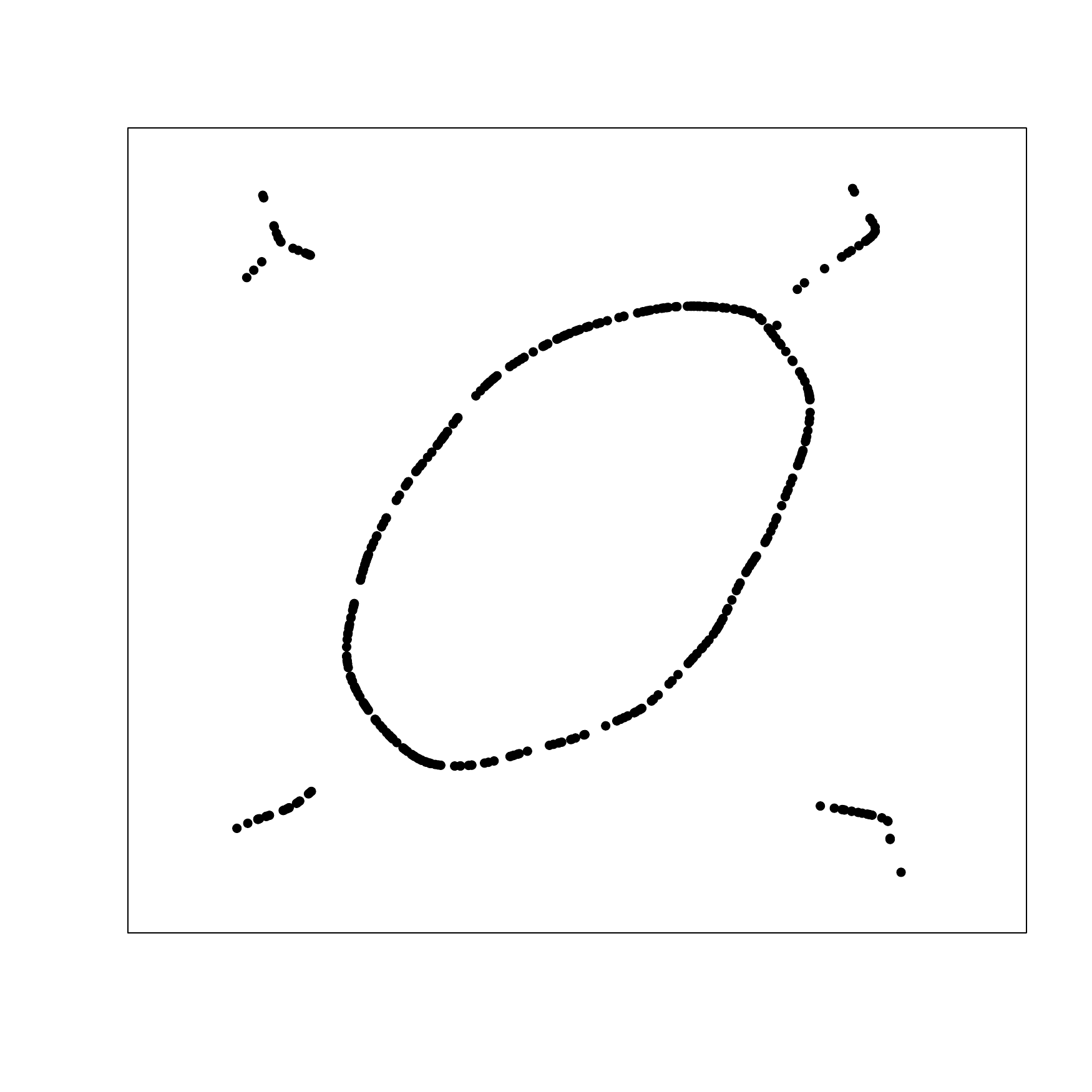}
\end{tabular}
\vspace{-.7cm}
\captionof{figure}{\em Dimensional leakage. Left: Data. Center: ${\rm SCMS}_0$ 
contains the true modes but also false modes from 
the oval. Right: ${\rm SCMS}_1$ contains the 
true oval but also false ridges from the modes.}
\label{fig::plot2d}
\end{center}

\vspace{-.1in}
In principle, we could use a very fine grid of starting values
and trace out $\hat R_d$ very accurately.
But this is time consuming and in practice
we have found that using the data points
as starting values leads to an accurate approximation of the ridge.
Thus, let $Y_i$ be the destination of SCMS after running the algorithm with
the data point $X_i$ as a starting point.
We use
$\hat R_d = \{Y_i:\ i=1,\ldots, n\}$
as our estimate of the ridge.

\subsection{Estimating the Signature}

We use the plug-in estimator for the 
signatures.
Thus, we take
$\hat H(x)$ to be the Hessian of $\log \hat p(x)$ and
we let $\hat\lambda_j(x)$ be the eigenvalues of $\hat H(x)$.
Then we
estimate the signature by
\begin{equation}
\hat S_j = I(\hat{\lambda}_{j+1} < 0)\  
         |\hat{\lambda}_{j+1}|\ \left(\frac{|\hat{\lambda}_{j+1}|}{|\hat{\lambda}_d|} \right)
          \prod_{i=1}^j \left( 1 - \frac{|\hat{\lambda}_i \wedge 0|}{|\hat{\lambda}_d|}\right)
\end{equation}
Having estimated the signature,
we estimate the sharp ridge by
\begin{equation}
\hat R_d^\dagger = \Biggl\{ Y_i\in \hat R_d:\ \hat S_d(Y_i)> T_d\Biggr\}.
\end{equation}

\subsection{Connected Components: Single Linkage Clustering 
(The Rips Graph)}
\label{section::singlelinkage}

\smallskip
The sets
$\hat R_d^\dagger$ can be used as the final output
of the procedure.
However, in some cases,
the user may want to summarize 
$\hat R_d^\dagger$ as a collection of connected components.

Let us write
$\hat R_d^\dagger = \{Y_1,\ldots, Y_m\}$ say.
The {\rm Rips graph}
is defined as follows.
Each node in the graph corresponds to a point
$Y_i \in \hat R_d^\dagger$. We put an edge between 
two nodes $Y_i$ and $Y_j$ if and only if
$||Y_i - Y_j|| \leq \epsilon$. The connected components 
of the Rips graph are precisely the clusters obtained 
\begin{center}
\begin{tabular}{cc}
\includegraphics[scale=.4]{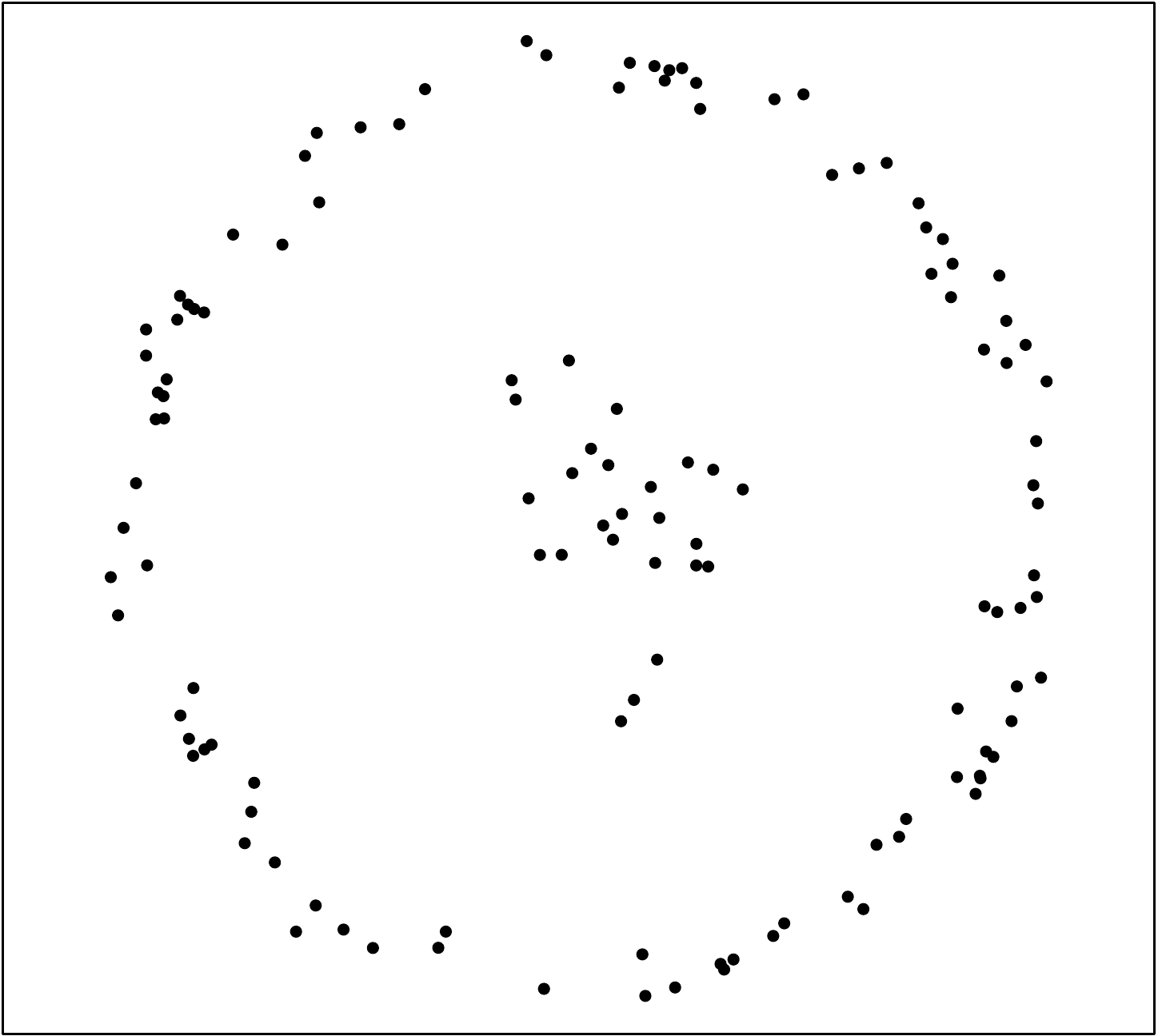} & 
\includegraphics[scale=.4]{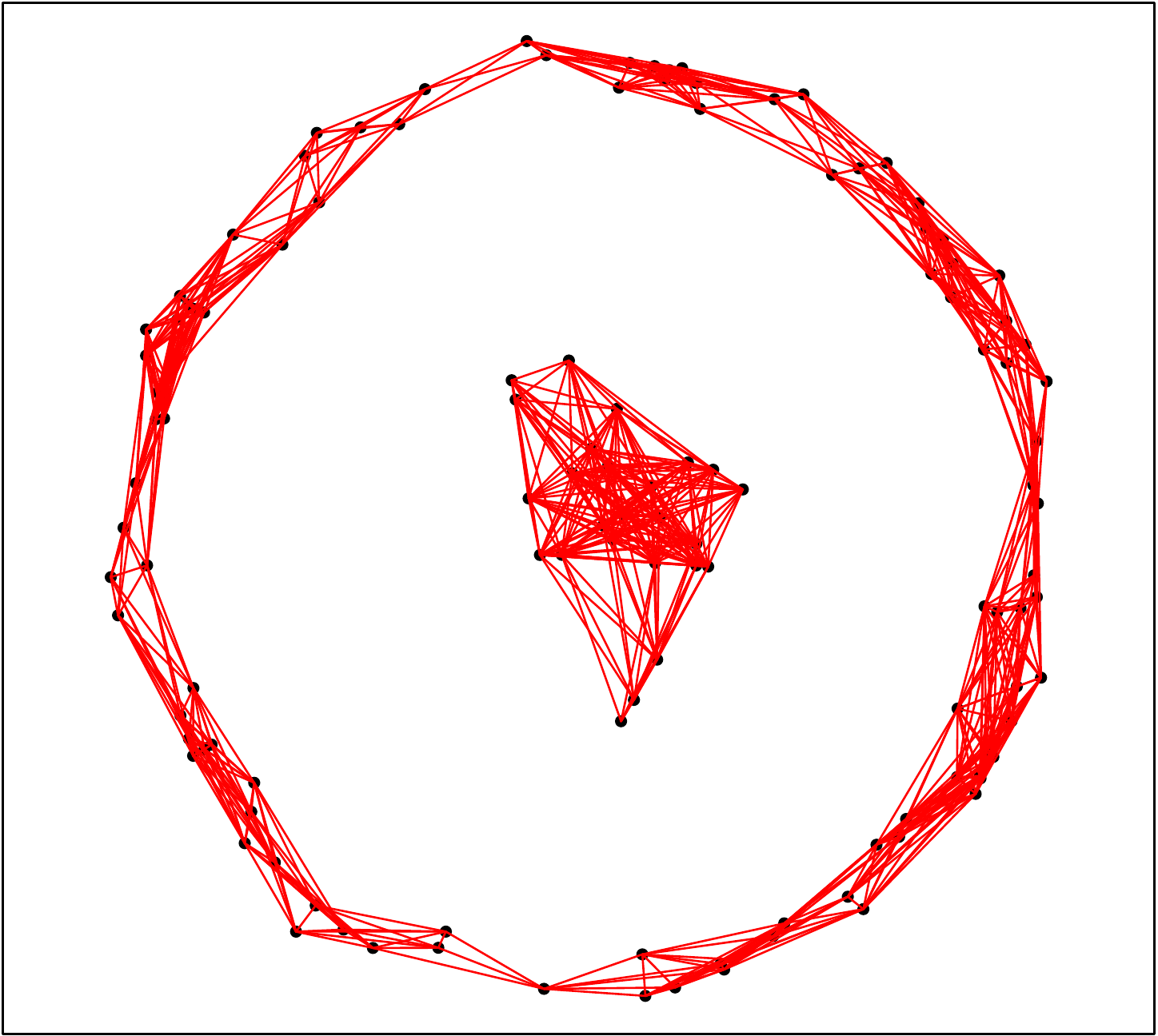}
\end{tabular}

\medskip
\captionof{figure}{Left: A two-dimensional data set. Right: the Rips 
graph. Every pair of points $X_i$ and $X_j$ such that 
$||X_i - X_j||\leq \epsilon$ is connected.}
\label{fig::rips}
\end{center}
by performing
single linkage clustering and then cutting the dendogram
at height $\epsilon$.
Let $\hat R_{d1}^\dagger, \ldots, \hat R_{d\ell}^\dagger$
denote the connected components of the graph
(i.e. the single linkage clusters).
Figure \ref{fig::rips} shows a Rips graph.

\medskip
Optionally, we may also want to discard small
components.
Thus we can discard a connected component
if $n_j \geq N_d$ where $n_j$ is the number of data points that
ended up in component $\hat R_{dj}^\dagger$
and $N_d$ is a user-chosen constant.
The tuning parameter $\epsilon$ is discussed 
in Section \ref{section::tuning}.

\subsection{Choosing the Tuning Parameters}
\label{section::tuning}

Here we provide suggestions for choosing all the tuning parameters.
Our suggestions are meant to be simple
and transparent.
We make no claim that they are optimal.
Singular feature finding is an exploratory
method and the user is encouraged to try different tuning parameters.
Our suggestions can be used as starting points.

\begin{enum}
\item {\em The Bandwidth $h$.}
Our default is given in equation (\ref{eq::silverman}).
(In some cases, we have found that using $h/2$ can improve the performance.)

\medskip
\item {\em The Signature Threshold $T_d$}.
A simple method that we have found very effective is to
plot a histogram of the signature values and
choose $T_d$ to be at a minimum.
Alternatively, we plot the empirical cdf of the signature values
and choose $T_d$ to be in a flat spot of the cdf.
Specifically, we compute the signature at each data point and we find the
empirical cdf of these numbers (as in Figure 3).
We illustrate this method in Section \ref{sec::examples}.
It would not be difficult to automate this method.
For example, fit a one dimensional kernel estimator to the signatures 
and choose the rightmost local minimum.
In practice, choosing it by visually inspecting the cdf works very well.

An alternative, and more formal approach, is to 
use a null distribution.
We assume that the data are a sample of size $n$ from a distribution
on a known, compact set ${\cal X}$.
(Otherwise, the data can be truncated to some compact set.)
Draw $U_1,\ldots, U_n$ from a uniform distribution $\mathbb{U}$ 
on ${\cal X}$.
We regard $\mathbb{U}$ as a null distribution.
We run the SCMS algorithm on these points
(using the same bandwidth $h$ that is used on the real data)
and we compute the estimated signatures.
Let $\tau_d$ be the maximum value of the signatures.
This process is repeated many times to get an estimate of the cdf
$$
F_d(t) = \mathbb{U}(\tau_d \leq t)
$$
where $\mathbb{U}$ indicates that we are using the uniform distribution.
This defines, in a sense, a null distribution for the maximum signature.
One can use an upper quantile of $F_d$
as the threshold $T_d$.
This Monte-Carlo approach is simple but time consuming.

\medskip
\item {\em The Rips Parameter $\epsilon$.}
In Section \ref{section::singlelinkage}
we used a parameter $\epsilon$ for the Rips graph.
As above, 
let $U_1,\ldots, U_n$ be $n$ draws from a uniform distribution on ${\cal X}$.
Let $U_i'$ be the nearest neighbor to $U_i$.
We suggest setting
$$
\epsilon = \mathbb{E} ||U_i - U_i'||.
$$
The parameter $N_d$ can be chosen subjectively or omitted entirely.
Alternatively,
let $N$ be the size of the largest connected component
of the Rips graph on the uniform data
using threshold $\epsilon$.
The null simulation gives the distribution
$G(t) = \mathbb{U}(N \leq t)$.
We can then choose an upper quantile of $G$.
An alternative is to start with a small $\epsilon$ and then increase $\epsilon$
until the number of connected components stabilizes.
\end{enum}

\vspace{-.5cm}
\section{Examples}
\label{sec::examples}

\subsection{Two-dimensional Example.}
\label{sub::Ex1}

Fig \ref{fig::2DMarcoA} shows an example of a two dimensional 
dataset hiding five singular features. Our method finds four
$0$-dimensional features (modes) and one $1$-dimensional feature
(filament). 

\vspace{-1cm}
\begin{center}
\begin{tabular}{ccc}
\includegraphics[scale=.3]{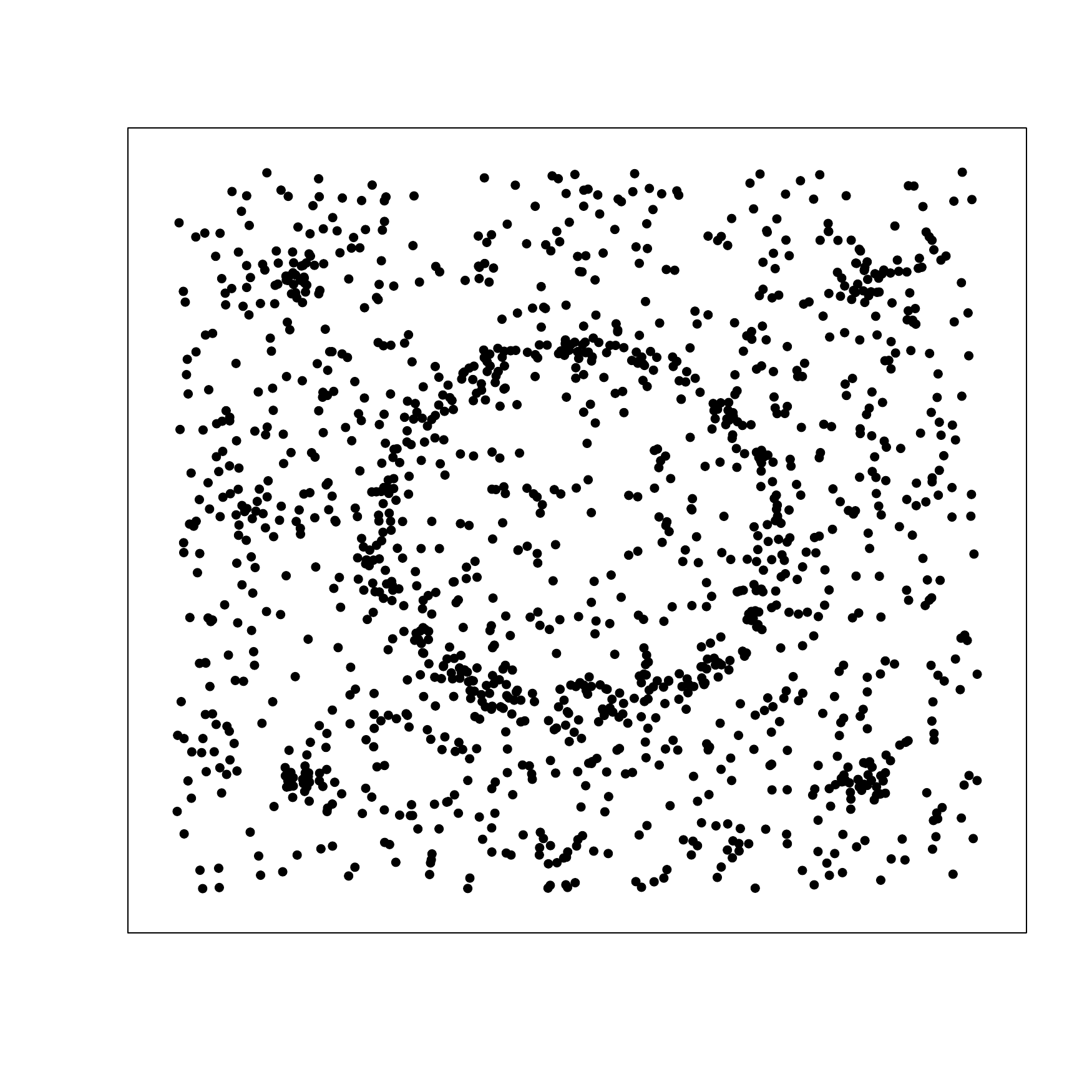} &
\includegraphics[scale=.3]{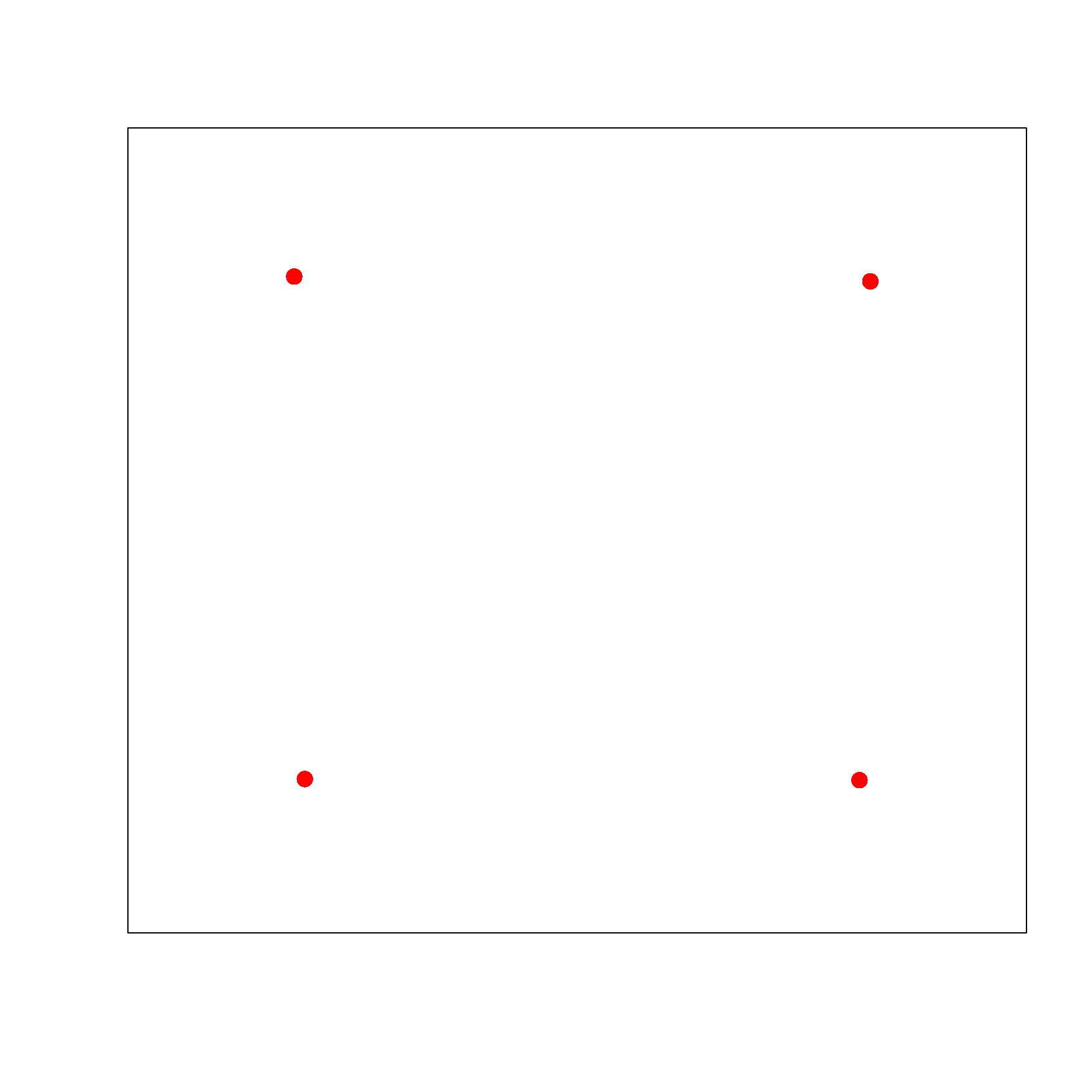}   &
\includegraphics[scale=.3]{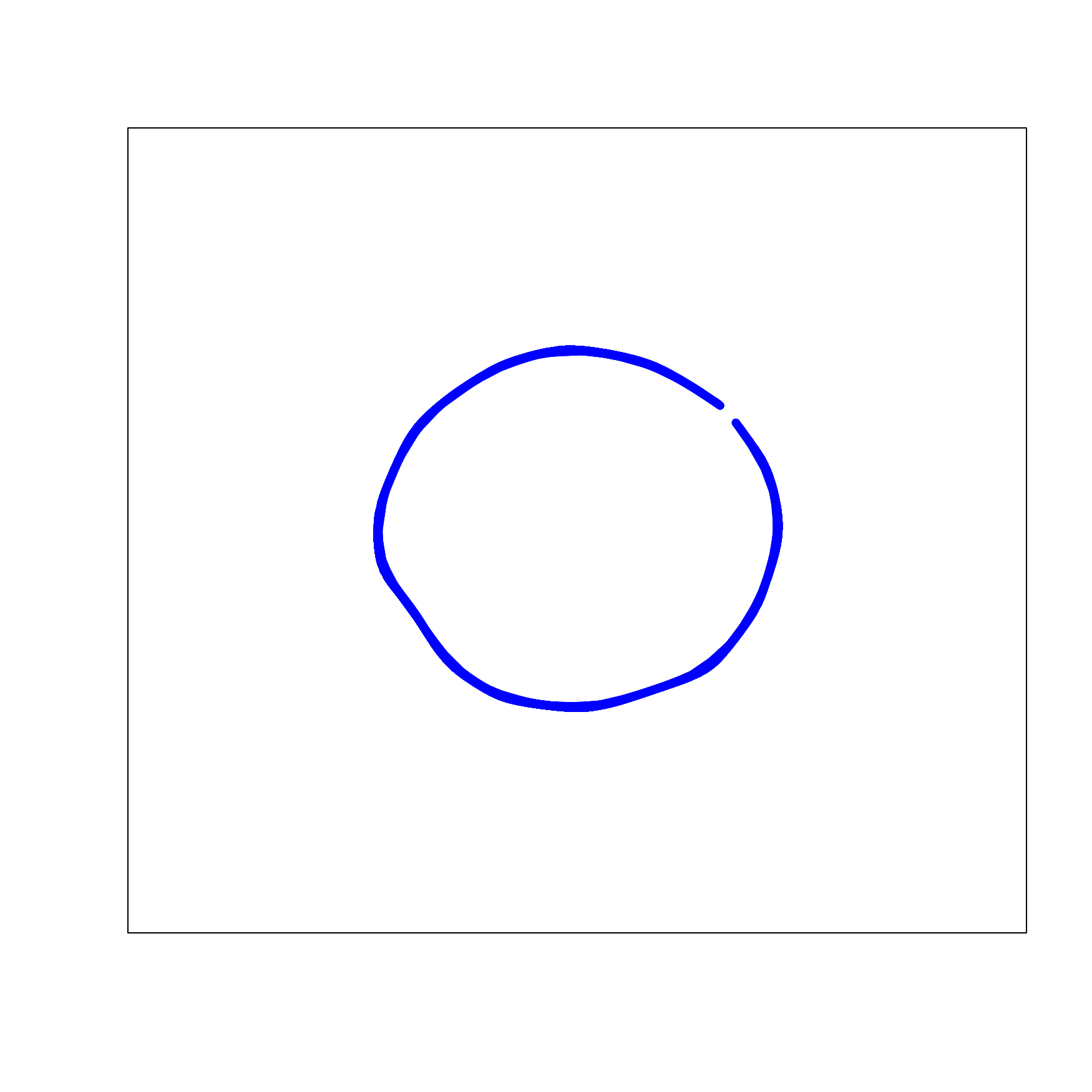}   
\end{tabular}
\end{center}

\begin{center}
\captionof{figure}{\em Two Dimensional example. This plot shows 
a two-dimensional dataset and the $0$- and $1$-dimensional singular
features found with our method.} 
\label{fig::2DMarcoA}
\end{center}

In Figure \ref{fig::2DMarcoB} we plot histograms and 
empirical cdf's of $0$- and $1$-dimensional signatures, together 
with their thresholds $T_d$ $(d=1,2)$ determined by 
the null distribution approach.

\begin{center}
\begin{tabular}{cccc}
\includegraphics[scale=.2]{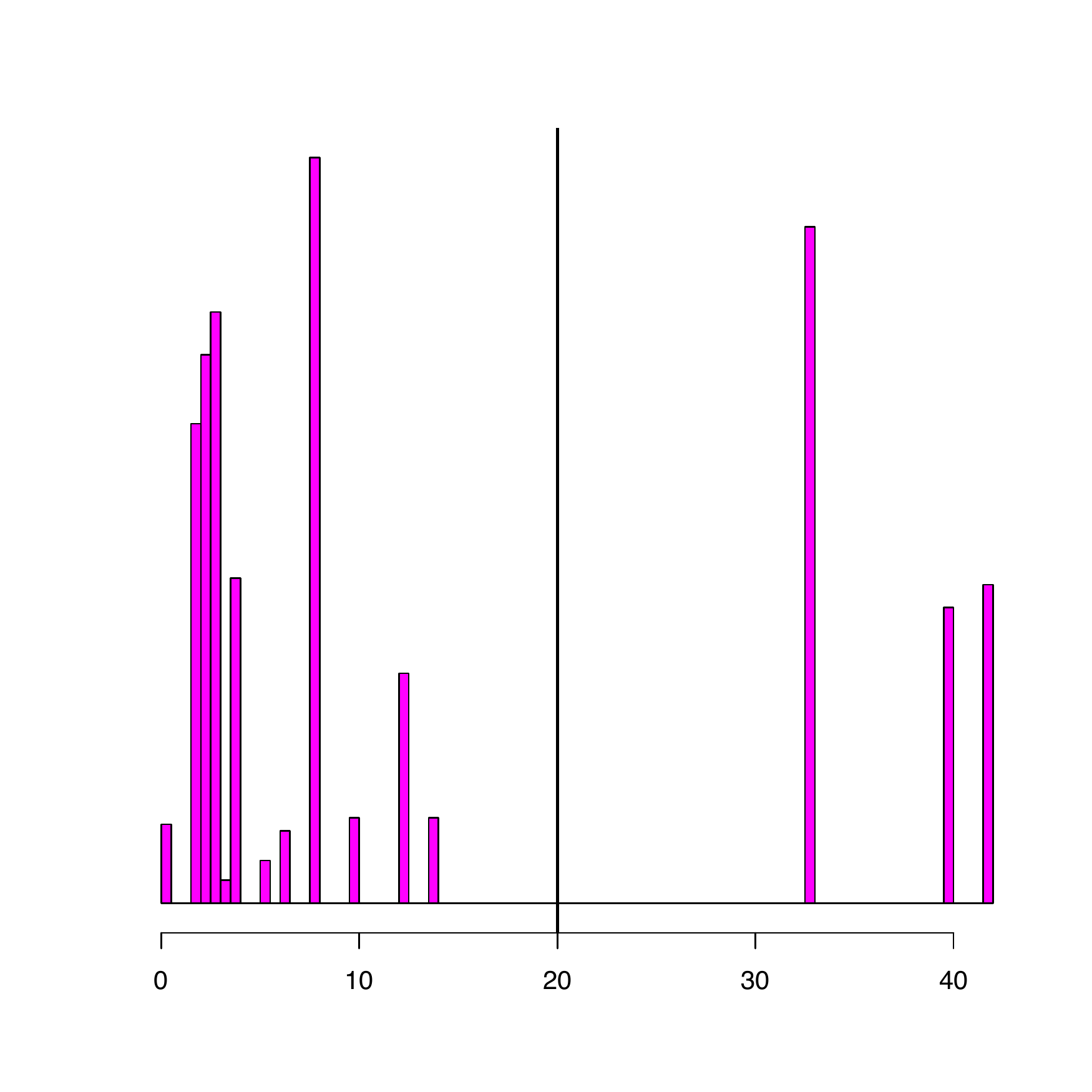}   &
\includegraphics[scale=.2]{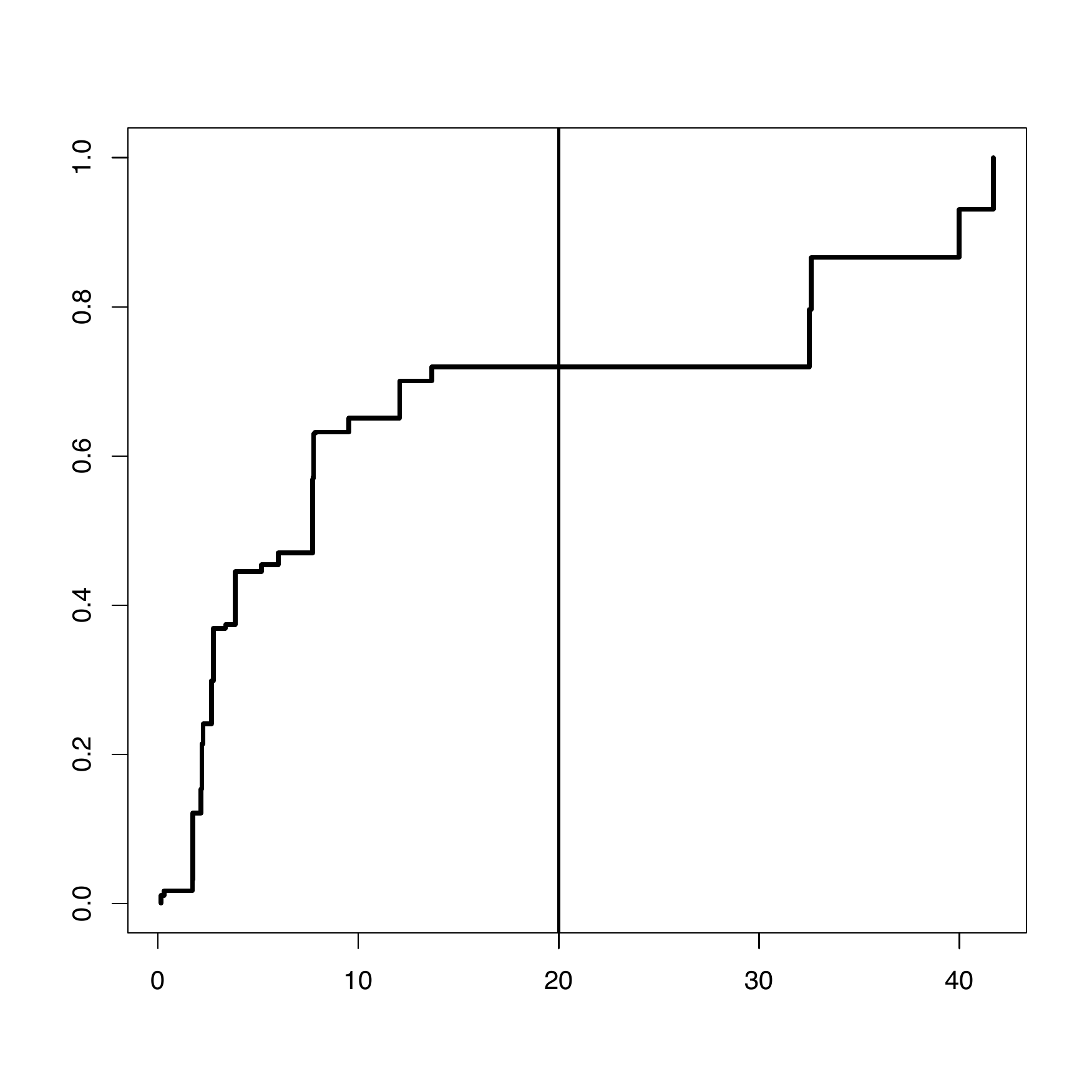}   &
\includegraphics[scale=.2]{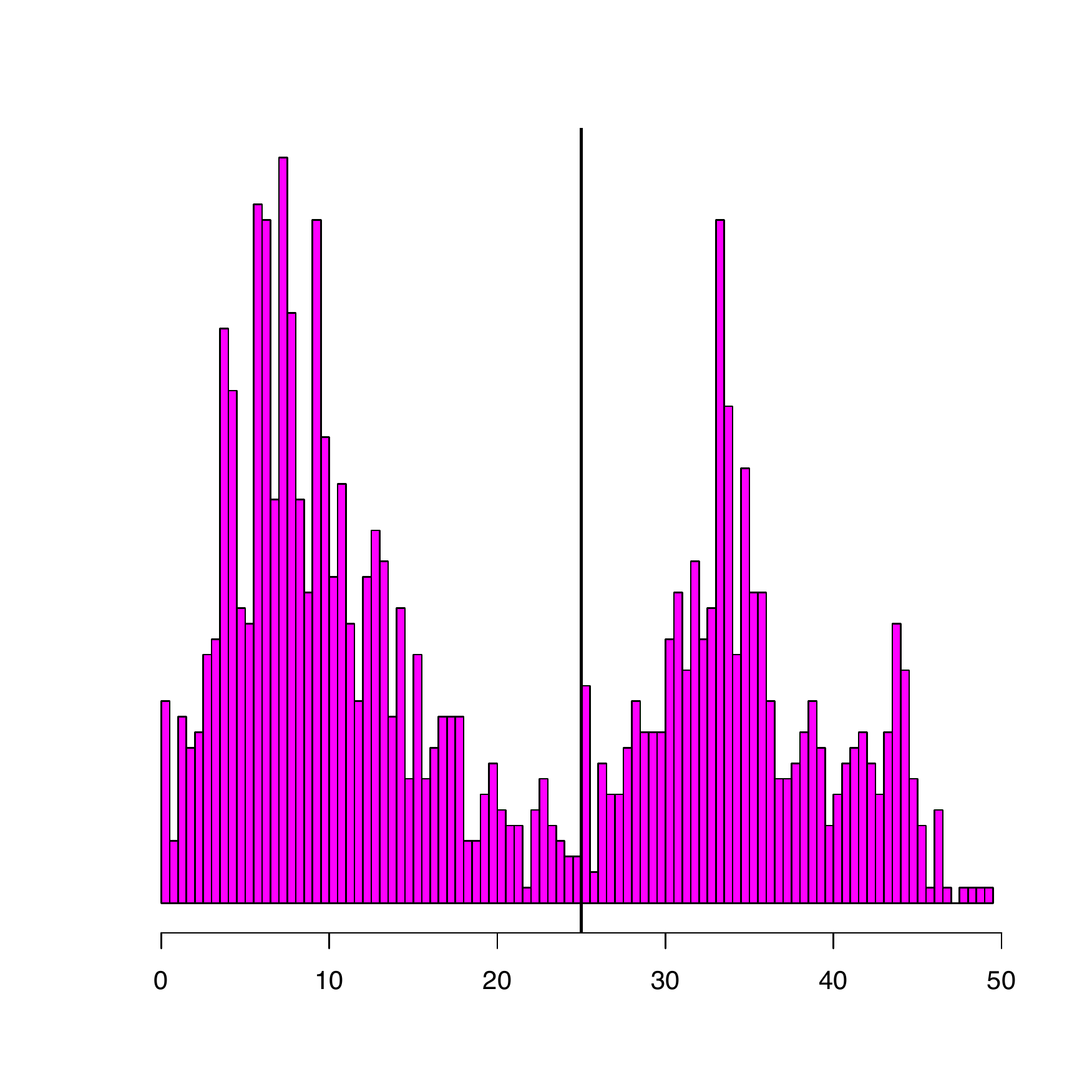}   &
\includegraphics[scale=.2]{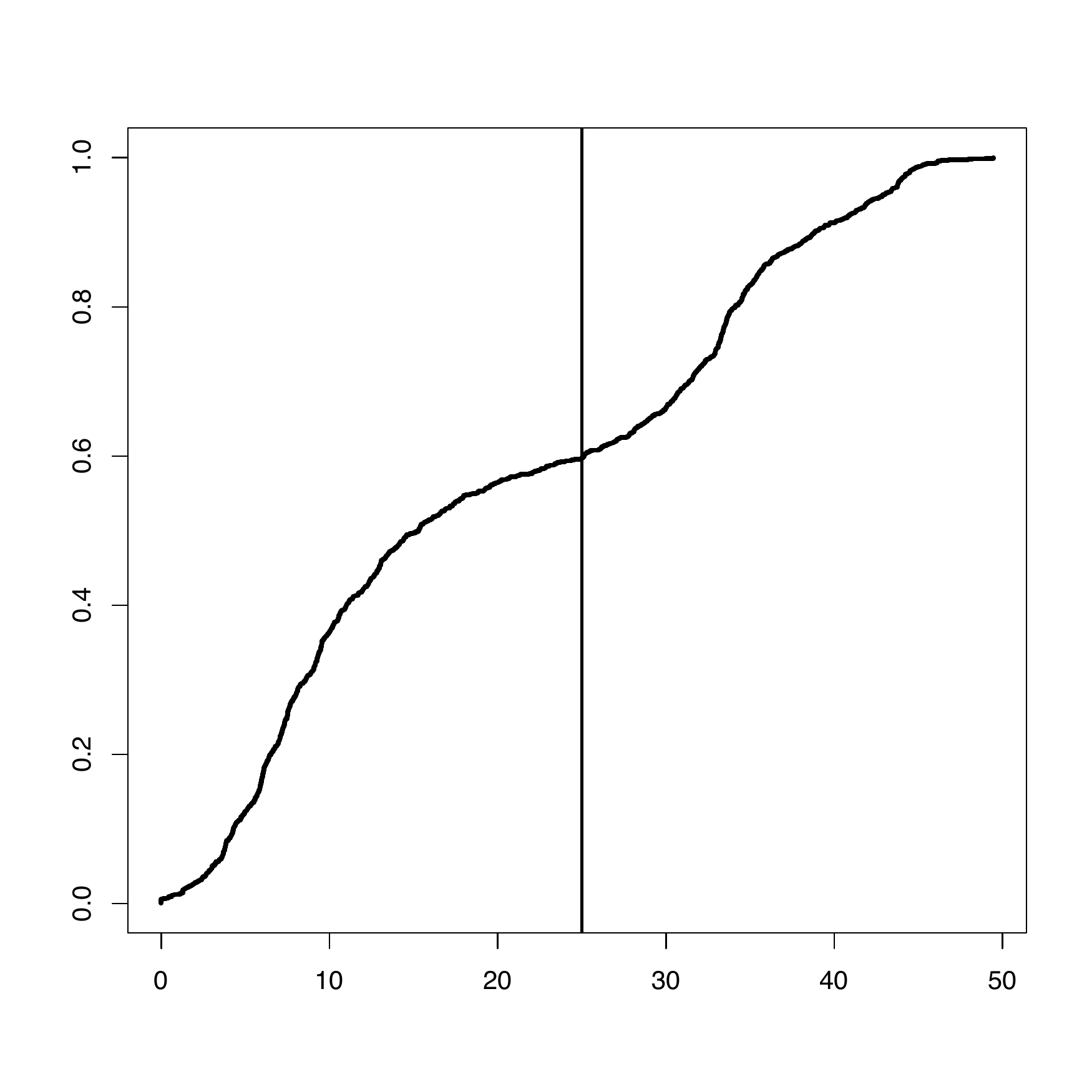}   
\end{tabular}
\end{center}

\begin{center}
\captionof{figure}{\em Two Dimensional example. The four plots
are the signatures' histogram and cdf for $0$-dimensional 
(two left plots) and $1$-dimensional (two right plots) singular 
features. The vertical lines in the four plots are the signature
thresholds $T_d$ for $d=0$ and $d=1$}
\label{fig::2DMarcoB}
\end{center}

\subsection{Three-Dimensional Example.}

This example is similar to the plots shown in Figure
\ref{fig::main-example}, but finding the hidden one-dimensional structure is 
more challenging.
The five singular features hidden in the dataset were found using
the threshold with the null distribution approach.

\vspace{-1cm}
\begin{center}
\begin{tabular}{cc}
\includegraphics[width=0.5\textwidth]{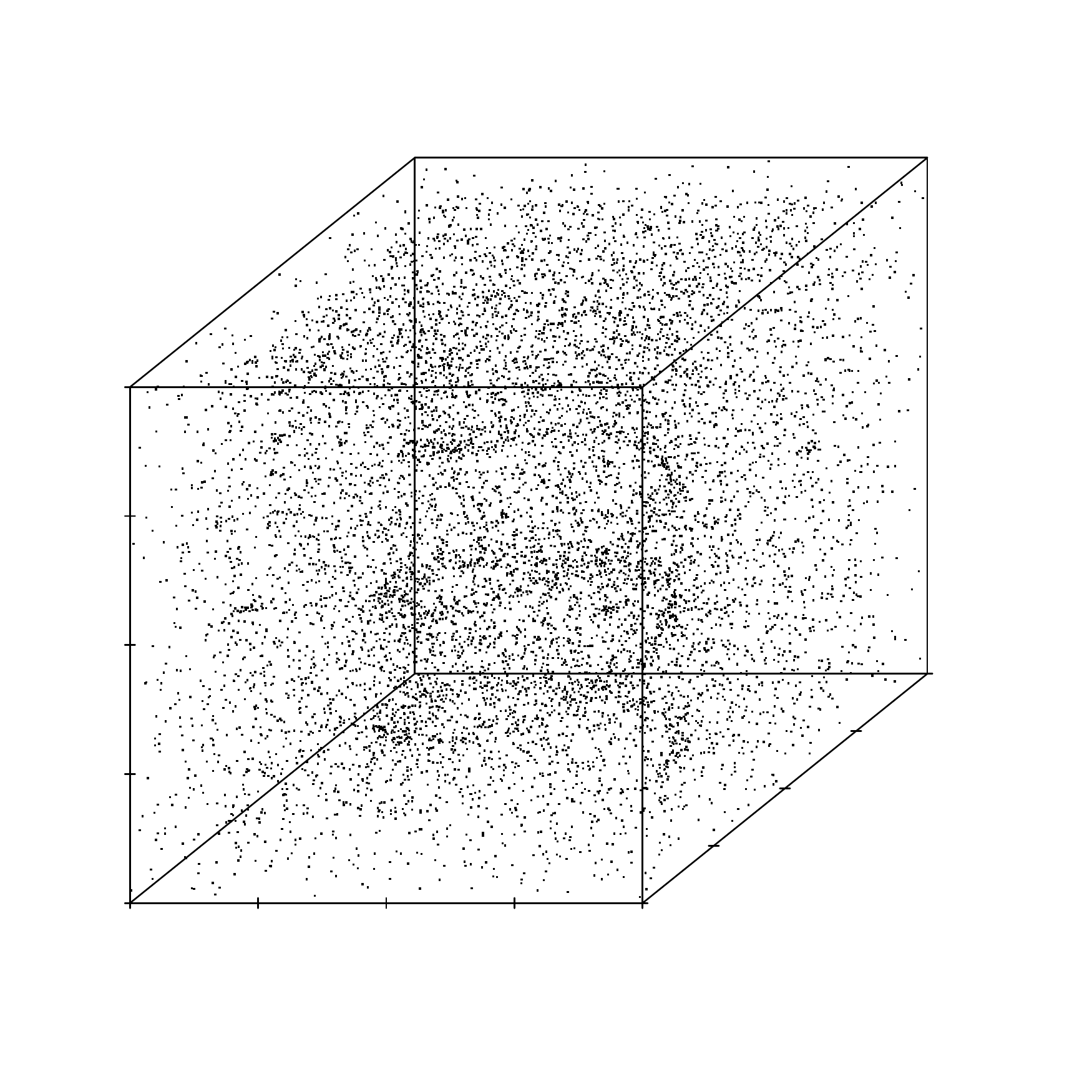} 
\hspace{-3.5em} & \hspace{-3.5em} 
\includegraphics[width=0.5\textwidth]{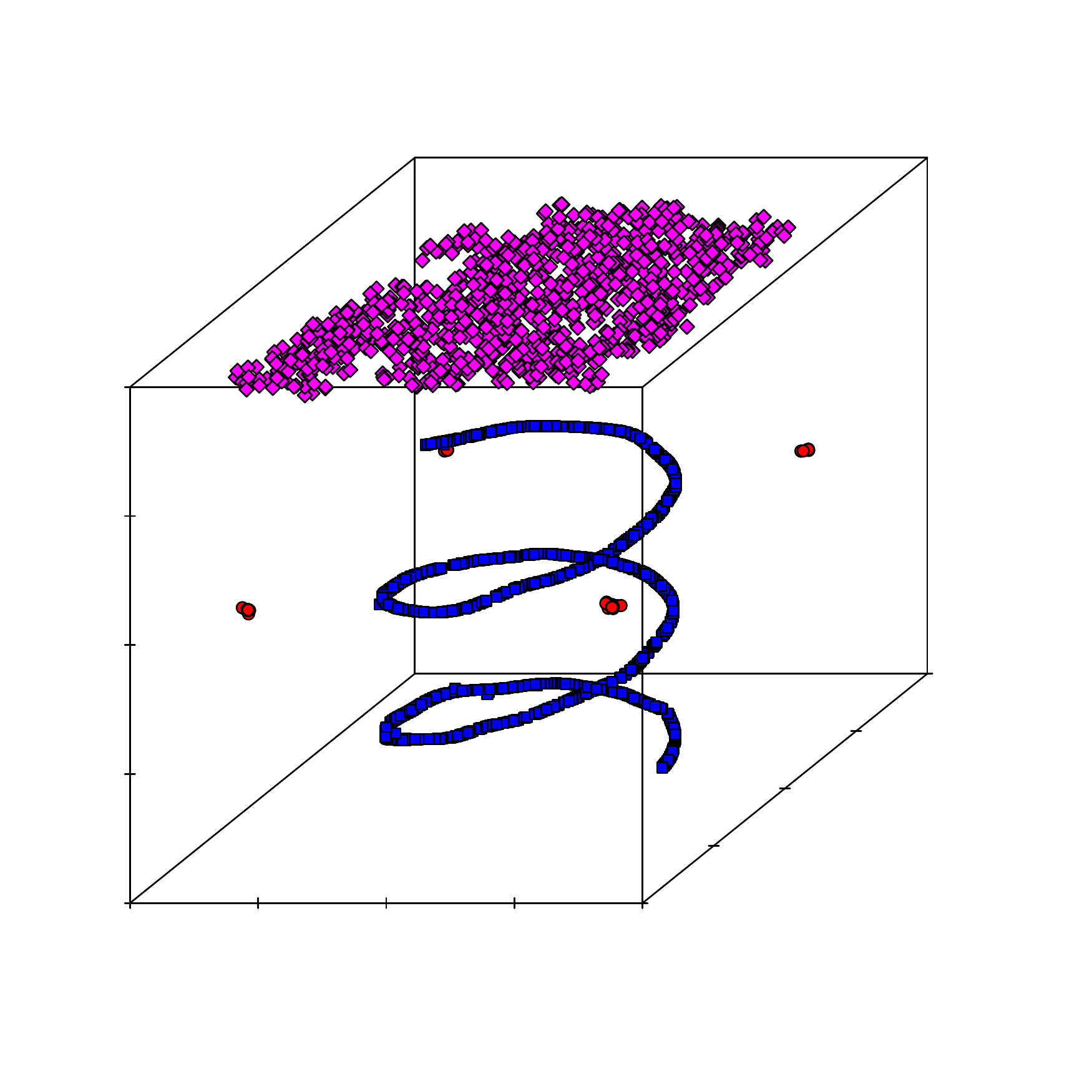} 
\end{tabular}
\vspace{-.6cm}
\captionof{figure}{\em Left: Dataset 
Right: our algorithm extracts four 0-dimensional  
structures (modes), one 1-dimensional structure 
(a corkscrew) and one 2-dimensional structure (plate)
from the point cloud.}
\label{fig::3DIsa}
\end{center}

\subsection{Spectral Clustering Example}

In Figure \ref{fig::spectral}
we analyze the data from the example in \ref{sub::Ex1} using 
spectral clustering.
There are many different versions of spectral clustering.
We use the default method in the R package {\tt kernlab}.
The left plot shows a dataset with no noise.
The colors correspond to the clusters.
In this case, the clustering is perfect.

\vspace{-.6cm}
\begin{center}
\begin{tabular}{cc}
\includegraphics[scale=.3]{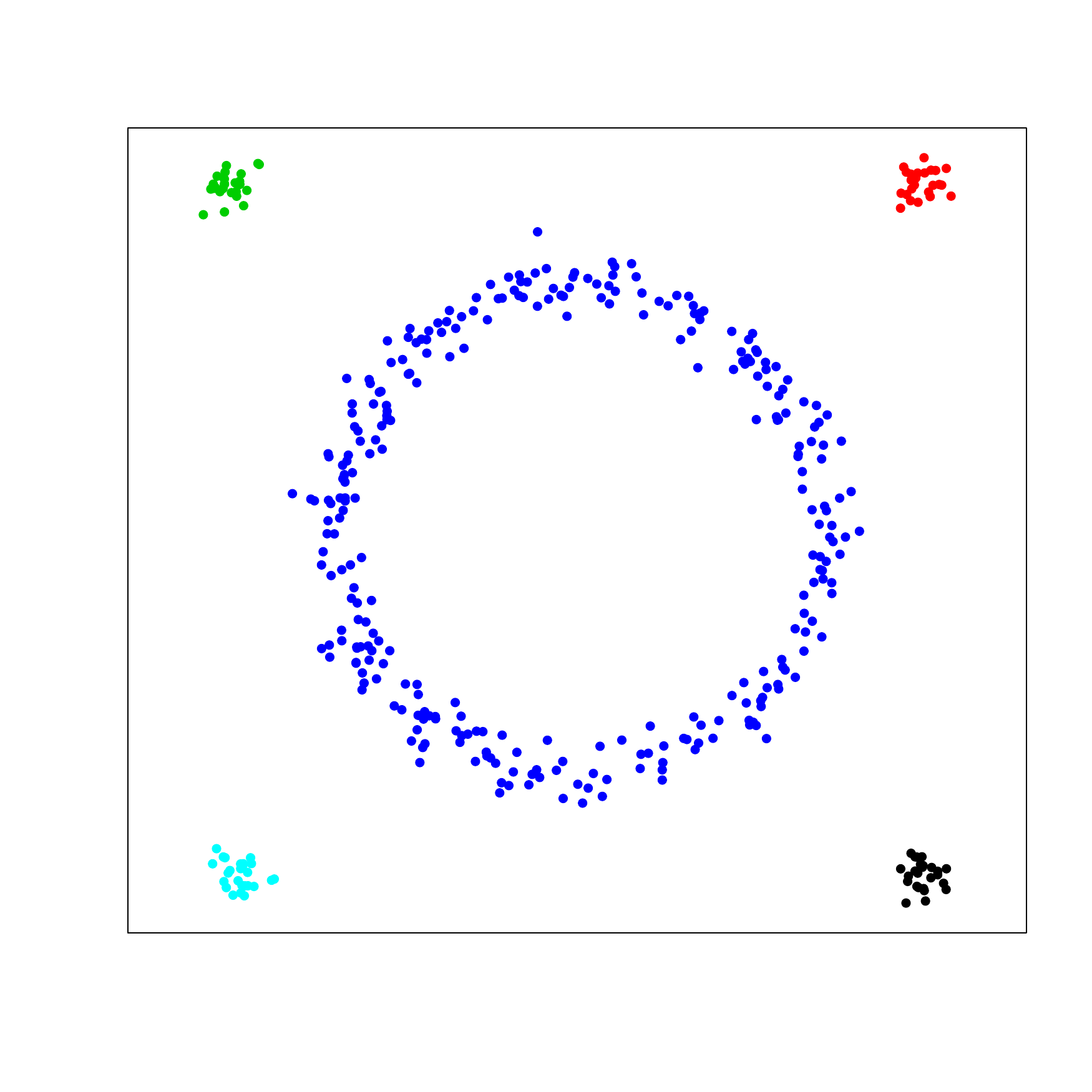} & 
\includegraphics[scale=.3]{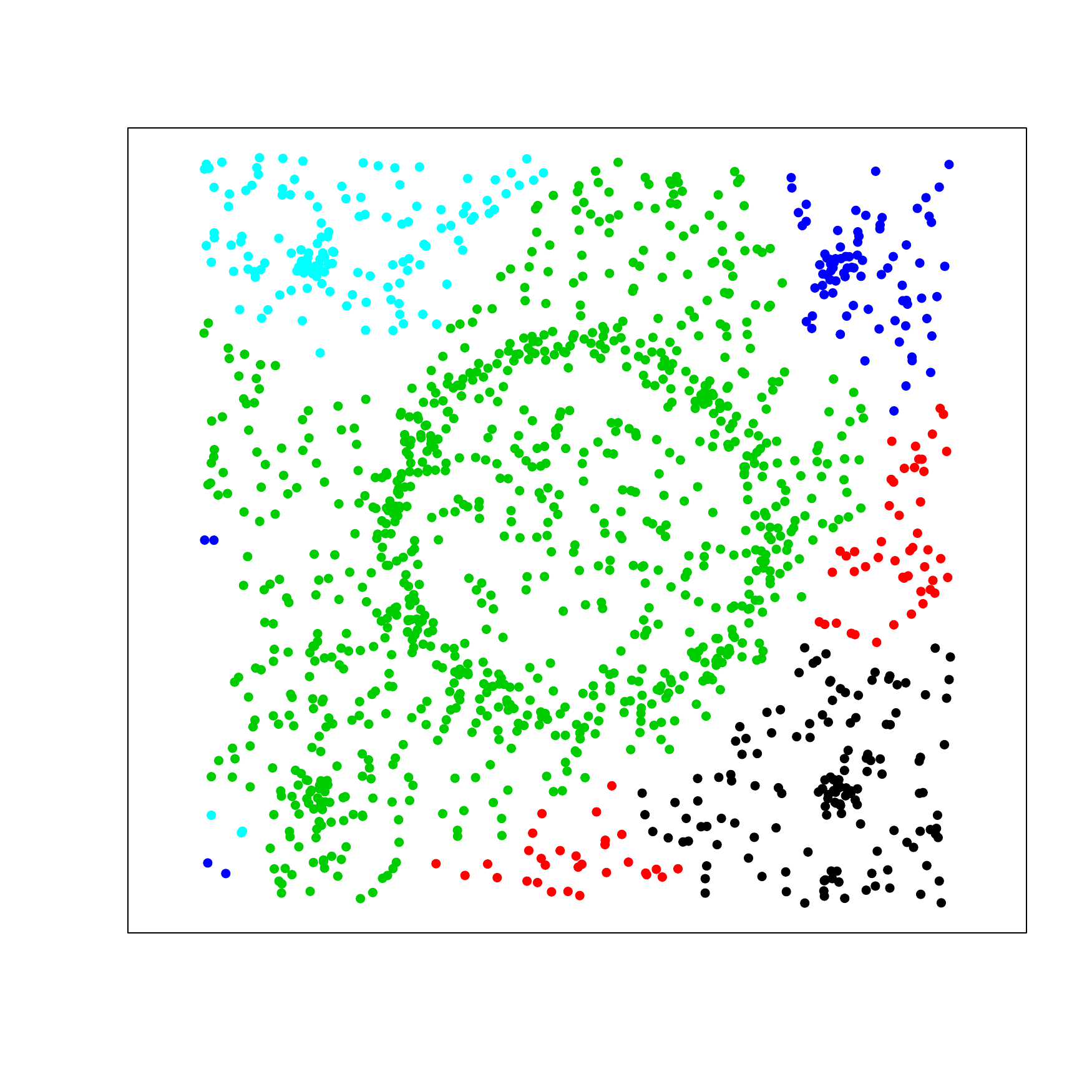}
\end{tabular}

\vspace{-1cm}
\captionof{figure}{\em Spectral clustering applied to the data 
from Example 1. The left plot shows a dataset with no noise.
The colors correspond to the clusters.
In this case, the clustering is perfect.
The right plot shows what happens when we add noise.
In this case spectral clustering fails.}
\label{fig::spectral}
\end{center}

The right plot shows what happens when we add noise.
In this case, spectral clustering fails.
In both cases, the output is two dimensional; spectral clustering does not
output zero or one dimensional features.
Also, we had to specify the correct number of clusters by hand.
Thus we see several advantages of singular feature finding;
it handles noise, it separates clusters by dimension,
and it does not require that we specify the number of clusters.

\subsection{Intersecting Curves}

Figure \ref{fig::intersecting}
shows an interesting example from 
Arias-Castro, Lerman and Zhang (2103).
The data, shown in the top left plot,
fall on two intersecting curves.
The distribution is in fact singular.

Arias-Castro, Lerman and Zhang (2103)
develop a method based on local PCA for 
analyzing such data.
Here we are interested in what happens if we apply singular feature finding.
We focus on $d=1$ as this is obviously the case of interest here.
The top middle plot shows that the method

\begin{center}
\begin{tabular}{ccc}
\includegraphics[scale=.25]{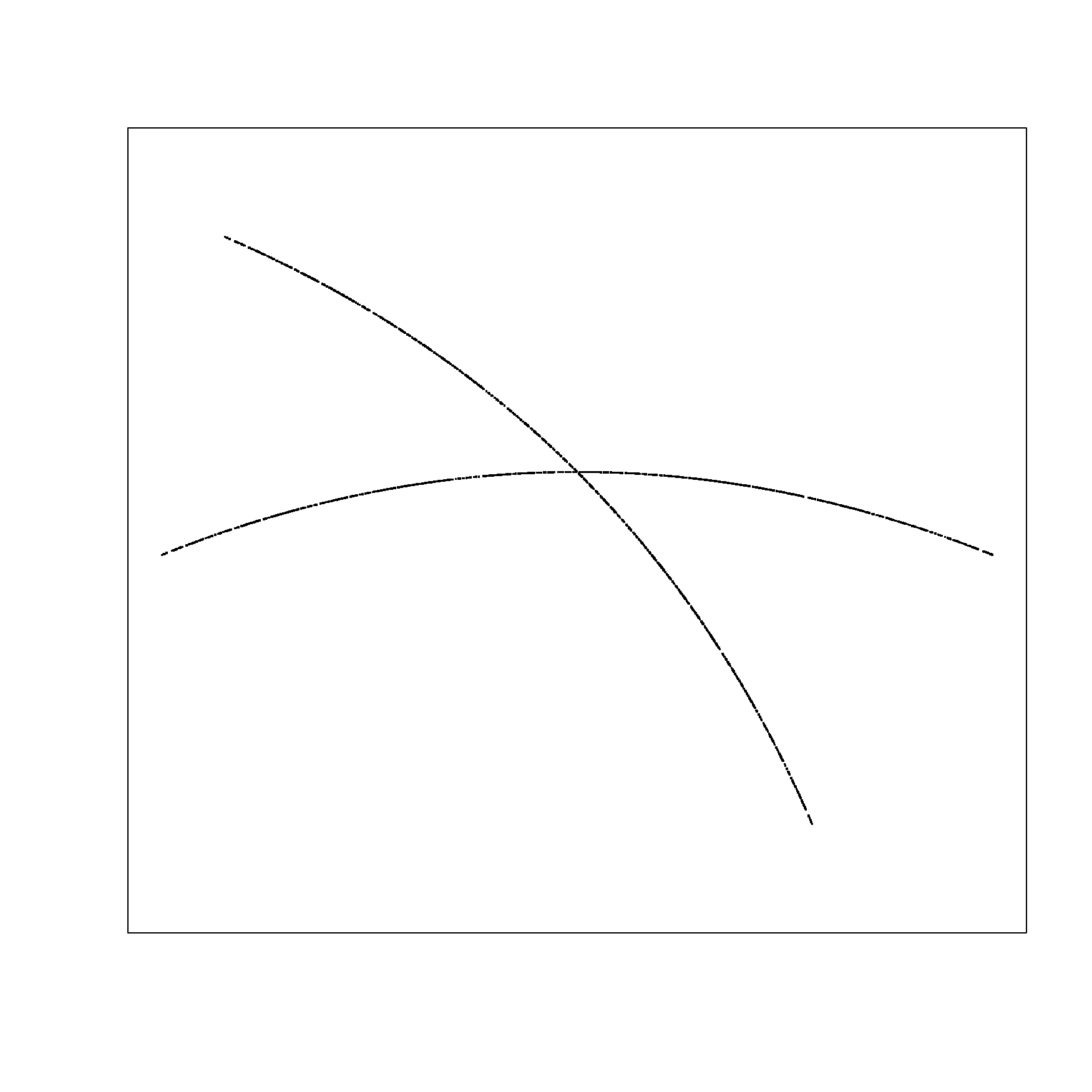} & 
\includegraphics[scale=.25]{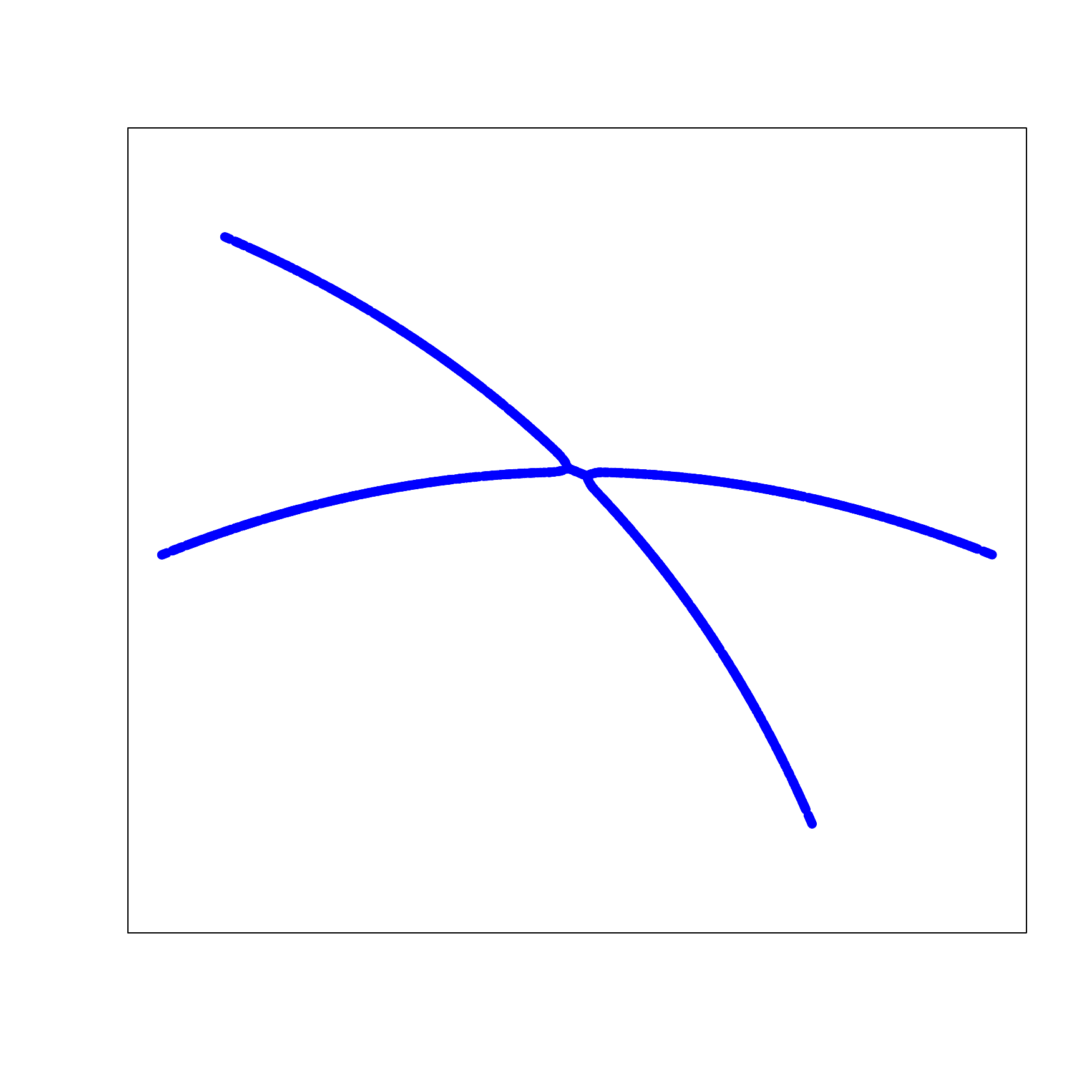} &
\includegraphics[scale=.25]{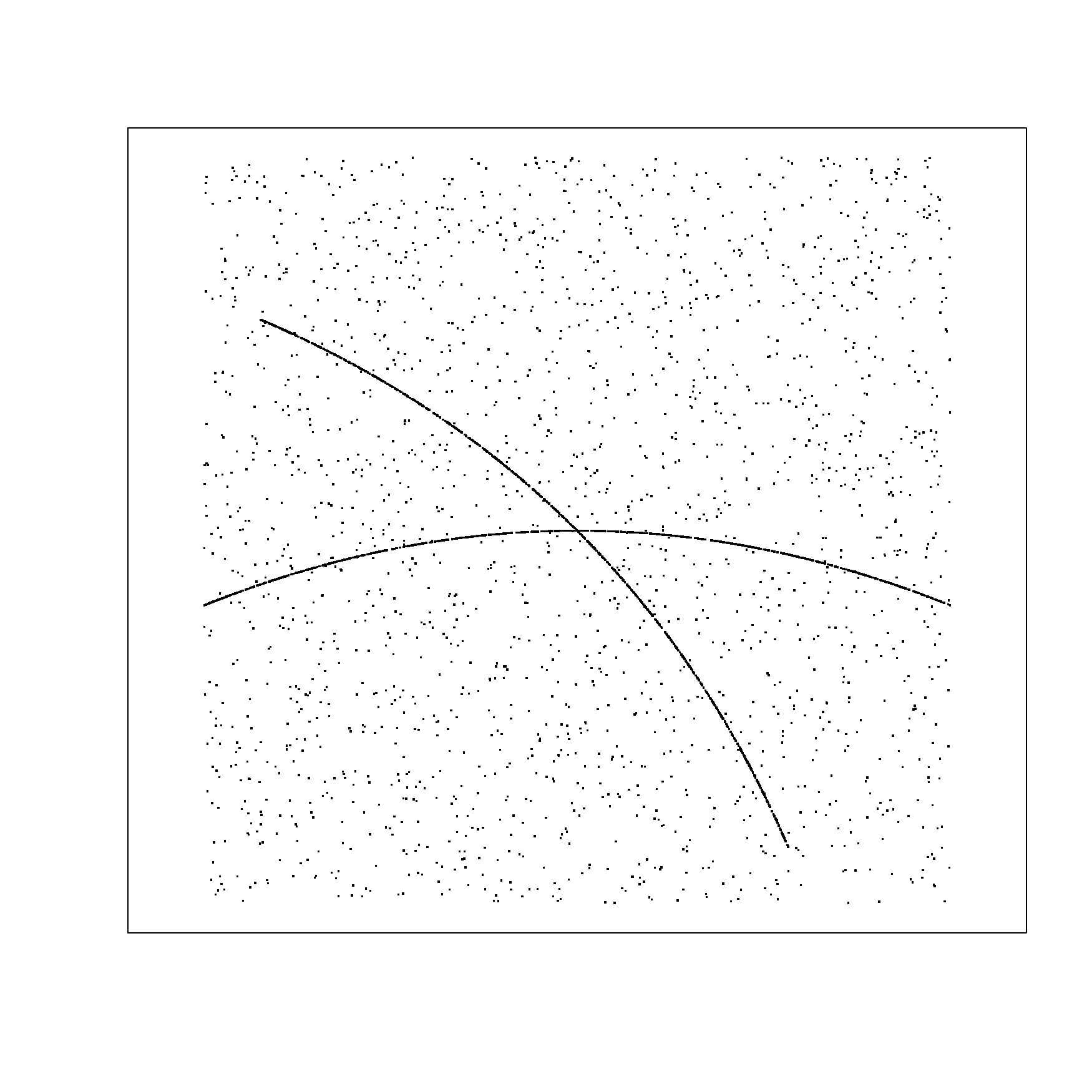} \\ 
\includegraphics[scale=.25]{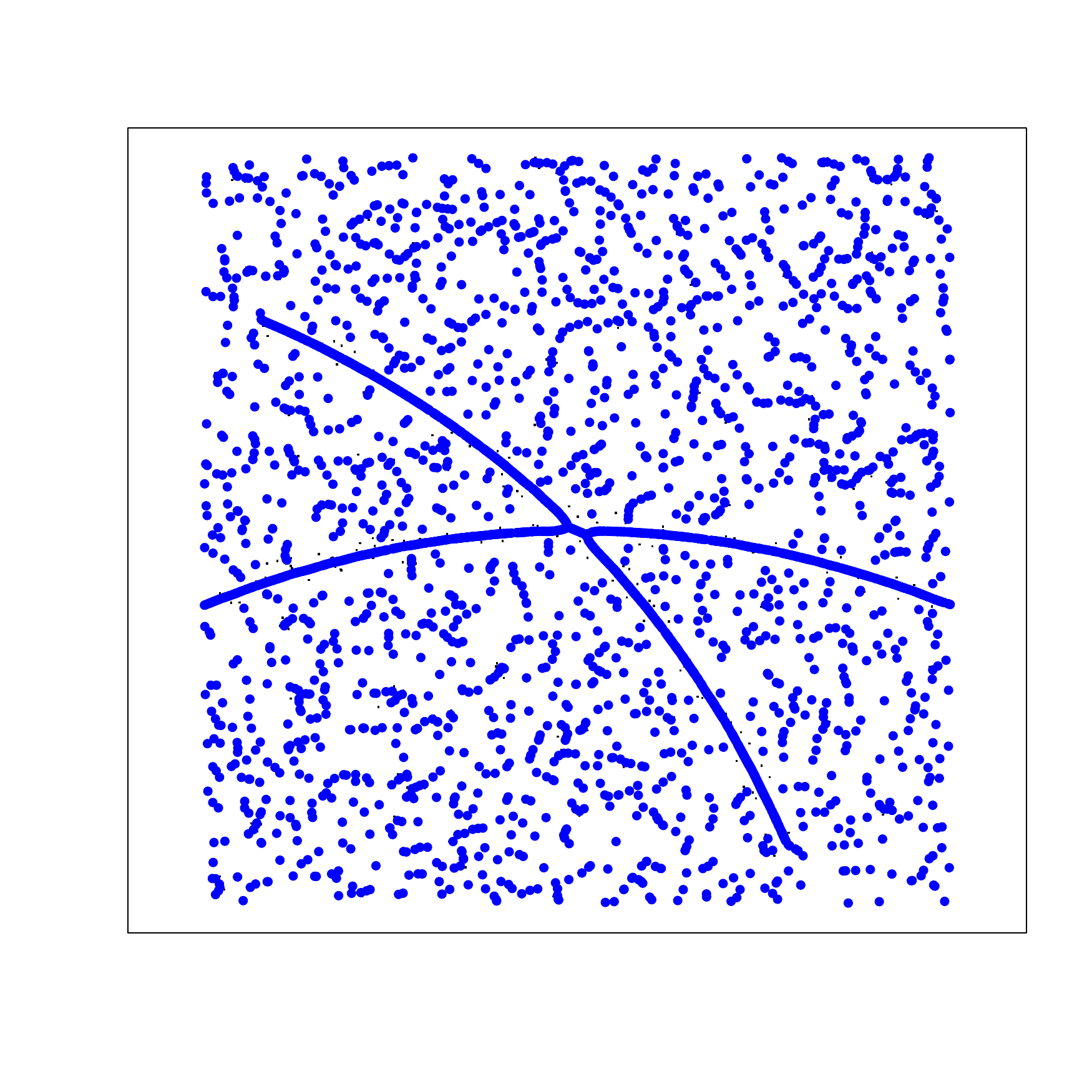}&
\includegraphics[scale=.25]{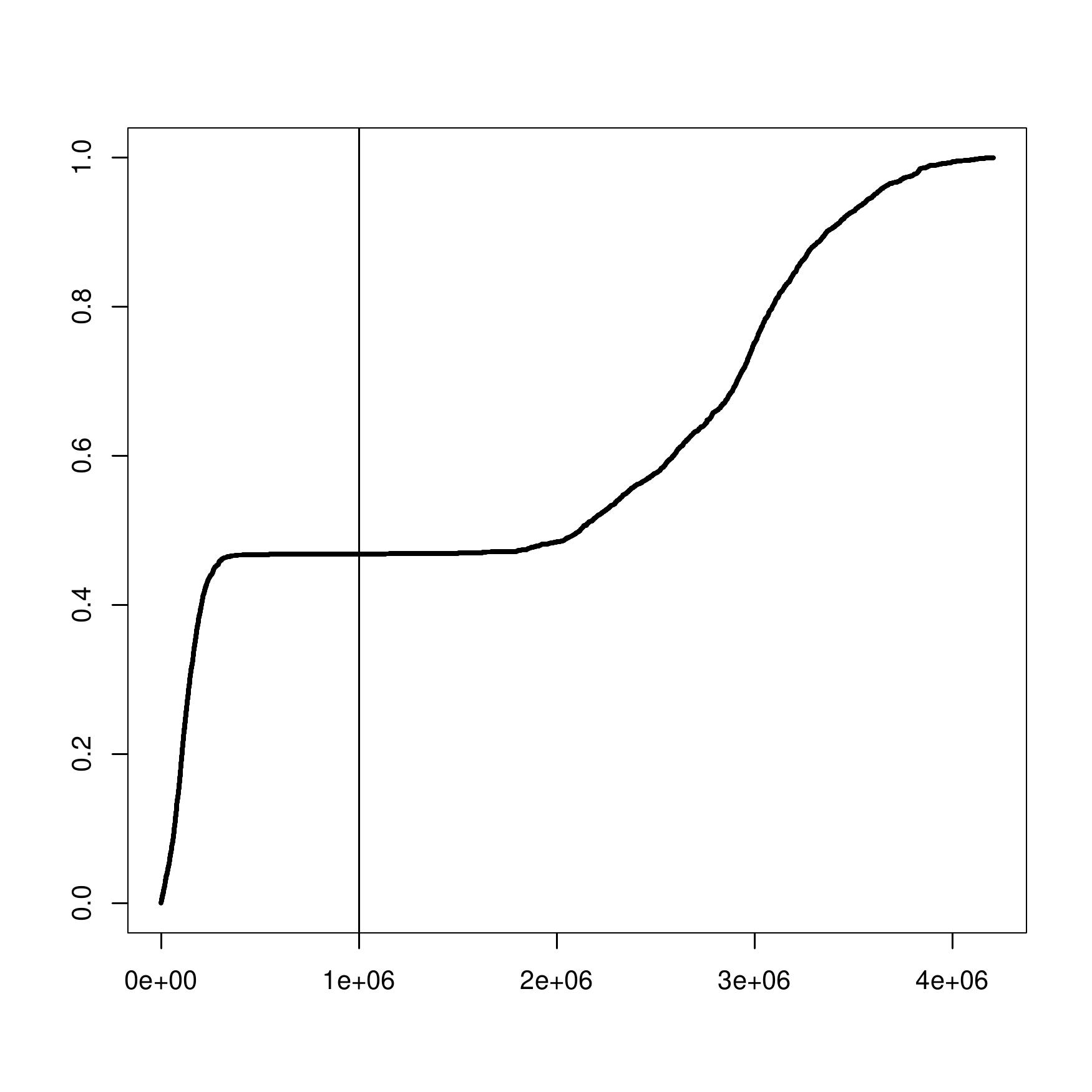} &
\includegraphics[scale=.25]{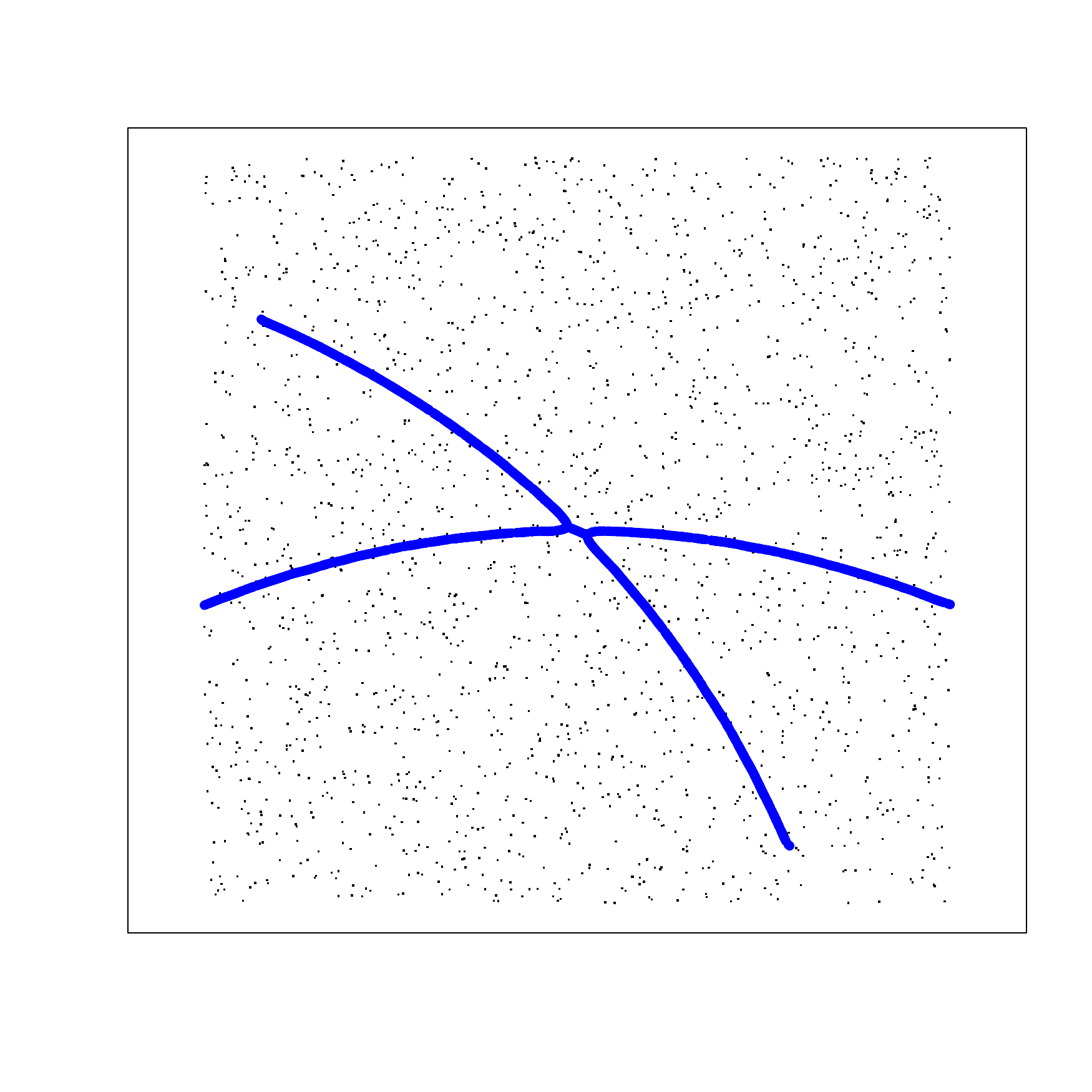}
\end{tabular}
\captionof{figure}{\em The Intersecting Curves Examples.
TOP. Left: noiseless data. Middle: the resulting singular feature.
Right: noisy data. BOTTOM. Left: singular feature without 
signature thresholding. Note the many spurious points (dark dots),
identified as dimension zero features. Middle: cdf of the signatures. 
Right: singular feature after thresholding. Here the noise
points (light dots) are included in the plot but they are not part of $\hat R_1^\dagger$.}
\label{fig::intersecting}
\end{center}

returns a single feature which follows the data quite closely. 
It is not obvious from the plot 
but there is a small bias at the intersection.
Choosing the bandwidth slightly smaller than the Silverman rule reduces this bias.
In fact, the local PCA method of the aforementioned authors avoids this bias
as it was designed for datasets of this form.
Nevertheless, singular feature finding works quite well.
Next we add noise as shown in the top right plot.
If we blindly apply singular feature finding we get the main cluster 
plus many modes as shown in the bottom left plot.
The bottom middle plot shows the cdf of the signatures.
There is a clear flat spot in the cdf and if we threshold at 
some value in this zone we get the feature shown in the bottom 
right plot. Thus, singular feature finding with signature thresholding
work very well. Note that the noise points (light dots) are 
included in the last plot but they are not part of $\hat R_1^\dagger$.

\subsection{Mt. St. Helens Earthquake Dataset}

Figure \ref{fig::quakes} shows the Mt. St. Helens 
earthquake data, taken from  \cite{scott1992multivariate},
\cite{scott2015multivariate}, and 
\cite{duong2008feature}. The data are available at \\
{\tt http://www.stat.rice.edu/~scottdw/ALL.DATASETS/earthquake}.\\
The data consist of measurements of the epicentres of $510$ earthquakes 
which took place beneath the Mt. St. Helens 
volcano before its 1982 eruption. As in Duong et al.,
we take the first three variables longitude (degrees), 
latitude (degrees) and depth (km), from the full set of 
$5$ variables. Following Scott, the depth variable $z$ is 
transformed to $-\log(-z)$, where negative depths indicate 
distances beneath the Earth's surface.

Figure \ref{fig::quakes} shows the $3$-dimensional
data set together with the detected singular features.
We found three singular features of dimension $0$ (maxima) 
and two singular features of dimension $1$ (filaments). 
The filaments shown in the right plot suggest the existence 
of a depth fault-line in the data. We did not find any singular 
feature of dimension $2$.

\begin{flushleft}
\vspace{-.3cm}
\begin{tabular}{cc}
\includegraphics[scale=.5]{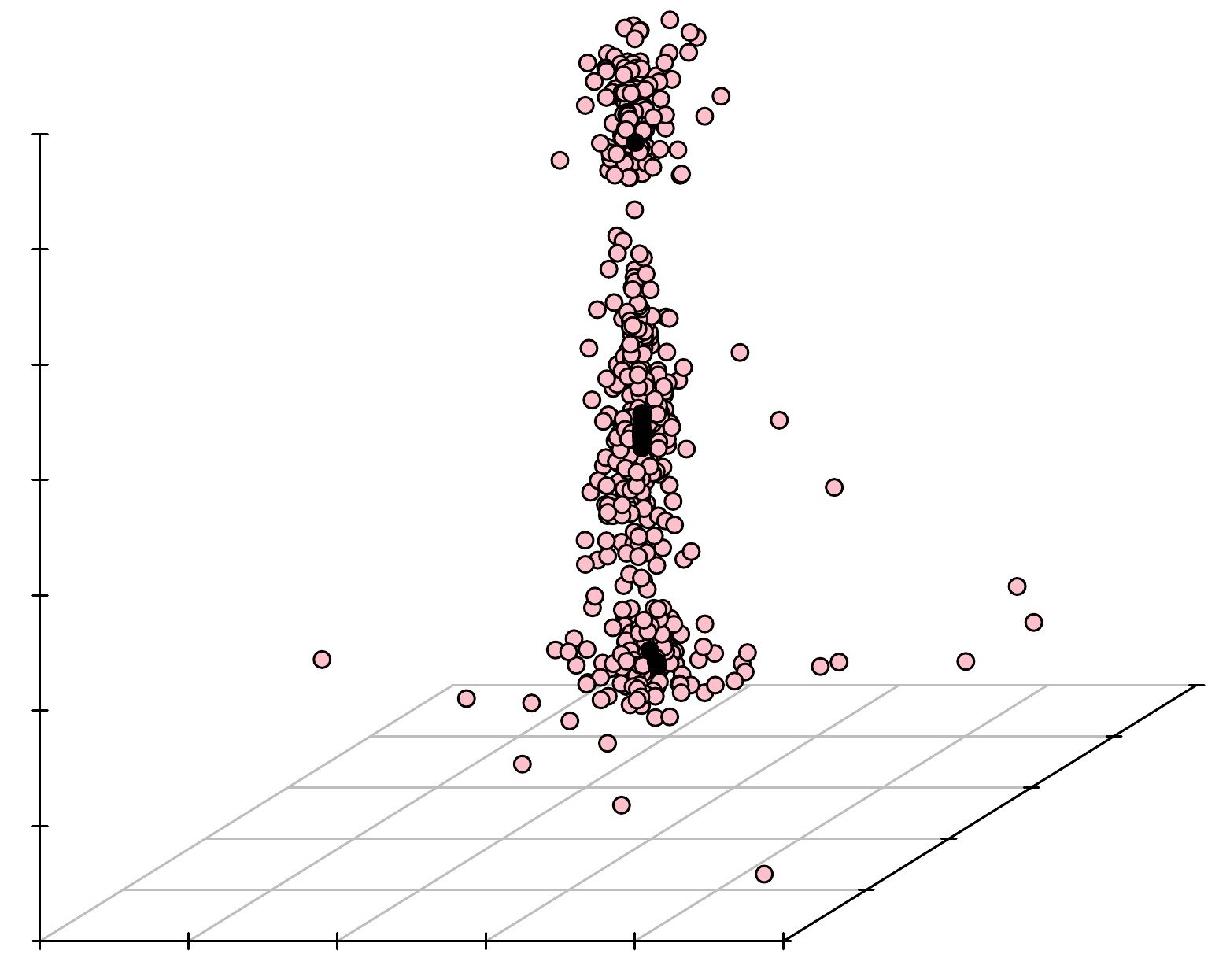} &
\includegraphics[scale=.5]{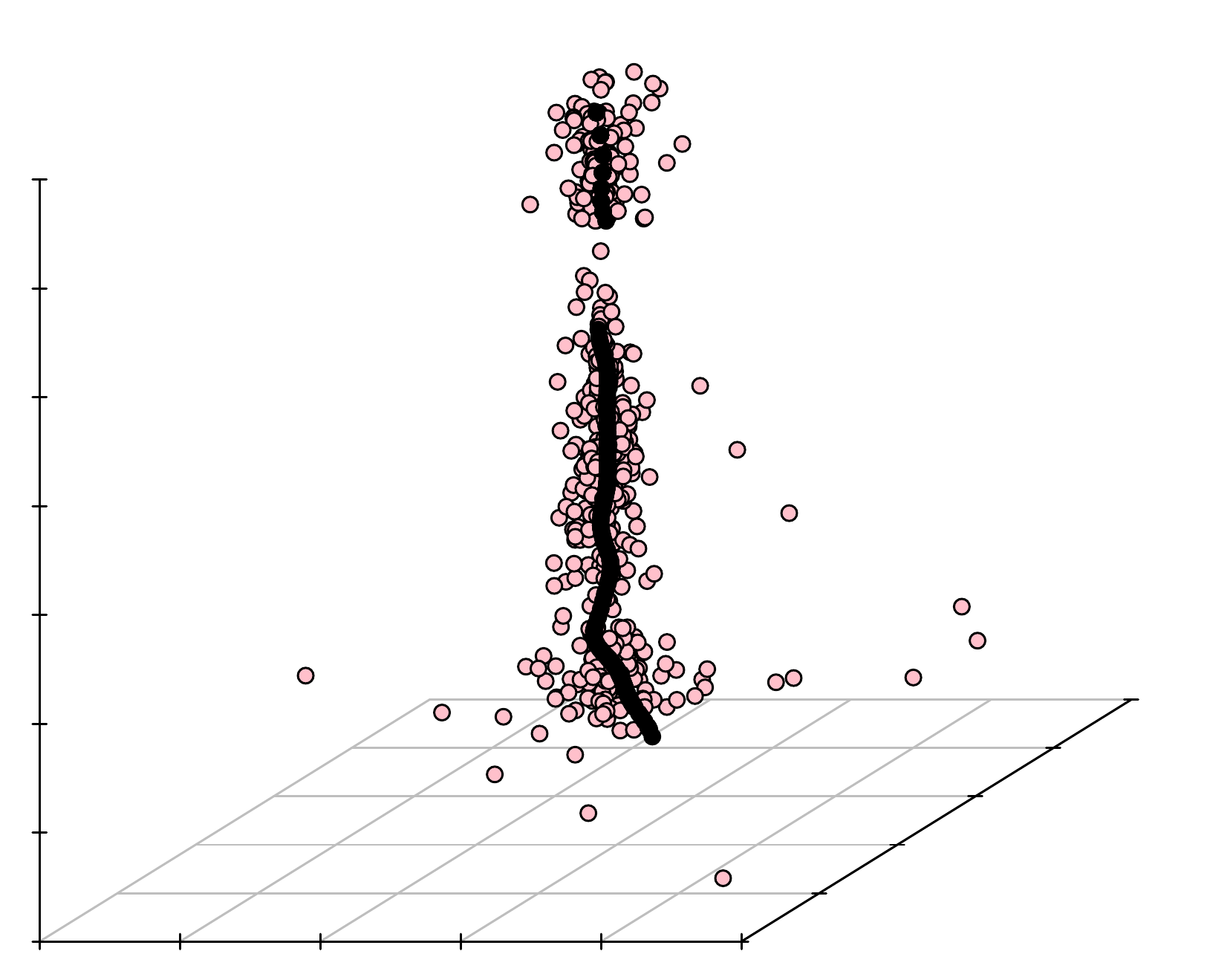} 
\end{tabular}
\captionof{figure}{\em Mt. St. Helen hearthquakes. 
Longitude is shown on the $x$-axis, latitude on the 
$y$-axis, and depth on the $z$-axis. Left: black 
dots show singular features of dimension $d=0$. 
Right: black dots show singular features of 
dimension $d=1$. Singular features of 
dimension $d=2$ were not found.
}
\label{fig::quakes}
\end{flushleft}

\subsection{Voronoi Foam}

Voronoi foam models
are simulations models invented in 
\cite{icke1991galaxy, van1994fragmenting, van2009cosmic}.
These models are meant to mimic astronomical galaxy surveys
by having a mixture of modes, filaments, walls and noise.
The data are created as follows.
We select $N$ random points in the cube $[-1,1]^3$.
We then form the Voronoi partition defined by these points.
The partition is a set of polygons which intersect at points, 
lines and walls. Random points are added at these points, 
lines and walls.

\vspace{-.5cm}
\begin{center}
\begin{tabular}{cc}
\includegraphics[scale=.6]{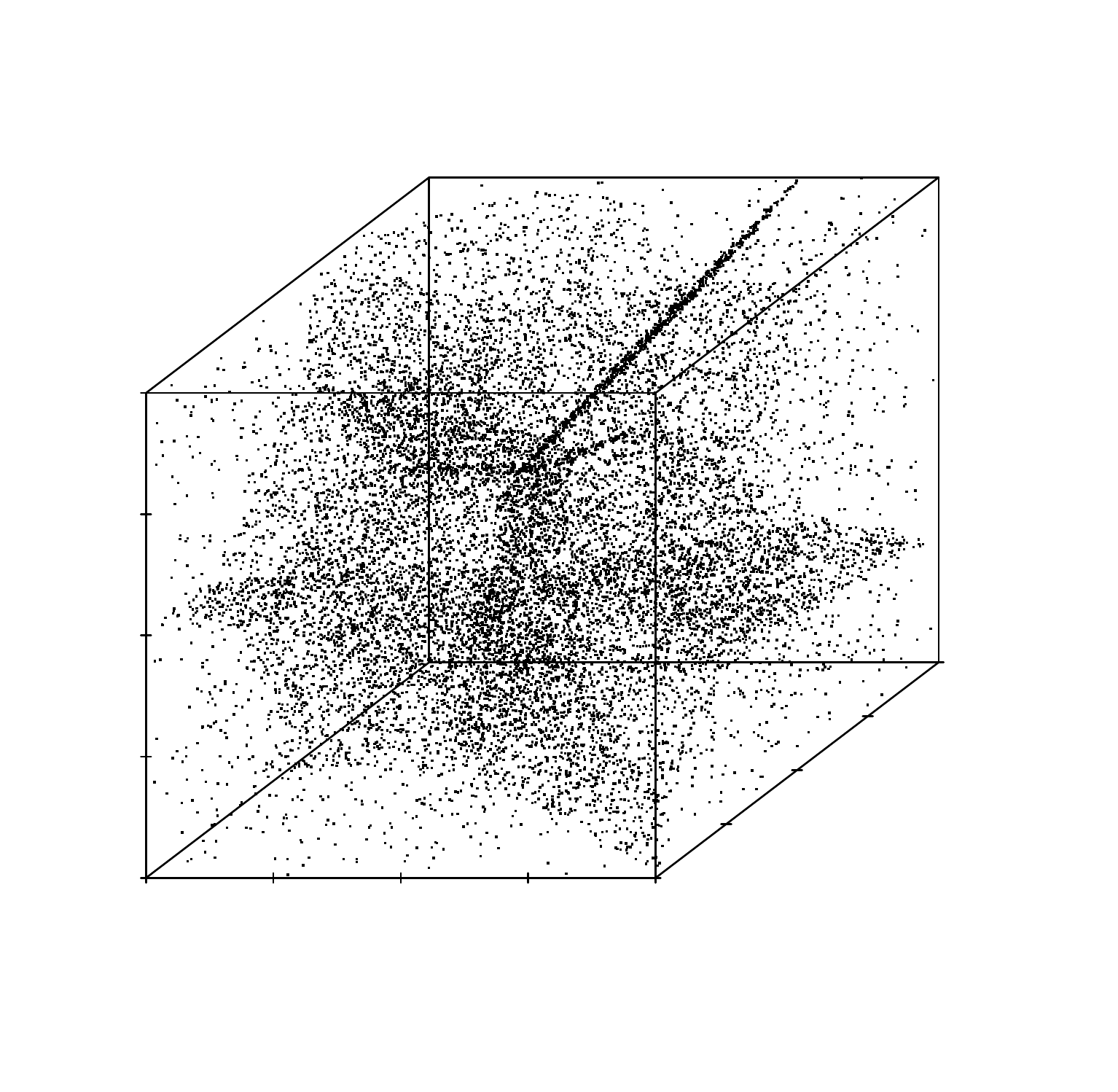} \hspace{-2em} & \hspace{-2em} 
\includegraphics[scale=.6]{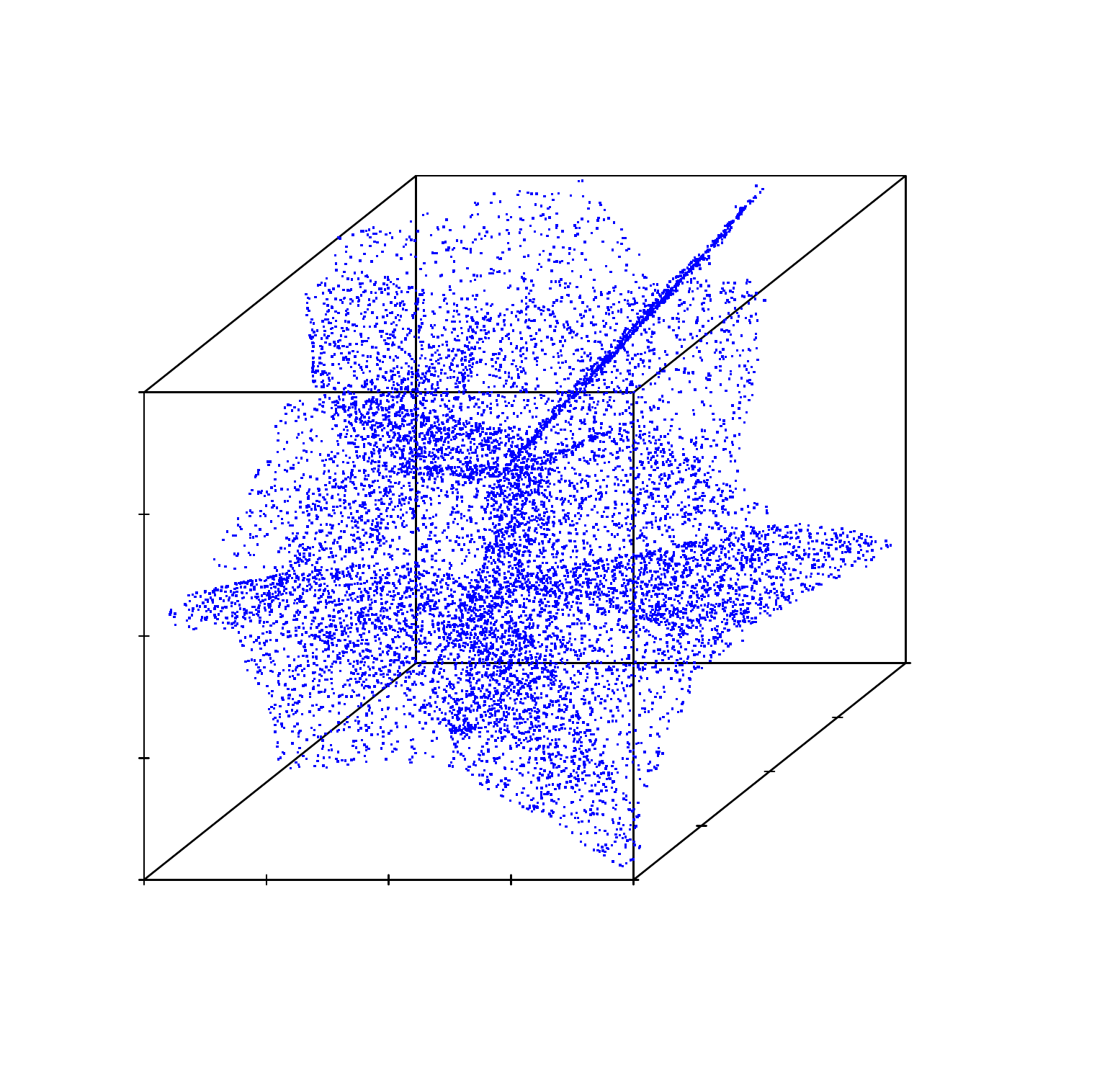} 
\end{tabular}
\vspace{-.4in}
\captionof{figure}{Left: data from a Voronoi foam model generating walls.
Right: High signatures ridge points ($d=2$). 
Althought it might be difficult to see in the plots, note that in 
the right plot the noise is gone and the walls remains.}
\label{fig::foam}
\end{center}

\medskip
By varying the proportion of points on these features on can control 
the amount of modes, filaments and walls.
Finally, clutter points are added which are drawn from a uniform on 
$[-1,1]^3$.
We should point out that this is the simplest Voronoi foam model;
more sophisticated version that also include dynamics (moving galaxies)
can be found in 
\cite{icke1991galaxy, van1994fragmenting, van2009cosmic}.
Figure \ref{fig::foam} shows a dataset generates this way.
In this example, we will focus on wall finding ($d=2$).
To the best of our knowledge, there are no existing wall finding algorithms
that actually output structures of dimensions $d=2$.
The right plot shows the result after filtering by the signatures.
The blue points are the ridge points with a large signature.
We see that the algorithm does a nice job of finding the dominant structures.
It may be difficult to see in the plot,
but in the right plot, the noise is gone and the walls remain.

\section{Asymptotic Theory}
\label{sec::theory}

Here we study the asymptotic properties of the procedure.
We build on results from
\cite{us::2014.ANN}.
To state the results,
we need a few definitions and regularity conditions.
Let
$H'(x) = d{\rm vec}(H(x))/d x^T$ and
let
$\lambda_1(x) \geq \cdots \lambda_D(x)$
be the ordered eigenvalues of $H(x)$.
Let
$V(x)$ be the $(D-d)\times D$ matrix with columns that are the
$D-d$ eigenvectors
corresponding to the $D-d$ smallest eigenvalue.
Let
$L(x) = V(x) V(x)^T$ be the projection matrix corresponding to this subspace.
Recall that $R_d^\dagger(t) = \{x\in R_d:\ S_d(x) \geq t\}$.
Let $B(x,r)$ denote a ball of radius $r$ centered at $x$.
Let
$$
A \oplus r = \bigcup_{x\in A} B(x,r).
$$
The Hausdorff distance ${\rm Haus}(A,B)$ between
two sets $A$ and $B$ is defined by
$$
{\rm Haus}(A,B) = \inf\Bigl\{\epsilon:\ 
A \subset B \oplus \epsilon\ {\rm and}\  
B \subset A \oplus \epsilon \Bigr\}.
$$

We assume there exist
$\beta,\delta,\epsilon , C >0$ such that the following assumptions hold:

(A1) $p$ is supported
on a compact set ${\cal X}\subset \mathbb{R}^D$
and is bounded and continuous and has five, bounded, continuous, square integrable derivatives.
The first, second and third derivatives of $p$ vanish on the boundary of ${\cal X}$.

(A2) The kernel $K$ satisfies the regularity conditions 
in \cite{gine2002rates}, 
\cite{Arias-Castro2013} and
\cite{chacon2011asymptotics}.

(A3) For all $x\in R_d\oplus \delta$,
$\lambda_{d+1}(x) < -\beta$ and
$\lambda_d(x) -\lambda_{d+1}(x) >\beta$.

(A4) For each $x\in R_d\oplus \delta$,
$$
|| (I-L(x))\, g(x)||\ ||{\rm vec}(H'(x))||_{\infty} < \frac{\beta^2}{2 D^{3/2}}.
$$

(A5)
The eigenvalues 
$\lambda_1(x),\ldots, \lambda_D(x)$
are distinct.

(A6) 
Let $T_d>0$ be a fixed constant.
For all
$-\epsilon < \gamma < \epsilon$ we have
${\rm Haus}\Bigl(R_d^\dagger(T_d + \gamma),R_d^\dagger(t)\Bigr)\leq C \gamma$.

(A7)
If $x\in R_d$ but $x\notin R_d\oplus \delta$,
then $S_d(x) < T_d - \epsilon$.

Assumptions (A1)-(A2)
ensure the consistency of the estimates of the density and its derivatives.
The regularity conditions on the kernel in (A2) are standard and we have not listed them to save space.
Assumptions (A3) and (A4)
are from \cite{us::2014.ANN}.
These conditions imply that the ridge is well defined.
These assumptions are needed so that $R_d$ can be estimated.
Assumption (A5) is probably stronger than needed.
It implies that the eigenvalues of the Hessian 
are continuously differentiable functions.
It may be possible to relax this condition
but the analysis would become much more complicated.
(A6) says that the set
$R_d^\dagger(t)$ changes continuously in $t$.
Condition (A7) says that points far from the ridge have a small signature.

Let us define
\begin{equation}
\psi_n = h^2 + \sqrt{\frac{\log n}{n h^{D+6}}},\ \ \ \ \ 
r_n    = h^2 + \sqrt{\frac{\log n}{n h^{D+4}}}.
\end{equation}
The following lemma follows from
Lemma 3 of 
\cite{Arias-Castro2013}
and Theorem 4 of
\cite{chacon2011asymptotics}.

\begin{lemma}
Suppose that
$h\to 0$ and
$n h^{D+6}/\log n \to \infty$.
Assume (A1) and (A2)
There exists $0 < b < 1$ such that:
\begin{align*}
\sup_{(\log n/n)^{1/D} \leq h \leq b^{1/D}}
\sup_{x\in {\cal X}} |\hat p_h^{(\ell)}(x) - p^{(\ell)}(x)| & \leq c_1 h^2 + c_2 \sqrt{\frac{\log n}{n h^{D+2\ell}}}
\end{align*}
for $\ell = 0,1,2,3$.
\end{lemma}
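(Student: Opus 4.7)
The plan is to combine the two cited results via the standard bias-variance decomposition for kernel derivative estimators, carefully tracking uniformity in both $x$ and $h$.

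First, write
\[
\hat p_h^{(\ell)}(x) - p^{(\ell)}(x) = \underbrace{\hat p_h^{(\ell)}(x) - \E\hat p_h^{(\ell)}(x)}_{\text{stochastic}} + \underbrace{\E\hat p_h^{(\ell)}(x) - p^{(\ell)}(x)}_{\text{bias}}.
\]
For the bias, I would Taylor expand the convolution $\E\hat p_h^{(\ell)}(x) = (K_h * p^{(\ell)})(x)$ (after shifting derivatives onto $p$, using the assumption in (A1) that the first three derivatives of $p$ vanish on $\partial\mathcal X$, so no boundary terms appear for $\ell\le 3$). Since $p$ has five bounded continuous derivatives and $K$ is a second-order kernel by (A2), the standard argument yields $\|\E\hat p_h^{(\ell)} - p^{(\ell)}\|_\infty \le c_1 h^2$ uniformly in $x\in\mathcal X$ and in $h$ over any bounded range.

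For the stochastic term at a fixed bandwidth, I would invoke Theorem~4 of \cite{chacon2011asymptotics}, which gives uniform-in-$x$ rate $\sqrt{\log n/(n h^{D+2\ell})}$ for derivative kernel estimators. The argument there applies the Talagrand/Giné–Guillou concentration inequality to the VC-type class $\{y\mapsto h^{-D-\ell}K^{(\ell)}((x-y)/h): x\in\mathcal X\}$; the variance of each term scales like $h^{-(D+2\ell)}$, which explains the exponent. To promote this to uniformity in $h$ over $[(\log n/n)^{1/D}, b^{1/D}]$, I would then apply Lemma~3 of \cite{Arias-Castro2013}, whose proof sets up a geometric grid of bandwidths $h_k = 2^k h_{\min}$, applies the fixed-$h$ bound on each $h_k$ via a union bound (the log factor absorbs the $O(\log n)$ grid size), and interpolates between grid points using Lipschitz continuity of $h\mapsto h^{-D-\ell}K^{(\ell)}(\cdot/h)$.

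The only nontrivial step is verifying that the VC-type entropy bounds required by Giné–Guillou transfer from $K$ to all derivatives $K^{(\ell)}$ up to order $3$; this is precisely what the regularity conditions in (A2) are designed to give, so once those are in hand the combination of the two cited lemmas yields the stated bound with the constants $c_1,c_2$ depending on $\|K^{(\ell)}\|$, $\|p^{(\ell)}\|$ for $\ell\le 5$, and the VC characteristics of the kernel class.
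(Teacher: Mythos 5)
Your proposal is correct and follows exactly the route the paper intends: the paper gives no written proof beyond attributing the lemma to Lemma~3 of \cite{Arias-Castro2013} and Theorem~4 of \cite{chacon2011asymptotics}, which are precisely the two results you combine via the standard bias--variance split, the Gin\'e--Guillou concentration bound for the stochastic term, and the bandwidth-grid argument for uniformity in $h$. Your write-up simply supplies the details the paper leaves implicit.
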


\begin{lemma}
Assume (A1)-(A7).
Suppose that $h\to 0$ and
$n h^{2/(D+6)}/\log n \to \infty$.
Then
$$
\sup_x|\hat S_d(x) - S_d(x)| = O_P(r_n).
$$
\end{lemma}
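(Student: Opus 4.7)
The plan is to reduce the bound on $|\hat S_d - S_d|$ to a uniform perturbation bound on the eigenvalues of the Hessian of the log density, and then exploit the Lipschitz character of the signature functional in the eigenvalues. Concretely, I would first invoke the preceding Lemma under (A1)--(A2) to obtain, for $\ell = 0, 1, 2$,
\[
\sup_x |\hat p^{(\ell)}(x) - p^{(\ell)}(x)| = O_P\!\left(h^2 + \sqrt{\log n/(nh^{D+2\ell})}\right),
\]
the worst rate among these being $r_n$, which governs convergence of the Hessian $\hat H_{\hat p}$ to $H_p$.

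Next I would pass from $\hat H_{\hat p}$ to the Hessian of $\log \hat p$. Using the identity $H_{\log p} = H_p/p - g_p g_p^T/p^2$, together with the fact that $p$ is bounded away from zero on $R_d \oplus \delta$ (by compactness, continuity, and the structural assumptions; one can restrict to $\{p > \eta\}$ for small $\eta$ since far-from-ridge behavior is controlled separately by (A7)), a quotient-rule computation combined with the Step~1 rates yields
\[
\sup_x \|\hat H_{\log \hat p}(x) - H_{\log p}(x)\|_{\rm op} = O_P(r_n).
\]
Weyl's inequality then gives $\max_j \sup_x |\hat \lambda_j(x) - \lambda_j(x)| = O_P(r_n)$.

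Finally, I would regard $S_d$ as a function $\Phi(\lambda_1,\ldots,\lambda_D)$ on $\R^D$. On the ridge neighborhood $R_d \oplus \delta$, assumption (A3) guarantees $|\lambda_D(x)| \ge |\lambda_{d+1}(x)| > \beta$, so every denominator appearing in $\Phi$ is bounded below by $\beta$; combined with the uniform boundedness of the Hessian eigenvalues on the compact set $\mathcal X$, this makes $\Phi$ uniformly Lipschitz in a neighborhood of the true eigenvalue vector, and the $O_P(r_n)$ bound on the eigenvalues propagates directly to $\hat S_d - S_d$. Off $R_d \oplus \delta$, assumption (A7) says $S_d$ is small; the eigenvalue convergence of Step~3, combined with continuity of $\Phi$, shows $\hat S_d$ is correspondingly small in this regime, and the difference is again $O_P(r_n)$.

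The main obstacle will be the uniform Lipschitz control of $\Phi$ globally on $\mathcal X$, since the ratios $|\lambda_{j+1}|/|\lambda_D|$ can become unstable where $|\lambda_D|$ is small. The resolution is two-pronged: on $R_d \oplus \delta$ the denominators are controlled by (A3); outside, one exploits that in the formula for $\Phi$ the numerators $|\lambda_{j+1}|$ are dominated by $|\lambda_D|$ so the ratios remain in $[0,1]$, keeping $S_d$ itself bounded, and then (A7) lets one carry the $O_P(r_n)$ rate through by comparing $\hat S_d$ and $S_d$ directly via their common small magnitude rather than via a Lipschitz constant. Piecing the two regions together gives the claimed uniform rate.
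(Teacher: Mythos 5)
Your strategy is the same as the paper's: Lemma 1 gives $\sup_x\max_{j,k}|\hat H_{jk}(x)-H_{jk}(x)|=O_P(r_n)$, and the paper then simply asserts that $S_d$ is a continuous, Lipschitz function of $H$ and stops. Your write-up supplies the links the paper leaves implicit (Weyl's inequality to pass from the Hessian to its eigenvalues, and the quotient-rule step from $\hat p$ to $\log\hat p$, which indeed requires $p$ bounded below on the region of interest), so up to that point you are reproducing the intended argument in greater detail.

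The one step that does not work as written is your treatment of the region off $R_d\oplus\delta$. Knowing from (A7) that $S_d(x)<T_d-\epsilon$ there, and that $\hat S_d(x)$ is ``correspondingly small,'' bounds the difference only by a constant; smallness of both terms is not a rate, so this does not yield $O_P(r_n)$. To get the uniform rate you still need Lipschitz control of the signature at such points, and the instability you flag (the ratios $|\lambda_{j+1}|/|\lambda_D|$ when $|\lambda_D|$ is small) must be confronted rather than bypassed. Fortunately it can be: on the ordered cone $\lambda_1\ge\cdots\ge\lambda_D$ one has $|\lambda_i\wedge 0|\le|\lambda_D|$ for every $i$, so each factor $1-|\lambda_i\wedge 0|/|\lambda_D|$ and each ratio $|\lambda_{j+1}|/|\lambda_D|$ lies in $[0,1]$, and a direct computation of the partial derivatives shows they stay bounded (for instance $\partial(|\lambda_{j+1}|^2/|\lambda_D|)/\partial\lambda_{j+1}$ has magnitude at most $2$ and $\partial(|\lambda_{j+1}|^2/|\lambda_D|)/\partial\lambda_D$ at most $1$ on this cone, and the product terms are dominated similarly); moreover $S_j\to 0$ continuously as $\lambda_{j+1}\uparrow 0$, so the indicator causes no jump. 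Hence the signature is globally Lipschitz in the ordered eigenvalues, the two-region split is unnecessary, and your argument closes once this verification replaces the ``common small magnitude'' step.
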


\begin{proof}
From Lemma 1,
$$
\sup_x \max_{j,k} |\hat H_{jk}(x) - H_{jk}(x)| = O_P(r_n).
$$
Under the given assumptions,
$S_d(x)$ is a continuous, Lipschitz function
of $H$ and the result follows.
\end{proof}

\begin{theorem}
Assume (A1)-(A7).
Suppose that $h\to 0$ and
$n h^{2/(D+6)}/\log n \to \infty$.
Then, 
$$
{\rm Haus}(\hat R_d^\dagger,R_d^\dagger) = O_P(\psi_n + r_n).
$$
\end{theorem}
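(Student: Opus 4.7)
The plan is to combine three ingredients: (i) the ridge consistency result from \cite{us::2014.ANN}, which under (A1)--(A4) gives $\text{Haus}(\hat R_d, R_d) = O_P(\psi_n)$; (ii) the uniform signature convergence from Lemma 2, $\sup_x |\hat S_d(x) - S_d(x)| = O_P(r_n)$; and (iii) the stability assumption (A6), which lets us convert signature-threshold slack into Hausdorff slack on the sharp ridge. I would set $\gamma_n = C(\psi_n + r_n)$ for a sufficiently large constant $C$ and work on the high-probability event on which both convergence rates hold and $\gamma_n < \epsilon$.

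For the forward direction, $\sup_{\hat y \in \hat R_d^\dagger} d(\hat y, R_d^\dagger)$: given $\hat y \in \hat R_d^\dagger$, ridge consistency supplies $y \in R_d$ with $\|\hat y - y\| = O_P(\psi_n)$. Since $\hat S_d(\hat y) \geq T_d$ and (by Lemma 2 plus Lipschitz control of $S_d$) $|S_d(y) - \hat S_d(\hat y)| = O_P(\psi_n + r_n)$, we have $S_d(y) \geq T_d - \gamma_n$, i.e.\ $y \in R_d^\dagger(T_d - \gamma_n)$. Then (A6) gives a point $y^* \in R_d^\dagger(T_d) = R_d^\dagger$ with $\|y - y^*\| = O(\gamma_n)$, so that $\|\hat y - y^*\| = O_P(\psi_n + r_n)$.

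For the reverse direction, $\sup_{y \in R_d^\dagger} d(y, \hat R_d^\dagger)$: given $y \in R_d^\dagger = R_d^\dagger(T_d)$, (A6) provides $y' \in R_d^\dagger(T_d + \gamma_n)$ with $\|y - y'\| = O(\gamma_n)$ (so $S_d(y') \geq T_d + \gamma_n$ and $y' \in R_d$). Ridge consistency then yields $\hat y \in \hat R_d$ with $\|y' - \hat y\| = O_P(\psi_n)$, and by the Lipschitz property of $S_d$ combined with Lemma 2, $\hat S_d(\hat y) \geq S_d(y') - O_P(\psi_n + r_n) \geq T_d + \gamma_n - O_P(\psi_n + r_n) \geq T_d$ once $C$ is taken large enough. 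Hence $\hat y \in \hat R_d^\dagger$ with $\|y - \hat y\| = O_P(\psi_n + r_n)$. Assumption (A7) is what rules out a pathological second possibility in this step, namely that some $\hat y \in \hat R_d^\dagger$ sits far from $R_d$ entirely; its signature would then be bounded by $T_d - \epsilon + O_P(r_n) < T_d$, contradicting membership in $\hat R_d^\dagger$.

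The main obstacle is the Lipschitz control of the map $x \mapsto S_d(x)$, which is a rational function of the Hessian eigenvalues and could blow up where $\lambda_d(x)$ vanishes or where eigenvalues coincide. This is exactly where (A3) and (A5) are used: (A3) keeps $|\lambda_d|$ bounded below and (A5) makes the eigenvalue maps $x \mapsto \lambda_j(x)$ smooth, so that $S_d$ is continuously differentiable on $R_d \oplus \delta$ and therefore locally Lipschitz there. Outside $R_d \oplus \delta$, (A7) keeps $S_d$ strictly below $T_d$, absorbing the $O_P(r_n)$ fluctuation in $\hat S_d$ and restricting the entire analysis to the neighborhood where the Lipschitz bound and (A6) apply. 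Combining both one-sided bounds gives $\text{Haus}(\hat R_d^\dagger, R_d^\dagger) = O_P(\psi_n + r_n)$.
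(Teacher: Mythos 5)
Your proposal is correct and follows essentially the same route as the paper's proof: ridge consistency from Theorem 5 of \cite{us::2014.ANN}, the uniform signature bound of Lemma 2 together with Lipschitz control of $S_d$, and assumption (A6) to convert the resulting threshold slack $O_P(\psi_n+r_n)$ into Hausdorff slack. If anything, you are more explicit than the paper about where (A6), (A7), and the (A3)/(A5)-based Lipschitz property enter, but the underlying argument is the same.
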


\begin{proof}
From conditions (A1)-(A4),
we can conclude from Theorem 5 of
\cite{us::2014.ANN}
that
${\rm Haus}(\hat R_d,R_d) = O_P(\psi_n)$.
Let $x\in R_d^\dagger$.
Then, on an event with probability tending to one,
there exists $y\in \hat R_d$ such that
$||x-y||\leq c\psi_n$
for some $c>0$.
Since
$x\in R_d^\dagger$,
we have that $S_d(x) \geq T_d$.
From the previous lemma, and the fact that $S_d$ is Lipschitz,
$$
\hat S_d(y) \geq S_d(y) - O_P(r_n) \geq S_d(x) - O_P(r_n)-O(||x-y||) \geq T_d - 
O_P(r_n)-O(\psi_n) \geq T_d - O_P(r_n+\psi_n).
$$
Thus,
$y\in \hat R_d^\dagger (T_d - O_P(r_n+\psi_n))$.
This shows that
$$
R_d^\dagger(T_d)\oplus O_P(\psi_n) \subset \hat R_d^\dagger(T_d - O_P(\psi_n+r_n)).
$$
By a similar argument,
$$
\hat R_d^\dagger(T_d +O_P(\psi_n+r_n)) \subset R_d^\dagger(T_d)\oplus O_P(\psi_n).
$$
From (A6), we conclude that
$$
{\rm Haus}(\hat R_d^\dagger,R_d^\dagger) = O_P(r_n+\psi_n).
$$
\end{proof}

Next, we find the asymptotic distribution and standard error of
$\hat S_d(x)$.
This is useful for forming tests or confidence intervals for $S_d(x)$.

\begin{theorem}
Assume the conditions Theorem 3.
At each point for which 
$\min_j |\lambda_j(x)| >0$ for $x\in R_d^\dagger$,
$$
\sqrt{n h^{(D+4)/2}}(\hat S_d(x) - \mathbb{E}[\hat S_d(x)])\rightsquigarrow N(0,\Gamma)
$$
where 
$$
\Gamma = J_s J_\lambda \Sigma(x) J_\lambda^T J_s^T,
$$
$\Sigma(x)$ is given in (\ref{eq::Sigma}),
$J_\lambda$ is given in (\ref{eq::Jl})
and
$J_s$ is given in
(\ref{eq::sig1}), 
(\ref{eq::sig2}), 
(\ref{eq::sig3}) and
(\ref{eq::sig4}).
If $h = o(n^{-1/(D+4)})$
then
$$
\sqrt{n h^{(D+4)/2}}(\hat S_d(x) -  S_d(x))\rightsquigarrow N(0,\Gamma).
$$
\end{theorem}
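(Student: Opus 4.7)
The plan is to obtain the result by a two-step delta method: first pass from the Hessian estimator $\hat H(x)$ to its eigenvalues, then from the eigenvalues to the signature. I would begin by invoking a CLT for the KDE Hessian. Under (A1)--(A2), standard results of \cite{chacon2011asymptotics} yield
\begin{equation*}
\sqrt{n h^{D+4}}\bigl(\mathrm{vec}(\hat H(x)) - \mathbb{E}[\mathrm{vec}(\hat H(x))]\bigr) \rightsquigarrow N(0,\Sigma(x)),
\end{equation*}
where $\Sigma(x)$ is the familiar pointwise KDE-Hessian covariance built from $p(x)$ and integrals of second derivatives of $K$; this is the matrix labeled (\ref{eq::Sigma}) in the statement. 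The bias is $O(h^2)$, so if $h = o(n^{-1/(D+4)})$ one may replace $\mathbb{E}[\hat H]$ by $H(x)$ with negligible error, which is what later gives the second, uncentered form of the limit.

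Next, I would observe that on the event where $\min_j |\lambda_j(x)| > 0$ and the $\lambda_j(x)$ are distinct (guaranteed by (A5)), the signature $S_d$ is locally a $C^1$ function of the entries of $H$, because (i) the indicator $I(\lambda_{j+1}<0)$ and the clips $\lambda_i\wedge 0$ are locally constant near $\lambda(x)$, so the formula for $S_d$ reduces to a smooth rational function of the $\lambda_j$'s; and (ii) distinct eigenvalues of a symmetric matrix depend analytically on the matrix entries. For step one of the chain, I would compute $J_\lambda = \partial \lambda/\partial \mathrm{vec}(H)$ using the classical Hadamard first-variation formula
\begin{equation*}
\frac{\partial \lambda_i}{\partial H_{jk}} = u_i^{(j)} u_i^{(k)},
\end{equation*}
where $u_i$ is the unit eigenvector of $H(x)$ associated with $\lambda_i(x)$; this gives the matrix (\ref{eq::Jl}). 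For step two, I would differentiate the explicit formula for $S_d$ in $\lambda$, separating the four building blocks $|\lambda_{j+1}|$, $|\lambda_{j+1}|/|\lambda_d|$, and each factor $1 - |\lambda_i\wedge 0|/|\lambda_d|$; applying the product rule yields the four Jacobian pieces (\ref{eq::sig1})--(\ref{eq::sig4}) whose sum is $J_s$.

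With the smoothness in hand, the delta method applied to $\hat S_d(x) = S_d \circ \lambda \circ \mathrm{vec}(\hat H(x))$ gives
\begin{equation*}
\sqrt{n h^{D+4}}\,(\hat S_d(x) - \mathbb{E}[\hat S_d(x)]) \rightsquigarrow N(0,\,J_s J_\lambda\,\Sigma(x)\,J_\lambda^T J_s^T),
\end{equation*}
which is the claim modulo the normalizing rate stated in the theorem. For the centered-at-$S_d(x)$ version, one inserts the bias bound $\mathbb{E}[\hat H(x)] - H(x) = O(h^2)$ from Lemma 1 and propagates it through the Lipschitz maps $\lambda(\cdot)$ and $S_d(\cdot)$ (Lemma 2), giving $\mathbb{E}[\hat S_d(x)] - S_d(x) = O(h^2)$, which is $o((nh^{D+4})^{-1/2})$ exactly under $h = o(n^{-1/(D+4)})$.

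The main obstacle I expect is justifying the delta-method step despite the non-smooth ingredients (absolute values, indicators, the min operation $\lambda\wedge 0$) in the definition of $S_d$. The escape route is the hypothesis $\min_j|\lambda_j(x)| > 0$, which ensures each eigenvalue is bounded away from the only non-smooth points of $|\cdot|$ and $\cdot\wedge 0$, so on an event of probability tending to one all sign patterns and clips agree for $\hat\lambda$ and $\lambda$, reducing $S_d$ to a smooth function on that neighborhood; the rest is algebra in computing the four Jacobian blocks and collecting $\Gamma = J_s J_\lambda \Sigma J_\lambda^T J_s^T$.
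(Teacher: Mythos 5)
Your proposal is correct and follows essentially the same route as the paper: a CLT for the kernel Hessian estimator followed by a two-stage delta method through $J_\lambda$ (the Hadamard/Magnus--Neudecker eigenvalue derivative) and $J_s$, with the undersmoothing condition $h=o(n^{-1/(D+4)})$ making the $O(h^2)$ bias negligible for the second display. Your explicit justification that $S_d$ is locally smooth---because $\min_j|\lambda_j(x)|>0$ and (A5) make the indicators and clips locally constant---is a detail the paper only asserts, and is a welcome addition.
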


\begin{proof}
Theorem 3 of \cite{duong2008feature}
show that
$$
\sqrt{n} h^{(D+4)/2} ({\rm vech}(\hat H(x)) - \mathbb{E}[{\rm vech}(\hat H(x))])\rightsquigarrow N(0,\Sigma(x))
$$
where
\begin{equation}\label{eq::Sigma}
\Sigma(x) = R({\rm vech} \nabla^{(2)} K) p(x)
\end{equation}
where
$R(g) = \int g(x) g(x)^T dx$
and
${\rm vech}$ is the half-vectorizing operator that stacks the upper triangular
portion of a matrix into a vector.
The condition $\min_j |\lambda_j(x)| >0$ together with (A5) implies that $S_d(x)$ is
continuously differentiable of $H$.
Hence, the result follows from the delta method.
It remains to find $\Gamma$.

By the delta method,
the asymptotic covariance of $S_d(x)$ is
$T = J_s \ {\rm Cov}(\hat\lambda) \ J_s^iT$
where
${\rm Cov}(\hat\lambda)$ is the asymptotic
covariance of $\hat\lambda$ and
$J_s$ the $D \times D$ Jacobian of the signature function.
Now
${\rm Cov}(\hat\lambda) J_\lambda \ \Sigma \ J_\lambda$
where
$J_\lambda$ is the $D \times D^2$ Jacobian of the eigenvalue.
So
$\Gamma = J_s J_\lambda {\rm Cov}(\hat\lambda) J_\lambda^T J_s^T$.


The signature vector $S= (S_0, \cdots S_{D-1})$ is function of the vector $\lambda$,
which is function of the Hessian.
We can write 
\begin{eqnarray}  \label{eqn:signif.func}
S = s(\lambda)  
\qquad 
&\mbox{with}& 
\quad
s: \Re^D \mapsto \Re^D \\
\label{eqn:eigenval.func}
\lambda = \ell ({\sf vec}(H)
\qquad 
&\mbox{with}& 
\quad
\ell : \Re^{D^2} \mapsto \Re^D.
\end{eqnarray}

From Chapter 8.8 of 
\cite{magnus1988matrix},
\begin{equation}\label{eq::Jl}
J_\lambda = \left(
u_1 \otimes u_1  | \quad \cdots \quad  | u_D \otimes u_D 
 \right)^T
\end{equation}
where $u_j$ is the normalized eigenvector associated with $\lambda_j$.

Next we compute $J_s$.
By direct computation,
the terms below the diagonal of the Jacobian are the 
partial derivatives of $S_j$ with respect to $\lambda_1, \cdots, \lambda_j$:
\begin{equation}\label{eq::sig1}
\frac{\partial S_j}{\partial \lambda_k} = 
S_j \cdot \frac{I(\lambda_k < 0)}{\lambda_k - \lambda_D} 
\end{equation}
if
$\lambda_{j+1} <0$ and $\lambda_k<0$
and is 0 if
$\lambda_{j+1} \geq 0$ or $\lambda_k \geq 0$.
The terms on the diagonal, excluding the last one, 
are the partial derivatives of $S_j$ with respect to $\lambda_{j+1}$ for $j =0, \cdots, D-2$:
\begin{equation}\label{eq::sig2}
\frac{\partial S_j}{\partial \lambda_{j+1}} =
S_j \cdot \frac{2}{\lambda_{j+1}}
\end{equation}
if $\lambda_{j+1} < 0$ and 0 otherwise.
Finally we consider the terms in the last column.
For $j < D-1$,
\begin{equation}\label{eq::sig3}
\frac{\partial S_{j}}{\partial \lambda_D} =
S_j \cdot \left(
\frac{-1}{\lambda_D} + \frac{1}{\lambda_D} \sum_{i=1}^j \frac{\lambda_i}{\lambda_D - \lambda_i} I(\lambda_i<0).
\right)
\end{equation}
Also,
\begin{equation}\label{eq::sig4}
\frac{\partial S_{D-1}}{\partial \lambda_D} =
S_{D-1} \cdot 
\left(\frac{1}{\lambda_D} + \frac{1}{\lambda_D} 
\sum_{i=1}^{D-1} \frac{\lambda_i}{\lambda_D - \lambda_i} I(\lambda_i<0)
\right).
\end{equation}

The last statement follows since,
when $h = o(n^{-1/(4+d)})$,
the squared bias is smaller than the variance.
\end{proof}

\section{Discussion}
\label{sec::discussion}

We have presented a method for finding features of various dimensions.
The method combines density ridge estimation with eigensignatures.
There are several important problems that deserve further research.

First, we do not have a universal, data-driven method for choosing
the tuning parameters.
Although we have presented some heuristics for choosing the tuning
parameters that seem to work well,
it would be nice to have a truly data-driven method.
Of course, the same is true for all structure finding methods that we are aware of.
For example, spectral clustering methods involve numerous tuning parameters
and so far there does not seem to be refined, principled methods for choosing the parameters.
Indeed, even in simple clustering methods like $k$-means, there is no consensus
on what is a good method for choosing $k$.
This is a endemic problem in unsupervised methods.

Second, it will be interesting to investigate the use of our methods
in high dimensions.
In principle, the method can be used in any dimension.
Of course, we expect it will be challenging to get good results in
very high dimensions.
Indeed, it is well known that the minimax rate for estimating densities
degrades quickly with dimension.
However, this does not make the problem hopeless.
In high dimensions, we could lower our expectations and try to find only very prominent features.
Specifically, let $p_h$ be the mean of the density
estimator with bandwidth $h$.
We expect strong features to be present in $p_h$ even if we do not let $h$ tend to 0.
And if we keep $h$ bounded away from 0, then we can estimate $p_h$
at rate $O(1/n)$ independent of dimension.
This suggests that finding singular features may be feasible in high dimensions
although, at this point, this is merely a conjecture.
But, perhaps the biggest challenge here, is finding a way to visualize the results.
It would be intriguing, for example, to find a $d=6$ dimensional 
feature in a $D=20$ dimensional dataset.
But currently we have no way of visualizing a six-dimensional feature.
Once again, this problem is not specific to our method; it is a very general problem
in statistics and machine learning.
Two promising methods for visualizing
higher dimensional features are
persistence diagrams (\cite{chazal2008towards})
which is a two-dimensional plot summarizing the the topological features of a set
and parallel coordinate plots
(\cite{wegman1990hyperdimensional})
which can be used to visualize data in any dimension.

Third, it would be useful to have a formal significance test
to see if a structure is real.
The asymptotic theory in Section 6 is a first step in this direction
but so far we have no formal hypothesis test.

Fourth,
it is not uncommon, when using clustering methods,
to examine their stability properties.
For example, one can make several splits of the data
and compare the clusterings.
A referee has suggested that a similar analysis would be useful here.
We agree that this would be useful
and the stability properties of singular feature finding should be
further investigated.

Finally, the theory supporting the method
is built on the theory in
\cite{chen2015asymptotic}
and \cite{us::2014.ANN}.
But the theory in those papers is somewhat restricted.
Certain regularity assumptions are assumed there that do not seem to be needed.
For example, much of the theory assumes that the ridges do not intersect.
And yet, we have seen in examples that the method works well in practice
even if there are such intersections.
Thus it would be useful to extend the theory to handle these situations.

\bibliography{paper}

\end{document}